\numberwithin{equation}{section}
\tikzset{
	cross/.style={cross out, draw=black, minimum size=2*(#1-\pgflinewidth), inner sep=0pt, outer sep=0pt},
	cross/.default={3pt},
	vertex/.style={circle, draw, fill=black, inner sep=0pt, minimum width=4pt},
}
\newcommand{\drawpred}[2]{
	\draw (#1, #2) node[cross,red] {};
}
\newcommand{\drawreal}[2]{
	\draw[black!40!green] (#1, #2) circle (3pt);
}
\newcommand{\interval}[4]{
	\draw (#2, #4) node[anchor=east]{#1} -- (#3, #4);
	\draw (#2, #4-0.1) -- (#2, #4+0.1);
	\draw (#3, #4-0.1) -- (#3, #4+0.1);
}
\newcommand{\intervalp}[5]{
	\interval{#1}{#2}{#3}{#4}
	\drawpred{#5}{#4}
}
\newcommand{\intervalr}[5]{
	\interval{#1}{#2}{#3}{#4}
	\drawreal{#5}{#4}
}
\newcommand{\intervalpr}[6]{
	\interval{#1}{#2}{#3}{#4}
	\drawpred{#5}{#4}
	\drawreal{#6}{#4}
}
\newtheorem{theorem}{Theorem}[section]
\newtheorem{lemma}[theorem]{Lemma}
\newtheorem{coro}[theorem]{Corollary}
\newtheorem{claim}[theorem]{Claim}
\newtheorem{definition}[theorem]{Definition}
\newcommand{\lw}{0.5mm}
\author{
	Thomas Erlebach\thanks{Department of Computer Science, Durham University, United Kingdom, \texttt{thomas.erlebach@durham.ac.uk}. Supported by EPSRC grants EP/S033483/2 and EP/T01461X/1} \and
	Murilo Santos de Lima\thanks{Munich, Germany, \texttt{mslima@ic.unicamp.br}. Funded by EPSRC grant EP/S033483/1.}
	\and
	Nicole Megow\thanks{Faculty of Mathematics and Computer Science, University of Bremen, Germany, \texttt{\{nmegow,jschloet\}@uni-bremen.de}. Supported by the German Science Foundation (DFG) under contract ME~3825/1.} \and Jens Schl{\"o}ter\footnotemark[3] 
}
\newcommand{\jo}{\ensuremath{k^+}\xspace} 
\newcommand{\oj}{\ensuremath{k^-}\xspace} 
\newcommand{\ALG}{\mathrm{ALG}}
\newcommand{\OPT}{\mathrm{OPT}}
\newcommand{\opt}{|\OPT|}
\newcommand{\w}{\ensuremath{\overline{w}}} 
\newcommand{\pred}[1]{\overline{#1}} 
\newcommand{\ZZ}{\mathbb{Z}}
\newcommand{\RR}{\mathbb{R}}
\newcommand{\eps}{\ensuremath{\varepsilon}\xspace} 
\newcommand{\sym}{\ensuremath{\Delta}\xspace} 
\DeclareMathOperator{\EX}{\mathbb{E}}
\newcommand{\ud}{\ensuremath{D}}
\newcommand{\hs}{\ensuremath{\mathcal{H}}}
\newcommand{\nnew}[1]{#1}
\newcommand{\jnew}[1]{#1}
\title{Sorting and Hypergraph Orientation under Uncertainty with  Predictions}
\begin{document}

\maketitle

\begin{abstract}
	Learning-augmented algorithms have been attracting increasing interest,	but have only recently been considered in the setting of explorable uncertainty where precise values of uncertain input elements can be obtained by a query and the goal is to minimize the number of queries needed to solve a problem. We study learning-augmented algorithms for sorting and hypergraph orientation under uncertainty, assuming access to untrusted predictions for the uncertain values. Our algorithms provide improved performance guarantees for accurate predictions while maintaining worst-case guarantees that are best possible without predictions. For hypergraph orientation, for any $\gamma \geq 2$, we give an algorithm that achieves a competitive ratio of $1+1/\gamma$ for correct predictions and $\gamma$ for arbitrarily wrong predictions. For sorting, we achieve an optimal solution for accurate predictions while still being $2$-competitive for arbitrarily wrong predictions. These tradeoffs are the best possible. We also consider different error metrics and show that the performance of our algorithms degrades smoothly with the prediction error in all the cases
	where this is possible.
\end{abstract} 

\section{Introduction}

The emerging research area of learning-augmented algorithm design has been attracting increasing
attention in recent years. For {\em online} algorithms, it was initiated in~\cite{lykouris2018competitive} for caching and has fostered an overwhelming number of results for numerous problems, including 
online graph problems~\cite{KumarPSSV19,lindermayrMS22,EberleLMNS22,AzarPT22}
and scheduling problems~\cite{purohit2018improving,angelopoulos2019online,Mitzenmacher20,LattanziLMV20,AzarLT21,AzarLT22,LindermayrM22,BampisDKLP22,LiX21}. In the research area of
\emph{explorable uncertainty}
\cite{kahan91queries,
erlebach15querysurvey
}, however,
learning-augmented algorithms have only 
recently been studied for the first time,
namely for the minimum spanning tree problem (MST) \cite{EdLMS22}. 
This area considers problems with uncertainty in the input data assuming that a \emph{query} to an input element can be used to obtain the exact value of that element. Queries are costly, and hence the goal is to make as few queries as possible
until sufficient information has been obtained to solve the given
problem.
Learning-augmented algorithms in this setting are given (untrusted) predictions
of the uncertain input values. In this paper, we present and analyze learning-augmented
algorithms for further fundamental problems 
with explorable uncertainty,
namely sorting and hypergraph orientation. 


In the \emph{hypergraph orientation problem under uncertainty}~\cite{BampisDEdLMS21}, we are given a hypergraph $H=(V,E)$. 
Each vertex $v \in V$ is associated with an \emph{uncertainty interval} $I_v=(L_v,U_v)$ and an, initially unknown, \emph{precise weight} $w_v \in I_v$.
We call $L_v$ and $U_v$ the \emph{lower and upper limit} of~$v$.
A query of $v$ reveals its precise
{weight} $w_v$ and reduces its uncertainty interval to $I_v=[w_v]$.
Our task is to orient each hyperedge $S \in E$ towards the vertex of minimum precise weight in $S$.
An adaptive algorithm
can sequentially make queries to
vertices
to learn their weights
until it has enough information to identify the minimum-weight vertex of each
hyperedge.
A set $Q \subseteq V$ is called \emph{feasible} if querying $Q$ reveals sufficient information to find the orientation.
As queries are costly, the goal is to (adaptively) find a feasible query set of minimum size.
An important special case is  {when the input graph is a simple graph that 
is exactly the interval graph induced by the uncertainty intervals~$\mathcal{I}$}. 
This special case corresponds to the problem of sorting  {a set of unknown values represented by}
uncertainty intervals and, therefore,  
we refer to it as \emph{sorting 
under uncertainty}.

Since there exist input instances that are impossible to solve without querying all vertices, we evaluate our algorithms in an {\em instance-dependent} manner: For each input, we compare the number of queries made by an algorithm with the 
  {minimum} possible number of queries \emph{for that input}, using \emph{competitive analysis}. 
For a given problem instance, let $\OPT$ denote an optimal query set. An algorithm is $\rho$-{\em competitive} if it executes, for any problem instance, at most $\rho \cdot {\opt}$ queries.
As the query results are uncertain and, to a large extent, are the deciding factor whether querying certain vertices is a good strategy or not, the problem has a clear online flavor. 
In particular, the uncertainty prevents $1$-competitive algorithms, even without any running time  {restrictions}.

Variants of hypergraph orientation 
have been  {widely} studied since the model of explorable uncertainty has been proposed~\cite{kahan91queries}. 
Sorting and hypergraph orientation
are well known to admit efficient polynomial-time
algorithms if precise input data is given, and they
are well understood in the setting of explorable uncertainty: The best
known deterministic algorithms are $2$-competitive, and no deterministic algorithm
can be better \cite{kahan91queries,halldorsson19sortingqueries,BampisDEdLMS21}.
For sorting, the competitive ratio can be improved to~$1.5$ using randomization~\cite{halldorsson19sortingqueries}.
In the stochastic setting, where the precise weights of the vertices are drawn according to known distributions over the intervals, there exists a $1.618$-competitive algorithm for hypergraph orientation and a $4/3$-competitive algorithm for special cases~\cite{BampisDEdLMS21}.
For the stochastic sorting problem, the algorithm with the optimal expected query cost is known, but its competitive ratio remains unknown~\cite{ChaplickHLT21}.
Besides hypergraph orientation and other selection-type problems, there has been work on
combinatorial optimization problems with explorable uncertainty, such as shortest path~\cite{feder07pathsqueires},  knapsack~\cite{goerigk15knapsackqueries}, scheduling problems~\cite{DurrEMM20,arantes18schedulingqueries,albersE2020,AlbersE21,GongGLM22}, the  {minimum spanning tree} (MST) problem and matroids~\cite{erlebach08steiner_uncertainty,erlebach14mstverification,megow17mst,FockeMM20,MerinoS19}. 

In this paper, we consider a third model (to the adversarial and stochastic setting) and assume that the algorithm has, for each vertex $v$, access to a prediction $\pred{w}_v \in I_v$ for the unknown weight $w_v$. 
These predicted weights could for example be computed using machine learning~(ML) methods or other statistical tools on past data.
Given the emerging success of ML methods, it seems reasonable to expect predictions of high accuracy. 
However, there are no theoretical guarantees and the predictions might be arbitrarily wrong, which raises the question whether an ML algorithm performs sufficiently well in all circumstances.
In the context of hypergraph orientation and explorable uncertainty, we answer this question affirmatively by designing learning-augmented algorithms that perform very well if the predictions are accurate and still match the 
  {adversarial} lower bounds even if the predictions are arbitrarily wrong.
To formalize these properties, we use the notions of $\alpha$-consistency
and $\beta$-robustness~\cite{lykouris2018competitive,purohit2018improving}: 
An algorithm is $\alpha$-consistent if it
is $\alpha$-competitive when the predictions are correct, and it is
$\beta$-robust if it is $\beta$-competitive no matter how wrong the
predictions are. 
Additionally, we aim at guaranteeing a smooth transition between consistency and robustness by giving performance guarantees that gracefully degrade with the \emph{amount} of prediction error.
This raises interesting questions regarding appropriate ways of
measuring prediction errors, and we explore several such measures.
Analyzing algorithms in terms of error-dependent consistency and robustness allows us to still give worst-case guarantees (in contrast to the stochastic setting) that are more fine-grained than guarantees in the pure adversarial setting. 

  {Research on}
explorable uncertainty  {with} untrusted predictions is in its infancy.
In the only previous work that we are aware of, Erlebach et al.~\cite{EdLMS22} studied the MST  {problem} with uncertainty and untrusted predictions. They  {propose an error measure $k_h$ called {\em hop-distance} (definition follows later)} and obtain  {a}
$1.5$-consistent and $2$-robust algorithm as well as a parameterized consistency-robustness tradeoff
with smooth degradation, 
$\min\{1+ \frac{1}{\gamma} + \frac{5 \cdot k_h}{\opt}, \gamma+1 \}$, for any integral $\gamma \geq 2$ and error $k_h$.  
  {It remained open whether or how other problems with explorable uncertainty can leverage untrusted predictions, and what other prediction models or error metrics might be useful in explorable uncertainty settings.} 
%
%

\paragraph*{Main results} 


We show how to  
utilize  {possibly erroneous} predictions for hypergraph orientation and sorting with explorable uncertainty.
For sorting, we present an algorithm
that is $1$-consistent and $2$-robust, which is  {in a remarkable way} best possible:
the algorithm identifies an optimal query set if the predictions are accurate,
while maintaining the best possible worst-case ratio of~$2$  {for arbitrary predictions}.
For hypergraph orientation, we give a $1.5$-consistent and $2$-robust algorithm and show that this consistency
is best possible when aiming for optimal robustness.

Our major focus lies on a more fine-grained performance analysis with guarantees that improve with the prediction accuracy. 
{A key ingredient in this line of research is the choice of error measure quantifying this (in)accuracy.} 
We 
  {propose and discuss} three different  {error} measures $k_{\#}, k_h$ and $k_M$: 
The number of inaccurate predictions $k_{\#}$ is natural and allows  {a} smooth degradation result for
sorting, but in general  {it} is too crude to allow  
improvements upon lower bounds of~$2$ for the setting without predictions.
We therefore also consider the \emph{hop distance}~$k_h$, as proposed in~\cite{EdLMS22}.  {Further, we introduce} a new error measure $k_M$ called \emph{mandatory query distance} which is tailored to problems with explorable uncertainty. 
It is defined in a more problem-specific way, and 
we show it to be more restrictive {in the sense that}~$k_M\leq k_h$.

%


%
For the sorting problem, we obtain an algorithm with competitive ratio $\min\{1+k/\opt, 2\}$, where $k$ can be any of the three error measures considered, which is best possible.
For the hypergraph orientation problem, we provide an algorithm with competitive ratio $\min\{(1+\frac{1}{\gamma-1})(1+ \frac{k_M}{\opt}), \gamma \}$, for any integral $\gamma\geq 2$. This is best possible for $k_M=0$ and large $k_M$. With respect to the hop distance, we achieve the stronger bound $\min\{(1+\frac{1}{\gamma})(1+ \frac{k_h}{\opt}), \gamma\}$, for any integral $\gamma \geq 2$, which is also best possible for $k_h = 0$ and large $k_h$.
While the  {consistency and robustness trade-off} of the $k_M$-dependent algorithm is weaker, we remark that the corresponding algorithm requires less predicted information than the $k_h$-dependent algorithm  {and that the error dependency can be stronger as $k_M$ can be significantly smaller than $k_h$.}
  {Further, we note that the parameter $\gamma$ is in both results restricted to integral values since it determines sizes of query sets, \nnew{but} 
  a generalization to reals $\gamma\geq 2$ is possible at a small loss in the guarantee. For $k_h$ this follows from arguments given in~\cite{EdLMS22}; for $k_M$ this can be shown similarly.}

  {While} our algorithm for sorting has polynomial running time, the algorithms for the hypergraph orientation problem may involve solving an NP-hard vertex cover problem. We justify this  {increased} complexity by showing that even the offline version of the problem (determining the optimal query set if the precise weights 
  are known) is NP-hard.
  
  Finally, we discuss the learnability of the predictions and prove PAC-learnability.
\section{Preliminaries and error measures} 
\label{sec:overview}

An algorithm  {for the hypergraph orientation problem with uncertainty} that 
  {assumes} that the predicted weights are accurate 
can exploit the characterization of optimal solutions given in~\cite{BampisDEdLMS21} to compute a query set that is optimal under this assumption.
Querying this set leads to $1$-consistency but 
  {may perform arbitrarily bad} in case of incorrect predictions (as shown down below).
On the other hand, known $2$-competitive algorithms for the adversarial problems without predictions~\cite{kahan91queries,halldorsson19sortingqueries} are not better than $2$-consistent, and the algorithms for the stochastic setting~\cite{BampisDEdLMS21} do not guarantee any robustness at all.
The known lower bounds of $2$ rule out any robustness factor less than $2$ for our model. They build on the simple example with a single edge $\{u,v\}$ and intersecting intervals $I_v,I_u$. No matter which vertex a deterministic algorithm queries first, say $v$, the realized weight could be $w_v\in I_v\cap I_u$, which requires a second query. If the adversary chooses $w_u\notin I_v\cap I_u$, querying just $u$ would have been sufficient to identify the vertex of minimum weight. 

The following bound on the best~achievable~tradeoff between consistency and robustness translates from the lower bound in~\cite{EdLMS22} for MST under uncertainty with predictions. 
  {Later in this paper}, we provide algorithms with matching performance guarantees.
Note that this lower bound does not hold for the sorting problem.

\begin{restatable}{theorem}{ThmLBTradeoffWithoutError}
\label{theo_minimum_combined_lb}
  Let 
  $\beta \geq 2$ be a fixed integer.
  For 
  hypergraph orientation 
  under uncertainty, there is no deterministic $\beta$-robust algorithm that is $\alpha$-consistent for $\alpha < 1 + \frac{1}{\beta}$. And vice versa, no deterministic $\alpha$-consistent algorithm, with $\alpha>1$, is $\beta$-robust for $\beta < \max\{\frac{1}{\alpha-1},2\}$.
  The result holds even for orienting a single hyperedge or a simple (non-hyper) graph.
\end{restatable}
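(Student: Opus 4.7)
My plan is to prove the first statement by constructing, for each integer $\beta \geq 2$, a single-hyperedge instance that forces any $\alpha$-consistent algorithm with $\alpha < 1 + 1/\beta$ to pay a ratio of $\beta + 1$ on a carefully chosen adversarial realisation, and then to derive the second statement from the first together with the single-edge lower bound of $2$ sketched in the paragraph preceding the theorem. The instance I would use has a single hyperedge $S = \{v_0, v_1, \dots, v_\beta\}$ with intervals $I_{v_0} = (0, 2)$ and $I_{v_i} = (1, 3)$ for every $i \geq 1$, and predictions $\pred{w}_{v_0} = 1.5$, $\pred{w}_{v_i} = 2.5$ for $i \geq 1$. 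On the realisation matching the predictions, $v_0$ is the minimum, and a short case analysis shows that the only feasible query set of size at most $\beta$ is $\{v_1, \dots, v_\beta\}$: certifying $v_0 < v_i$ without querying $v_i$ requires $w_{v_0} < L_{v_i} = 1$, which fails here, so a feasible set either omits $v_0$ and includes every $v_i$, or must still contain every $v_i$ in addition to $v_0$; hence $\opt = \beta$.

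Let $\ALG$ be any deterministic $\alpha$-consistent algorithm with $\alpha < 1 + 1/\beta$. Consistency yields $|\ALG| \leq \alpha\beta < \beta + 1$ on the correct realisation, forcing $|\ALG| \leq \beta$, so $\ALG$ must adaptively issue exactly the queries $\{v_1, \dots, v_\beta\}$ in some order $v_{\sigma(1)}, \dots, v_{\sigma(\beta)}$. I would then define the adversarial realisation by $w_{v_0} = 0.5$, $w_{v_{\sigma(\beta)}} = 1.5$, and $w_{v_i} = 2.5$ for every other $i \geq 1$. The responses to the first $\beta - 1$ queries of $\ALG$ coincide with the predictions, so its deterministic trajectory matches the correct-case trajectory up to step $\beta$, where it queries $v_{\sigma(\beta)}$ and observes $1.5$. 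Since $1.5 \in I_{v_0}$ and $L_{v_i} = 1 < 1.5$, $\ALG$ cannot certify any vertex as minimum yet and must query $v_0$, giving $|\ALG| = \beta + 1$. On this realisation, however, querying $v_0$ alone returns $0.5 < L_{v_i}$, certifying $v_0$ as the minimum, so $\opt = 1$ and the ratio $\beta + 1 > \beta$ contradicts $\beta$-robustness, proving the first statement.

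For the second statement, if $\beta < 2$ the single-edge argument preceding the theorem already rules out $\beta$-robustness. If $\beta \geq 2$ and $\alpha > 1$ satisfies $\alpha < 1 + 1/\beta$, then $\beta' := \lfloor \beta \rfloor$ is an integer with $\beta' \geq 2$ and $\alpha < 1 + 1/\beta \leq 1 + 1/\beta'$, so the first statement applied with $\beta'$ produces an instance on which $|\ALG|/\opt = \beta' + 1 > \beta$, again contradicting $\beta$-robustness. I expect the main obstacle to be the case analysis that pins $\ALG$'s correct-case query set to exactly $\{v_1, \dots, v_\beta\}$; specifically, one must rule out query sets of size $\beta$ that include $v_0$, which requires observing that once $w_{v_0} = 1.5$ is known, each unqueried $v_i$ still has a feasible value below $1.5$, so at least one further query is needed, exceeding the consistency budget. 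Once this pinning is in place, the adversarial extension and ratio computation are routine.
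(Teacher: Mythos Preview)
Your proposal is correct and follows essentially the same construction as the paper: the same $(\beta+1)$-vertex single-hyperedge instance with $I_{v_0}=(0,2)$, $I_{v_i}=(1,3)$, and the same predictions, yielding $\opt=\beta$ on correct predictions and $\opt=1$ on the adversarial realisation. Your treatment is slightly more explicit in two places---you spell out why no size-$\beta$ feasible set can contain $v_0$, and in the second statement you round $\beta$ down to $\beta'=\lfloor\beta\rfloor$ and use $\beta'+1>\beta$, which cleanly covers non-integer $\beta$ (the paper only argues the case $\alpha=1+1/\beta'$ for integer $\beta'$)---but the underlying idea is identical. The only omission is the one-line remark that replacing the single hyperedge by the edges $\{v_0,v_i\}$ for $1\le i\le\beta$ yields the simple-graph version; this is the paper's observation as well and is immediate from your argument.
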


The following proof is an adjusted variant of the proof in~\cite{EdLMS22} of the analogous lower bound for MST under explorable uncertainty with predictions.

\begin{proof}
	Assume, for the sake of contradiction, that there is a deterministic $\beta$-robust algorithm that is $\alpha$-consistent with $\alpha=1 + \frac{1}{\beta}-\eps$, for some $\eps>0$.
	Consider an instance with vertices $\{0,1,\ldots,\beta\}$, a single hyperedge that contains all $\beta+1$ vertices, and intervals and predicted weights as in Figure~\ref{fig_combined_lb_min_single_set}.
	The algorithm must query the vertices $\{1, \ldots, \beta\}$ first as otherwise it would query $\beta+1$ vertices in case all predictions are correct, while there is an optimal query set of size $\beta$.  
	Suppose w.l.o.g.\ that the algorithm queries the vertices $\{1, \ldots, \beta\}$ in order of increasing indices. Consider the adversarial choice $w_i = \pred{w}_i$, for $i = 1, \ldots, \beta - 1$, and then $w_{\beta} \in I_0$ and $w_0 \notin I_1 \cup \ldots \cup I_{\beta}$.
	This forces the algorithm to query also~$I_0$, while an optimal solution only queries~$I_0$.
	Thus any such algorithm has robustness at least~$\beta+1$, a contradiction.
	
	The second part of the theorem directly follows from the first part and the known general lower bound of $2$ on the competitive ratio~\cite{erlebach08steiner_uncertainty,kahan91queries}. Assume there is an $\alpha$-consistent deterministic algorithm with $\alpha=1 + \frac{1}{\beta'}$ for some integer $\beta'\in [1,\infty)$. Consider the instance above with $\beta=\beta'-1$. Then the algorithm has to query the vertices $\{1, \ldots, \beta\}$ first to ensure $\alpha$-consistency as otherwise it would have a competitive ratio of $\frac{\beta+1}{\beta}>1+\frac{1}{\beta'}=\alpha$ in case that all predictions are correct. By the argumentation above, the robustness factor of the algorithm is at least $\beta+1=\beta'=\frac{1}{\alpha-1}$.
	
	To prove the result for the (non-hyper) graph orientation problem, the only 
	difference is that we use edges $\{0,i\}$ for $1\le i\le \beta$.
\end{proof}

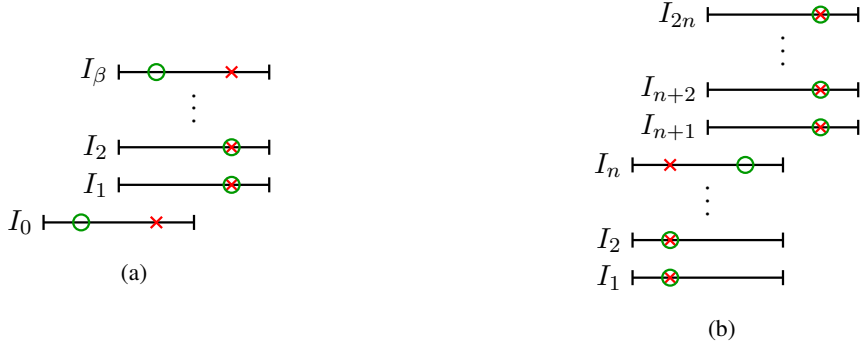
\begin{figure}[t]
	\centering
	\captionsetup[subfigure]{justification=centering}
	\begin{subfigure}{0.45\textwidth}
		\centering
		\begin{tikzpicture}[line width = 0.3mm, scale=1]
			\intervalpr{$I_1$}{1}{3}{0}{2.5}{2.5}
			\intervalpr{$I_2$}{1}{3}{0.5}{2.5}{2.5}
			\intervalpr{$I_\beta$}{1}{3}{1.5}{2.5}{1.5}
			\path (2, 0.5) -- (2, 1.6) node[font=\normalsize, midway, sloped]{$\dots$};
			
			\intervalpr{$I_0$}{0}{2}{-0.5}{1.5}{0.5}
		\end{tikzpicture}
		\subcaption{}\label{fig_combined_lb_min_single_set}
	\end{subfigure}\quad
	\begin{subfigure}{0.45\textwidth}
		\centering
		\begin{tikzpicture}[line width = 0.3mm, scale=1]
			\intervalpr{$I_1$}{0}{2}{0}{0.5}{0.5}
			\intervalpr{$I_2$}{0}{2}{0.5}{0.5}{0.5}
			\intervalpr{$I_n$}{0}{2}{1.5}{0.5}{1.5}
			\path (1, 0.5) -- (1, 1.6) node[font=\normalsize, midway, sloped]{$\dots$};
			
			\intervalpr{$I_{n+1}$}{1}{3}{2}{2.5}{2.5}
			\intervalpr{$I_{n+2}$}{1}{3}{2.5}{2.5}{2.5}
			\intervalpr{$I_{2n}$}{1}{3}{3.5}{2.5}{2.5}
			\path (2, 2.5) -- (2, 3.6) node[font=\normalsize, midway, sloped]{$\dots$};
		\end{tikzpicture}
		\subcaption{}\label{fig_lb_wrong_predictions}
	\end{subfigure}\quad
	\caption{Instances for lower bounds.
		Red crosses indicate predicted weights, and green circles show correct weights.
		(\subref{fig_combined_lb_min_single_set})~Lower bound on robustness-consistency tradeoff.
		(\subref{fig_lb_wrong_predictions}) Lower bound based on the number of inaccurate predictions.
		\label{fig_lowerbounds}
	}
\end{figure}


\subsection{Preliminaries}

The crucial structure and unifying concept in hypergraph orientation and sorting under explorable uncertainty without predictions are {\em witness sets}~\cite{bruce05uncertainty}. Witness sets are the key to any comparison with an optimal solution. A ``classical'' witness set is a set of vertices for which we can guarantee that any feasible solution must query at least {\em one} of these vertices.
In the classical setting without access to predictions, sorting and hypergraph orientation admit $2$-competitive online algorithms that rely essentially on identifying and querying disjoint witness sets of size two.
We refer to witness sets of size two also as {\em witness pairs}.
The following lemma characterizes witness pairs for our problems.
We call a vertex $v$ {\em leftmost} in a hyperedge~$S$ if it is a vertex with minimum lower limit $L_v$ in~$S$.

\begin{lemma}[Kahan, 1991]
	\label{lem:witness}
	Given (hyper)graph $H=(V,E)$.
	Consider some $S \in E$.
	A set $\{v, u\} \subseteq S$ with $I_v \cap I_u \neq \emptyset$, and $v$ or~$u$ leftmost in~$S$, is a witness set.
\end{lemma}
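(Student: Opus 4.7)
The plan is to apply the classical swap argument. Without loss of generality, assume $v$ is leftmost in $S$, so $L_v \le L_x$ for every $x \in S$, and in particular $L_v \le L_u$. I will show that any query set $Q \subseteq V$ with $Q \cap \{v,u\} = \emptyset$ fails to be feasible on some realization of the precise weights, which by definition makes $\{v,u\}$ a witness set.

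The construction will use two weight realizations $R_1$ and $R_2$ that agree on all vertices outside $\{v,u\}$ but yield different minimizers of $S$. First pick two values $a < b$ inside the nonempty open interval $I_v \cap I_u$, both chosen close to $\max(L_v, L_u) = L_u$. For every vertex $x \in S \setminus \{v,u\}$, set $w_x$ close to $U_x$, so that $w_x > b$; pushing $a, b$ arbitrarily close to $L_u$ from above leaves enough room for this choice whenever $U_x > L_u$. For vertices outside $S$, fix any admissible weight, identical in both realizations. Then define $R_1$ by setting $w_v = a$, $w_u = b$, and define $R_2$ by setting $w_v = b$, $w_u = a$, keeping all other weights as above.

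In $R_1$ the minimum-weight vertex of $S$ is $v$ (with weight $a$), whereas in $R_2$ it is $u$ (again with weight $a$), since all other vertices of $S$ have weight strictly greater than $b > a$. Hence the correct orientations of $S$ differ between $R_1$ and $R_2$. On the other hand, $R_1$ and $R_2$ agree on every vertex outside $\{v,u\}$, and in particular on all of $Q$, so any algorithm issuing exactly the queries in $Q$ observes identical information in both scenarios and produces the same output. That output cannot be correct in both, so $Q$ is not feasible in at least one of the realizations. Thus every feasible query set must intersect $\{v,u\}$.

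The main obstacle I expect is justifying cleanly that the weights of the other vertices of $S$ can indeed be chosen above $b$. The leftmost property of $v$ combined with $I_v \cap I_u \neq \emptyset$ allows picking $a$ and $b$ arbitrarily close to $L_u$ from above, which accommodates any $x \in S \setminus \{v,u\}$ with $U_x > L_u$. The only degenerate situation is when some $x \in S$ has $U_x \le L_u$; but then $w_x < L_u$ while $w_v$ can still range through the wide interval $I_v$ (which extends beyond $L_u$ since $I_v \cap I_u \neq \emptyset$), so a variant swap that contrasts a realization with $w_v$ close to $L_v$ against one with $w_v$ close to $U_v$, with all other weights fixed, again produces two realizations that are indistinguishable on $Q$ but disagree on the minimizer of $S$. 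This completes the argument in all cases.
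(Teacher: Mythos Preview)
The paper does not supply its own proof of this lemma: it is stated with attribution to Kahan (1991) and used as a black box throughout, so there is no in-paper argument to compare against.

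Your adversary argument is the standard one and is essentially correct. The main case (every $x \in S\setminus\{v,u\}$ has $U_x > L_u$) is handled cleanly. The degenerate-case sketch is a little loose: you should be explicit that the ``other weights fixed'' are chosen near their upper limits, so that in the realization with $w_v$ near $L_v$ the vertex $v$ really is the unique minimizer of $S$ (this is where leftmostness of $v$ is used). Once that is said, the swap to $w_v$ near $U_v$ pushes $w_v$ above $w_x$ (since $w_x < U_x \le L_u < U_v$), changing the minimizer, while the two realizations still agree on $Q\subseteq V\setminus\{v,u\}$. An even quicker way to dispatch the degenerate case: $L_v \le L_x < U_x \le L_u < U_v$ forces $I_x \subseteq I_v$, so by Corollary~\ref{cor_min_left_mandatory} the vertex $v$ is mandatory on its own, and hence $\{v,u\}$ is trivially a witness set.
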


In terms of learning-augmented algorithms, completely relying on querying witness pairs ensures $2$-robustness, but it does not lead to any improvements in terms of consistency.
In order to obtain an improved consistency, we need stronger local guarantees.
To that end, we call a vertex \emph{mandatory}, if it is part of every feasible query set.
Identifying mandatory vertices based on the interval structure alone is not always possible, otherwise there would be a $1$-competitive algorithm. 
Therefore, we want to identify vertices that are mandatory under the assumption that the predictions are correct.
We call such vertices \emph{prediction mandatory}.
The following lemma gives a characterization of (prediction) mandatory vertices.
Recall that the uncertainty interval $I_v$ of a vertex $v$ is either of form $I_v = (L_v, U_v)$ or $I_v= [w_v]$. In the former case, we call $I_v$ \emph{non-trivial} and otherwise we call $I_v$ \emph{trivial}.

\begin{restatable}{lem}{LemmaMandatoryMin}
	\label{lema_mandatory_min}
	A vertex $v \in V$ with a non-trivial uncertainty interval is mandatory if and only if there is a hyperedge $S \in E$ with $v \in S$ such that either $(i)$ $v$ is a minimum-weight vertex of $S$ and $w_u \in I_v$ for some $u \in S \setminus \{v\}$, or $(ii)$ $v$ is not a minimum-weight vertex of $S$ and $w_u \in I_v$ for the minimum-weight vertex $u$ of $S$.
\end{restatable}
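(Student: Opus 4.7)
The plan is a direct if-and-only-if argument built on the definition of a mandatory vertex: $v$ is mandatory iff the orientation of some hyperedge $S \ni v$ cannot be determined from the queried values in $V \setminus \{v\}$ together with the residual interval information $I_v$. The two directions are handled symmetrically, in each case using the fact that $w_u \in I_v$ together with openness of $I_v$ leaves room on both sides of $w_u$ inside $I_v$.

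For the ``if'' direction I assume (i) or (ii) holds for some $S \ni v$ and produce two weight realizations, agreeing on all vertices in $V \setminus \{v\}$, that give different orientations of $S$; this means no query set that omits $v$ can be feasible. In case (i), where $v$ is the minimum of $S$ and some $u \in S \setminus \{v\}$ satisfies $w_u \in I_v$, I replace $w_v$ by some $w_v' \in I_v$ with $w_v' > w_u$; such a value exists because $w_u$ lies strictly inside the open interval $(L_v,U_v)$. In the altered realization $v$ is no longer the minimum of $S$. In case (ii), where the true minimum $u$ of $S$ has $w_u \in I_v$, I analogously choose $w_v' \in I_v$ with $w_v' < w_u$, making $v$ the new minimum. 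In both cases the orientation of $S$ changes while the information revealed by querying $V \setminus \{v\}$ is unchanged, so $v$ must be queried.

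For the ``only if'' direction, I assume neither (i) nor (ii) holds for any hyperedge containing $v$ and argue that the query set $V \setminus \{v\}$ is feasible, which implies $v$ is not mandatory. Fix a hyperedge $S \ni v$ and consider two cases. If $v$ is the true minimum of $S$, failure of (i) means $w_u \notin I_v$ for every $u \in S \setminus \{v\}$; together with $w_u > w_v$ and $w_v \in (L_v,U_v)$ this forces $w_u \geq U_v$, so for every hypothetical $w_v' \in I_v$ we still have $w_v' < U_v \leq w_u$, and $v$ remains the unique minimum. If $v$ is not the minimum and $u^*$ is, failure of (ii) gives $w_{u^*} \notin I_v$; combined with $w_v > w_{u^*}$ and $w_v \in I_v$ this forces $w_{u^*} \leq L_v$, so for every $w_v' \in I_v$ we have $w_v' > L_v \geq w_{u^*}$, and the other queried weights already certify $w_{u^*} \leq w_{u''}$ for $u'' \in S \setminus \{v, u^*\}$; hence $u^*$ is still identified as the minimum.

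The main delicate point is ensuring that the alternative realizations constructed in the ``if'' direction genuinely lie in $I_v$ and that they are consistent with the information revealed by any query set $Q \not\ni v$; openness of $I_v$ together with $w_u \in I_v$ takes care of the first, and since only $w_v$ is modified, consistency on all other vertices (whether queried or not) is immediate. A minor subtlety is the handling of ties for the minimum of $S$, which is absorbed by reading ``the minimum-weight vertex'' in (ii) as ``a minimum-weight vertex'' and applying the arguments per minimizer.
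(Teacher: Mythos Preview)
Your proof is correct and follows essentially the same approach as the paper's: both directions argue that querying $V\setminus\{v\}$ is or is not feasible for every hyperedge containing $v$, with the case split on whether $v$ is a minimum-weight vertex of~$S$. Your version is somewhat more explicit than the paper's---you actually construct the alternative value $w_v'\in I_v$ using openness of the interval, whereas the paper simply asserts ``we cannot prove $w_v\le w_u$ without querying $v$''---but the underlying argument is identical.
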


\begin{proof}
	If $v$ is a minimum-weight vertex of hyperedge~$S$ and \nnew{$I_v$} contains~$w_u$ of another vertex $u\in S\setminus \{v\}$, then~$S$ cannot be oriented even if we query all vertices in $S \setminus \{v\}$ as we cannot prove $w_v \le w_u$ without querying $v$. 
	If~$v$ is not a minimum-weight vertex of a hyperedge $S$ with $v\in S$ and $I_v$ contains the minimum weight~$w^*$ of $S$, then~$S$ cannot be solved even if we query all vertices in $S \setminus \{v\}$, as we cannot prove that $w^* \leq w_v$ without querying $v$.
	
	If~$v$ is a minimum-weight vertex of hyperedge $S$, but $w_u \notin I_v$ for every $u \in S \setminus \{v\}$, then $S \setminus \{v\}$ is a feasible solution for orienting~$S$.
	If~$v$ is not a minimum-weight vertex of hyperedge~$S$ and $I_v$ does not contain the minimum weight of~$S$, then again $S \setminus \{v\}$ is a feasible solution for~$S$. If every hyperedge $S$ that contains $v$ falls into one of these two cases, then querying all vertices except $v$ is a feasible query set for the whole instance.
\end{proof}

By using the lemma with the predicted weights instead of the precise weights, we can identify prediction mandatory vertices.
Furthermore, the lemma does not only enable us to identify mandatory vertices given full knowledge of the precise weights, but also implies criteria to identify \emph{known mandatory} vertices, i.e., vertices that are known to be mandatory given only the hypergraph, the intervals, and precise weights revealed by previous queries.
Every algorithm can query such vertices without worsening its competitive ratio as formalized by the following corollary. We call a hyperedge \emph{not yet solved}, if we do not know the orientation of the hyperedge yet. In particular, for such hyperedges, no leftmost vertex can have a trivial uncertainty interval.

\begin{restatable}{coro}{CorMinLeftMandatory} \label{cor_min_left_mandatory}
	If the interval $I_v$ of a leftmost vertex~$v$ in a not yet solved hyperedge~$S$ contains the precise weight of another vertex in~$S$, then~$v$ is mandatory.
	In particular, if $v$ is leftmost in $S$ and $I_u \subseteq I_v$ for some $u \in S \setminus \{v\}$, then $v$ is mandatory.
\end{restatable}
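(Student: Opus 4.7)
The plan is to derive the corollary directly from Lemma~\ref{lema_mandatory_min}, by case-analysis on whether the leftmost vertex $v$ is itself a minimum-weight vertex of $S$ or not. Since $S$ is not yet solved, the remark preceding the corollary guarantees that $v$'s interval $I_v=(L_v,U_v)$ is non-trivial, so the lemma is applicable.

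In the first case, $v$ is a minimum-weight vertex of $S$. Then the hypothesis that $I_v$ contains the precise weight of some $u \in S\setminus\{v\}$ is exactly condition~(i) of Lemma~\ref{lema_mandatory_min}, so $v$ is mandatory. In the second case, $v$ is not a minimum-weight vertex of $S$; let $u^*$ be one such minimum-weight vertex, so $w_{u^*} < w_v$. I want to invoke condition~(ii), which requires $w_{u^*} \in I_v$. Since $v$ is leftmost in $S$, we have $L_v \leq L_{u^*} \leq w_{u^*}$, and since $w_v \in (L_v,U_v)$ together with $w_{u^*} < w_v$ gives $w_{u^*} < U_v$. Combining these (using that the $L_v \le L_{u^*}$ inequality is strict whenever $I_{u^*}$ is non-trivial, and handling the already-queried boundary case via the standing general-position assumption) yields $w_{u^*} \in (L_v,U_v) = I_v$, so condition~(ii) applies and $v$ is mandatory.

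For the ``In particular'' clause, $I_u \subseteq I_v$ immediately implies $w_u \in I_u \subseteq I_v$, which is precisely the precise weight of another vertex $u \in S\setminus\{v\}$ lying in $I_v$. Hence the first part of the corollary applies and $v$ is mandatory.

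The only subtle point I expect to be the main obstacle is the case where the minimum-weight vertex $u^*$ has already been queried (so $I_{u^*}=[w_{u^*}]$ is trivial and coincides with its original lower limit): one must then verify that leftmostness combined with the hyperedge being not-yet-solved still forces the strict inequality $L_v < w_{u^*}$, so that $w_{u^*}$ lies inside the open interval $I_v$. All other steps are routine applications of Lemma~\ref{lema_mandatory_min}.
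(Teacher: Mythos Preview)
Your approach is the natural one and matches the paper's intent: the paper states this as an immediate corollary of Lemma~\ref{lema_mandatory_min} without giving a separate proof, and your case split on whether $v$ is a minimum-weight vertex of $S$ is exactly how one derives it.

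There is, however, a small but genuine gap in your handling of the boundary case. You resolve the possibility $L_v = w_{u^*}$ (when $I_{u^*}$ is trivial) by invoking a ``standing general-position assumption'', but no such assumption appears in the paper. The correct resolution is already available from the paper's setup: the sentence just before the corollary observes that in a not-yet-solved hyperedge no leftmost vertex can have a trivial interval. Hence if $I_{u^*}$ is trivial, $u^*$ cannot be leftmost in $S$, which forces the \emph{strict} inequality $L_v < L_{u^*} = w_{u^*}$ and places $w_{u^*}$ in the open interval $I_v=(L_v,U_v)$ as needed. With this fix, your argument goes through cleanly; everything else (including the ``in particular'' clause) is fine.
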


We use Lemma~\ref{lema_mandatory_min} to define an offline algorithm, i.e., we assume full access to the precise weights but still want to compute a feasible query set, that follows a two-stage structure:
First, we iteratively query all mandatory vertices computed using Lemma~\ref{lema_mandatory_min}.
After that, each not yet oriented hyperedge~$S$ has the following configuration: The leftmost vertex~$v$ has a precise weight outside $I_u$ for all $u \in S \setminus \{v\}$, and each other vertex in~$S$ has precise weight outside~$I_v$.
Thus we can either query~$v$ or all other vertices $u \in S\setminus \{v\}$ with $I_u \cap I_v \not= \emptyset$ to determine the orientation.
The optimum solution is to query a minimum vertex cover in the following auxiliary graph as introduced in~\cite{BampisDEdLMS21}:

\begin{definition}
	\label{def:vertex_cover_instance}
	Given a hypergraph $H = (V,E)$, the \emph{vertex cover instance} of $H$ is the graph $\bar{G}=(V,\bar{E})$ with
	$\{v,u\} \in \bar{E}$ if and only if there is a not yet solved hyperedge $F \in E$ such that $v,u \in F$, $v$ is leftmost in $F$ and $I_v \cap I_u \neq \emptyset$.
	For the sorting problem, it holds that $\bar{G}=G$.
\end{definition}

The Lemmas~\ref{lem:witness} and~\ref{lema_mandatory_min} directly imply that the offline algorithm is optimal.
The algorithm may require exponential time, but this is not surprising as we also show that the offline version of the hypergraph orientation problem is NP-hard (cf.~\Cref{app:nphard}).


A key idea of our algorithms with access to predictions is to emulate the offline algorithm using the predicted information. Since blindly following the offline algorithm might lead to a competitive ratio of $n$ for faulty predictions, we have to augment the algorithm with additional, carefully selected queries. 

The next lemma formulates a useful property of vertex cover instances without known mandatory vertices. 

\begin{restatable}{lemma}{lemVCProperty}
	\label{lem:VC-property}
	Given a hypergraph $H = (V,E)$ without known mandatory vertices (by~\Cref{cor_min_left_mandatory}), let $Q$ be an arbitrary vertex cover of $\bar{G}$.
	After querying $Q$, for each hyperedge $F \in E$, we either know the orientation of $F$ or can determine it by exhaustively querying according to~\Cref{cor_min_left_mandatory}.
\end{restatable}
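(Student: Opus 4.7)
The plan is to fix an arbitrary hyperedge $F \in E$ and argue by contradiction. If $F$ is already solved there is nothing to do; otherwise let $v$ be a leftmost vertex of $F$ in the input. By definition of $\bar{G}$, the edges of $\bar{G}$ incident to $F$ are exactly $\{v,u\}$ for $u \in F \setminus \{v\}$ with $I_v \cap I_u \neq \emptyset$, and the vertex-cover property of $Q$ covers each such edge. Let $Q^*$ denote the final set of queried vertices after querying $Q$ and exhaustively applying \Cref{cor_min_left_mandatory}, and suppose for contradiction that $F$ is still unsolved at termination (so no vertex in $F$ is known mandatory anymore).

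First I would identify the current leftmost $v^*$ of $F$ at termination and rule out $v^* \in Q^*$: a queried $v^*$ would have singleton interval $\{w_{v^*}\}$, every other $u \in F$ would have current lower limit at least $w_{v^*}$ (strict for unqueried $u$, since intervals are open), so $v^*$ would be a minimum of $F$ and $F$ solved. Thus $v^*$ has non-trivial interval $I_{v^*} = (L_{v^*}, U_{v^*})$, and it suffices to exhibit a queried $u \in F \setminus \{v^*\}$ with $w_u \in I_{v^*}$, since by \Cref{cor_min_left_mandatory} this would make $v^*$ known mandatory, contradicting termination.

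Next I would split on whether $v \in Q^*$. If $v \in Q^*$, then $v \neq v^*$, $L_v \leq L_{v^*}$ (since $v$ was originally leftmost), and $v^*$ overtaking $v$ as current leftmost forces $w_v > L_{v^*}$ strictly (the borderline $w_v = L_{v^*}$ ties $v$ for leftmost and, since $v$ is queried, would already solve $F$). The no-initial-known-mandatory hypothesis combined with \Cref{cor_min_left_mandatory} rules out $I_{v^*} \subseteq I_v$, which together with $L_v \leq L_{v^*}$ forces $U_{v^*} > U_v$; hence $L_{v^*} < w_v < U_v < U_{v^*}$, so $w_v \in I_{v^*}$ and $v$ itself is the sought witness. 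If instead $v \notin Q^*$, then $v \notin Q$, so by the vertex cover property every $u \in F \setminus \{v\}$ with $I_v \cap I_u \neq \emptyset$ lies in $Q$; should some such $u$ have $w_u \in I_v$, \Cref{cor_min_left_mandatory} would mark $v$ known mandatory after $Q$ and force $v \in Q^*$, a contradiction. Otherwise, for every $u \in Q \cap F \setminus \{v\}$ the conditions $w_u > L_u \geq L_v$ and $w_u \notin I_v$ together give $w_u \geq U_v > w_v$, and for every $u \in F \setminus (Q \cup \{v\})$ the disjointness $I_v \cap I_u = \emptyset$ gives $L_u \geq U_v$ and hence $w_u > w_v$; so $v$ would be the unique minimum of $F$ and $F$ already solved, a final contradiction.

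The main obstacle I anticipate is tracking how the leftmost vertex of $F$ can migrate from the original $v$ to the current $v^*$ once $v$'s lower limit has been raised to $w_v$, and recovering a useful relationship between the updated intervals. The inequality $U_{v^*} > U_v$, forced by the no-initial-known-mandatory assumption via \Cref{cor_min_left_mandatory}, is the key observation that places $w_v$ back inside $I_{v^*}$ and drives the Case~A argument; the Case~B argument is more direct and only needs the vertex-cover guarantee together with one further application of \Cref{cor_min_left_mandatory}.
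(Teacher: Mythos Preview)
Your proof is correct and, while it shares the same initial case split as the paper (whether the original leftmost vertex $v$ ends up queried), it handles the case $v\in Q^*$ more cleanly. The paper argues iteratively: after querying $v$, either $F$ is solved or the next leftmost vertex becomes known mandatory by \Cref{cor_min_left_mandatory}, and one repeats this down the chain. You instead jump directly to the terminal state, identify the current leftmost $v^*$, and use the single observation that $I_{v^*}\not\subseteq I_v$ (from the no-known-mandatory hypothesis) together with $L_v\le L_{v^*}$ to force $U_{v^*}>U_v$ and hence $w_v\in I_{v^*}$. This avoids the induction entirely and is arguably tighter. Your Case~B is essentially the same as the paper's case (ii). One cosmetic point: when you conclude ``$v^*$ has non-trivial interval'' from $v^*\notin Q^*$, you are implicitly using that an initially trivial $v^*$ would, by the same argument you gave for a queried $v^*$, already solve $F$; this is fine but worth saying explicitly.
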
 

\begin{proof}
	Consider an arbitrary hyperedge $S$ and let $v$ be leftmost in $S$.
	As the instance does not contain vertices that are known mandatory by~\Cref{cor_min_left_mandatory}, the leftmost vertex is unique.
	If the interval of $v$ is not intersected by any $I_u$ with $u \in S\setminus \{v\}$ before querying $Q$, then we clearly already know the orientation. Thus, assume otherwise.
	By assumption, we have $I_u \not\subseteq I_v$ and $I_v \not\subseteq I_u$ for all $u \in S \setminus \{v\}$ (before querying $Q$).
	Thus, a vertex cover of the vertex cover instance $\bar{G}$ contains either (i) $v$, (ii) $S \setminus \{v\}$, or (iii) $S$. If the vertex cover $Q$ contains the complete hyperedge $S$, then, after querying it, we clearly know the orientation of~$S$. 
	
	It remains to argue about (i) and (ii). 
	Consider case (i) and let $u$ be leftmost in $S \setminus \{v\}$. If querying $v$ reveals $w_v \not\in I_u$, then $v$ has minimum weight in $S$ and we know the orientation of $S$.
	If querying $v$ reveals \jnew{$w_v \in I_u$}, then the query reduces $I_v$ to $[w_v]$ and $u$ becomes mandatory by~\Cref{cor_min_left_mandatory} as $I_v = [w_v] \jnew{\subseteq I_u}$. 
	After querying $u$ (to exhaustively apply~\Cref{cor_min_left_mandatory}), we can repeat the argument with the vertex leftmost in $S \setminus \{u,v\}$.
	
	Consider case (ii). 
	If querying $S \setminus \{v\}$ reveals that there exists no vertex $u \in S\setminus \{v\}$ with $w_u \in I_v$, then $v$ must be of minimum weight in $S$.
	Otherwise, the uncertainty interval of some $u \in S \setminus \{v\}$ was reduced to $[w_v] \subseteq I_v$ and we can apply~\Cref{cor_min_left_mandatory}. 
	After that, all elements of $S$ are queried and we clearly know the orientation.
\end{proof}

\subsection{Accuracy of predictions}

{While 
consistency and robustness only consider the extremes in terms of prediction quality, we aim for a more fine-grained analysis that relies on error metrics to measure the quality of the predictions.
As was discussed in~\cite{EdLMS22}, for the MST problem, 
simple error measures like the number of inaccurate predictions $k_\#=|\{v \in V\,|\, w_v \not= \w_v\}|$ or an $\ell_1$ error metric such as $\sum_{e\in E}|w_e - \w_e|$ are not meaningful;  this is also true for the hypergraph orientation problem. 
In particular, we show that even for $k_{\#} = 1$  the competitive ratio cannot be better than the known bound of~$2$ for general hypergraph orientation. 

\begin{theorem}\label{theo_lb_wrong_predictions}
	If $k_{\#} \geq 1$, then any deterministic algorithm for the hypergraph orientation problem under uncertainty with predictions has competitive ratio $\rho\geq 2$. The result holds even for (non-hyper) graphs.
\end{theorem}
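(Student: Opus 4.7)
The plan is to exhibit a very simple single-edge instance on which every deterministic algorithm makes twice as many queries as the offline optimum while only one prediction is inaccurate. Since the instance will be a simple (non-hyper) graph, the extension to graphs stated in the theorem is immediate. The argument mirrors Kahan's classical single-edge $2$-lower bound for hypergraph orientation without predictions, carefully tuned so that the adversary spends its wrong-prediction budget exactly once.

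First I would fix a graph on two vertices $u,v$ with the edge $\{u,v\}$, intervals $I_u=(0,2)$ and $I_v=(1,3)$ (overlapping in $(1,2)$), and predictions $\pred{w}_u=0.5$ and $\pred{w}_v=2.5$ placed symmetrically on opposite sides of the overlap. Under these predictions $u$ is the minimum-weight vertex, and an optimum that trusts the predictions queries just one endpoint in order to orient the edge, so an optimum of size $1$ exists if the predictions turn out to be correct.

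Second, I would run the following adaptive adversary against any fixed deterministic algorithm. Let $x\in\{u,v\}$ be the vertex it queries first on this input (a deterministic function of the revealed intervals and predictions). The adversary reveals $w_x=1.5\in I_u\cap I_v$; the leftover information is consistent with both orientations of $\{u,v\}$, so by Lemma~\ref{lem:witness} the algorithm is forced to also query the remaining vertex $y$. At that moment the adversary fixes $w_y=\pred{w}_y$. By construction only $x$'s prediction is wrong, so $k_\#=1$, while $\ALG$ has made $2$ queries.

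Finally I would bound $|\OPT|$ on the realized instance: since $w_y=\pred{w}_y\notin I_x$ (namely $w_v=2.5>U_u=2$ if $x=u$, and $w_u=0.5<L_v=1$ if $x=v$), querying only $y$ certifies the orientation, so $|\OPT|=1$. Hence $\ALG/|\OPT|=2$ and the claimed lower bound $\rho\ge 2$ follows. The one step that deserves care --- rather than being a genuine obstacle --- is checking that the symmetric placement of the predictions makes both branches of the adversary work simultaneously: whichever endpoint is queried first is the one whose prediction gets ``flipped'' into the intersection, and the untouched endpoint retains its prediction, which lies outside the other's interval; this keeps $|\OPT|=1$ and $k_\#=1$ regardless of which vertex the algorithm opens.
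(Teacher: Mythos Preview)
Your single-edge argument is correct and essentially reproduces Kahan's classical lower bound, with the predictions placed symmetrically so that only the first-queried vertex ends up with a wrong prediction; this yields $\ALG=2$, $|\OPT|=1$, $k_\#=1$, and hence $\rho\ge 2$.

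The paper takes a different route. It builds an instance with $2n$ vertices split into a ``left'' group $\{1,\dots,n\}$ and a ``right'' group $\{n+1,\dots,2n\}$, with hyperedges $S_i=\{i,n+1,\dots,2n\}$ (respectively, a complete bipartite edge set in the graph version). By relabelling one may assume the algorithm visits the left vertices in the order $1,\dots,n$ and the right ones in the order $n+1,\dots,2n$. The adversary keeps every prediction correct until the algorithm first touches vertex~$n$ or vertex~$2n$; at that moment the single revealed weight is pushed into the overlap, which forces queries to the entire opposite side. This gives $\ALG=2n$ versus $|\OPT|=n$ with $k_\#=1$. Both constructions achieve ratio exactly~$2$ with one wrong prediction; yours is more elementary, while the paper's additionally shows that the ratio~$2$ persists for arbitrarily large $|\OPT|$ (not strictly required here, since the paper's definition of competitive ratio carries no additive slack, but it rules out any additive-constant escape as well).

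One minor remark: the step ``by Lemma~\ref{lem:witness} the algorithm is forced to also query the remaining vertex~$y$'' slightly overstates that lemma, which only asserts that $\{u,v\}$ is a witness \emph{set}. The actual justification is direct: once $w_x=1.5\in I_y$ is revealed, both $w_y<w_x$ and $w_y>w_x$ remain consistent with $I_y$, so the orientation is undetermined and $y$ must be queried.
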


The following proof is an adjusted variant of the proof in~\cite{EdLMS22} of the analogous lower bound for MST under explorable uncertainty with predictions.

\begin{proof}
	Consider a hypergraph with vertices $\{1,\ldots,2n\}$, hyperedges 
	$S_i=\{i, n+1, n+2, \ldots, 2n\}$, for $i = 1, \ldots, n$, and intervals and predicted weights as depicted in Figure~\ref{fig_lb_wrong_predictions}.
	Assume w.l.o.g.\ that the algorithm queries the vertices $\{1,\ldots,n\}$ in the order $1, 2, \ldots, n$ and the vertices $\{n+1,\ldots,2n\}$ in the order $n+1, n+2, \ldots, 2n$.
	Before the algorithm queries vertex~$n$ or~$2n$, the adversary sets all predictions as correct, so the algorithm will eventually query~$n$ or~$2n$.
	If the algorithm queries~$n$ before~$2n$, then the adversary chooses a weight $w_n$ for $n$ that forces a query to all vertices $n+1, \ldots, 2n$, and the predicted weights for those vertices as correct, so the optimum solution only queries $n+1, \ldots, 2n$. This situation is illustrated in Figure~\ref{fig_lb_wrong_predictions}.
	
	A symmetric argument holds if the algorithm queries vertex~$2n$ before vertex~$n$.
	
	For (non-hyper) graphs, we use edges
	$\{i,j\}$ for all $1\le i\le n, n+1\le j\le 2n$ instead.
\end{proof}

For the sorting problem, however, we show that $k_{\#}$-dependent guarantees are indeed possible.
In general we need more refined measures that take the interleaving  structure of intervals into account.} 

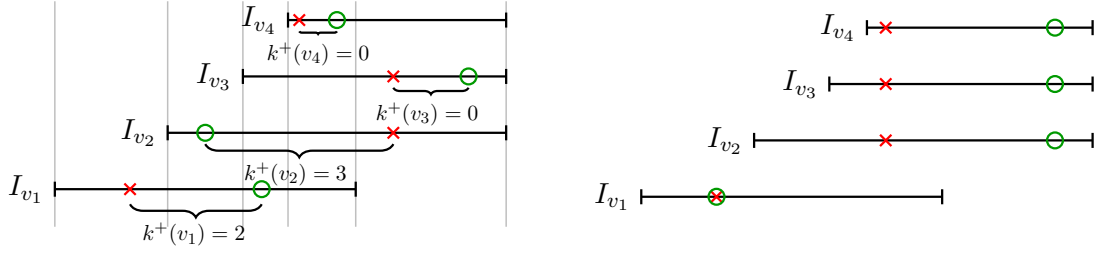
\begin{figure*}[t]
	\centering
	\begin{subfigure}[l]{0.4\textwidth}
		\begin{tikzpicture}[line width = 0.3mm, scale = 1, transform shape]	
		\intervalpr{$I_{v_1}$}{0}{4}{1.5}{1}{2.75}	
		\intervalpr{$I_{v_2}$}{1.5}{6}{2.25}{4.5}{2}	
		\intervalpr{$I_{v_3}$}{2.5}{6}{3}{4.5}{5.5}							
		\intervalpr{$I_{v_4}$}{3.1}{6}{3.75}{3.25}{3.75}		
		
		\begin{scope}[on background layer]
		\draw[line width = 0.2mm,,lightgray] (0,1) -- (0,4);
		\draw[line width = 0.2mm,,lightgray] (1.5,1) -- (1.5,4);
		\draw[line width = 0.2mm,,lightgray] (2.5,1) -- (2.5,4);
		\draw[line width = 0.2mm,,lightgray] (3.1,1) -- (3.1,4);
		\draw[line width = 0.2mm,,lightgray] (6,1) -- (6,4);
		\draw[line width = 0.2mm,,lightgray] (4,1) -- (4,4);
		\end{scope}
		
		\draw[decoration= {brace, amplitude = 5 pt, aspect = 0.5}, decorate] (2.75,1.3) -- (1,1.3);
		\node[] (l1) at (1.85,0.9) {\scalebox{0.75}{$\jo(v_1)=2$}};
		
		\draw[decoration= {brace, amplitude = 5 pt, aspect = 0.5}, decorate] (4.5,2.1) -- (2,2.1);
		\node[] (l1) at (3.2,1.7) {\scalebox{0.75}{$\jo(v_2)=3$}};
		
		\draw[decoration= {brace, amplitude = 2.5 pt, aspect = 0.5}, decorate] (5.5,2.85) -- (4.5,2.85);
		\node[] (l1) at (4.95,2.5) {\scalebox{0.75}{$\jo(v_3)=0$}};
		
		\draw[decoration= {brace, amplitude = 1 pt, aspect = 0.5}, decorate] (3.75,3.6) -- (3.25,3.6);
		\node[] (l1) at (3.5,3.3) {\scalebox{0.75}{$\jo(v_4)=0$}};

		\end{tikzpicture}
	\end{subfigure}
\hspace*{1cm}
\begin{subfigure}[r]{0.4\textwidth}
	\begin{tikzpicture}[line width = 0.3mm, scale = 1, transform shape]
	\intervalpr{$I_{v_1}$}{0}{4}{1.5}{1}{1}		
	\intervalpr{$I_{v_2}$}{1.5}{6}{2.25}{3.25}{5.5}		
	\intervalpr{$I_{v_3}$}{2.5}{6}{3}{3.25}{5.5}								
	\intervalpr{$I_{v_4}$}{3}{6}{3.75}{3.25}{5.5}				
	
	\node[] (l1) at (1.85,0.95) {};	
	\end{tikzpicture}
\end{subfigure}
	\caption{Example from~[Erlebach \emph{et al.}, 2022] interpreted for the hypergraph orientation problem with a hyperedge $S=\{v_1,v_2,v_3,v_4\}$. Circles illustrate precise weights and crosses illustrate the predicted weights. Predictions and precise weights with a total hop distance of $k_h = 5$ and mandatory query distance of $k_M = 1$ (left) and $k_h=3$ and $k_M = 1$ (right).}
	\label{fig_error_ex} 
\end{figure*}

As a first refined measure, we consider the {\em hop distance} as proposed in~\cite{EdLMS22}.
For a vertex $v$ and any vertex $u \in V \setminus \{v\}$, we define the function $k_{u}(v)$ that indicates whether the relation of $w_v$ to interval $I_u$ changes compared to the relation of $\w_v$ and $I_u$. 
To be more precise, $k_u(v) = 1$ if $\w_v \le L_u < w_v$, $w_v \le L_u < \w_v$, $w_v < U_u \le \w_v$ or $\w_v < U_u \le w_v$, and $k_u(v)= 0$ otherwise.
Based on this function, we define the hop distance of a single vertex as $\jo(v) = \sum_{u \in V\setminus \{v\}} k_{u}(v)$. 
Intuitively $\jo(v)$ for a single $v \in V$ counts the number of relations between $w_v$ and intervals $I_{u}$ with $u \in V\setminus \{v\}$ that are not accurately predicted.
For a set of vertices $V' \subseteq V$, we define $\jo(V') = \sum_{v \in V'} \jo(v)$.
Finally, we define the hop distance by $k_h=\jo(V)$.
For an example see Figure~\ref{fig_error_ex}.

Note that $k_{\#} = 0$ implies $k_h = 0$, so Theorem~\ref{theo_minimum_combined_lb} implies that no algorithm can simultaneously have competitive ratio better than $1 + \frac{1}{\beta}$ if $k_h = 0$ and $\beta$ for arbitrary~$k_h$.

While the hop distance takes the interval structure into account, it does not distinguish whether a ``hop'' affects a feasible solution. We introduce a third and strongest error measure based on the sets of (prediction) mandatory vertices.

Let $\mathcal{I}_P$ be the set of prediction mandatory vertices, 
and let $\mathcal{I}_R$ be the set of really mandatory vertices. 
The 
\emph{mandatory query distance} is the size of the symmetric difference of $\mathcal{I}_P$ and $\mathcal{I}_R$, i.e., $k_M = |\mathcal{I}_P \sym \mathcal{I}_R| = |(\mathcal{I}_P \cup \mathcal{I}_R) \setminus (\mathcal{I}_P \cap \mathcal{I}_R)| = |(\mathcal{I}_P \setminus \mathcal{I}_R) \cup (\mathcal{I}_R \setminus \mathcal{I}_P)|$.
Figure~\ref{fig_error_ex}  {(right)} shows an example with $k_M=1$.
Considering the precise weights in the example, both $\{v_1\}$ and $\{v_2,v_3,v_4\}$ are feasible solutions. Thus, no element is part of every feasible solution and $\mathcal{I}_R = \emptyset$. 
Assuming correct predicted weights,  {we have that} $v_1$ has to be queried even if all other vertices have already been queried and, therefore, $\mathcal{I}_P = \{v_1\}$.
It follows $k_M = |\mathcal{I}_P \sym \mathcal{I}_R| = 1$. 

Obviously, $k_M$ is a problem-specific error measure as, in a given set of uncertainty intervals, different intervals may be mandatory for different problems.
We can relate $k_M$ to $k_h$.
 
\begin{restatable}{theorem}{HopDistanceMandatoryDistance}
	\label{Theo_hop_distance_mandatory_distance}
	For any instance of hypergraph orientation  {under uncertainty with predictions}, the hop distance is at least as large as the mandatory query distance, i.e., $k_M \leq k_h$.
\end{restatable}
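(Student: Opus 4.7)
The plan is to construct an injection $\phi\colon \mathcal{I}_P \Delta \mathcal{I}_R \to \{(v,u) : v \neq u,\ k_v(u) = 1\}$. Since $k_h$ equals the cardinality of the target set, this immediately yields $k_M = |\mathcal{I}_P \Delta \mathcal{I}_R| \leq k_h$. The design choice that makes injectivity automatic is to always put $v$ in the first coordinate of $\phi(v)$; what then needs work is, for each $v$ in the symmetric difference, producing a vertex $u$ whose real and predicted weights straddle a limit of $I_v$.

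Fix $v \in \mathcal{I}_R \setminus \mathcal{I}_P$; the case $v \in \mathcal{I}_P \setminus \mathcal{I}_R$ is symmetric because the predicate ``$k_v(u) = 1$'' is invariant under swapping the roles of $w$ and $\w$. Apply Lemma~\ref{lema_mandatory_min} with the real weights to select a hyperedge $S \ni v$ witnessing $v \in \mathcal{I}_R$ via case~(i) or~(ii). Because $v \notin \mathcal{I}_P$, both predicted versions of those cases must fail on~$S$. The proof then splits into four sub-cases on (a) whether case~(i) or~(ii) provides the real witness, and (b) whether $v$ is pred-min of~$S$; in each sub-case I locate a specific $u \in S \setminus \{v\}$ with $k_v(u) = 1$ and set $\phi(v) := (v, u)$.

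The representative sub-case, and the one that drives the argument, is: $v$ real-mandatory via case~(i), but $v$ not pred-min of~$S$. Let $u^{**}$ be the pred-min of~$S$. Failure of predicted case~(ii) forces $\w_{u^{**}} \notin I_v$; combined with $\w_{u^{**}} \leq \w_v < U_v$ (using $\w_v \in I_v$), this pushes $\w_{u^{**}} \leq L_v$. On the other hand, $v$ being real-min together with $w_v > L_v$ yields $w_{u^{**}} \geq w_v > L_v$. Thus $\w_{u^{**}}$ and $w_{u^{**}}$ lie on opposite sides of $L_v$, which is precisely $k_v(u^{**}) = 1$. The three remaining sub-cases are easier: when $v$ is pred-min, failure of predicted case~(i) directly forces $\w_u \notin I_v$ for the real case-(i) witness $u$ (which has $w_u \in I_v$), giving $k_v(u) = 1$; and the case-(ii) real scenarios are handled by exactly the same template, with the real-min vertex $u^*$ playing the role $v$ played above.

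The main obstacle I expect is precisely this representative sub-case. The natural witness supplied by Lemma~\ref{lema_mandatory_min}, namely the vertex $u$ with $w_u \in I_v$, may happen to also satisfy $\w_u \in I_v$ and so contribute no hop on its own. The fix, as sketched, is to pivot to the pred-min $u^{**}$ and exploit a two-sided squeeze at~$L_v$: non-mandatoriness under predictions pushes $\w_{u^{**}}$ down to~$L_v$, while real-minimality of $v$ pushes $w_{u^{**}}$ strictly above~$L_v$. Once this pivot is in place, the remaining sub-cases reduce to routine boundary checks using $\w_v, w_v \in I_v$, and the injection $\phi$ is complete.
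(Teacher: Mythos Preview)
Your proposal is correct and follows essentially the same approach as the paper: map each $v$ in the symmetric difference to a pair $(v,u)$ (the paper uses $(u,v)$, but with $v$ fixed in one coordinate either way) where the real and predicted weight of $u$ straddle an endpoint of $I_v$, then use Lemma~\ref{lema_mandatory_min} on the witnessing hyperedge to locate such a~$u$. Your four-way case split is slightly more verbose than the paper's two cases (the paper unifies the ``mandatory'' side into a single condition~(c) and splits only on the ``not mandatory'' side), but the underlying squeeze argument at $L_v$ or $U_v$ is identical.
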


\begin{proof}
	Consider an instance with uncertainty intervals~$\mathcal{I}$, 
	precise weights $w$ and predicted weights $\pred{w}$. 
	Recall that $\mathcal{I}_P$ and
	$\mathcal{I}_R$ are the sets of prediction mandatory vertices and mandatory vertices, respectively.
	Observe that $k_M$ counts the vertices that
	are in $\mathcal{I}_P\setminus \mathcal{I}_R$ and those that are in $\mathcal{I}_R\setminus \mathcal{I}_P$.
	We will show the claim that, for
	every vertex $v$ in those sets, there is a vertex $u$ such that
	the weight of $u$ passes over $L_v$ or $U_v$ (or both) when going
	from $\w_v$ to $w_v$.
	This means
	that each vertex $v \in \mathcal{I}_P \sym \mathcal{I}_R$ is mapped to a unique
	pair $(u,v)$ such that the weight of $u$ passes over at least one endpoint of~$I_v$, and
	hence each such pair contributes at least one to~$k_h$. This implies $k_M \le k_h$.
	
	It remains to prove the claim. Consider a $v\in \mathcal{I}_P\setminus \mathcal{I}_R$. (The argumentation
	for intervals in $\mathcal{I}_R\setminus \mathcal{I}_P$ is symmetric, with the roles of $w$ and $\w$ exchanged.)
	As $v$ is not in $\mathcal{I}_R$, replacing all intervals for vertices
	in $\mathcal{I}\setminus\{v\}$ by their precise weights yields an instance
	that is solved.
	This means that in every hyperedge $S\in E$ that contains $v$, one of the following
	cases holds:
	\begin{itemize}
		\item[(a)] $v$ is known not to be the minimum of $S$ {w.r.t.\ precise weights $w$}. It follows that there
		is a vertex $u$ in $S$ with $w_u\le L_v$.
		\item[(b)] $v$ is known to be the minimum of $S$ {w.r.t.\ precise weights $w$}. It follows that all
		vertices $u\in S\setminus\{v\}$ satisfy $w_u\ge U_v$.
	\end{itemize}
	As $v$ is in $\mathcal{I}_P$, replacing all intervals
	of vertices in $V\setminus\{v\}$ by their predicted weights yields an instance
	that is not solved.
	This means that there exists at least one hyperedge $S'\in E$ that contains $v$ and satisfies
	the following:
	\begin{itemize}
		\item[(c)] All vertices $u$ in $S'\setminus\{v\}$ satisfy $\w_u>L_v$, and there
		is at least one such $u$ with $L_v<\w_u<U_v$.
	\end{itemize}
	
	If $S'$ falls into case (a) above, then by (a) there is a vertex
	$u$ in $S'\setminus \{v\}$ with $w_u\le L_v$, and by (c) we have $\w_u>L_v$. This
	means that the weight of $u$ passes over~$L_v$.
	If $S'$ falls into case (b) above, then by (c) there exists an
	vertex $u$ in $S'\setminus \{v\}$ with $\w_u < U_v$, and by (b) we have $w_u\ge U_v$.
	Thus, the weight of $u$ passes over~$v$. This establishes the claim,
	and hence we have shown that $k_M\le k_h$ for the hypergraph orientation problem. 
\end{proof}  

%

\section{Hypergraph orientation}
\label{sec:optimal-tradeoff}

We consider the general hypergraph orientation problem, and give error-sensitive algorithms w.r.t.~the measures $k_h$ and $k_M$.
For~$k_h$, we show that we can use an adjusted variant of the algorithm given in~\cite{EdLMS22} for the MST problem.

To achieve error-sensitive guarantees w.r.t~$k_M$, we show that the $k_h$-dependent guarantee does not directly translate to $k_M$. Instead, we give an even simpler algorithm that emulates the offline algorithm and augments it with additional queries, and show that this algorithm achieves the optimal trade-off w.r.t~$k_M$. We remark that this algorithm only requires access to the set of prediction mandatory vertices, which is a weaker type of prediction than access to the predicted weights $\w$.

\subsection{Error-sensitive algorithm w.r.t.\ hop distance}

We start by designing an algorithm that achieves the optimal consistency and robustness tradeoff (cf.~\Cref{theo_minimum_combined_lb}) with a linear error dependency on the hop distance $k_h$. The following theorem summarizes our main result with respect to error measure $k_h$. We achieve the theorem by adjusting the algorithm and analysis given in~\cite{EdLMS22} for the MST problem.  In contrast to the result of~\cite{EdLMS22}, we achieve the optimal consistency-robustness tradeoff by essentially exploiting~\Cref{lem:VC-property}. The corresponding analysis requires additional work specific to the hypergraph orientation problem.

\begin{restatable}{theorem}{thmminimumhop}
	\label{thm_minimum:hop}
	There is an algorithm for hypergraph orientation under explorable uncertainty with predictions that, given $\gamma \in \mathbb{N}_{\geq 2}$, achieves a competitive ratio of $\min\{ (1 + \frac{1}{\gamma})(1 + k_h/\opt), \gamma\}$.
	If $\gamma = 2$, then the competitive ratio is $\min\{1.5 + k_h/\opt, 2\}$.
\end{restatable}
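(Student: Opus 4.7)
The plan is to adapt the round-based algorithm of~\cite{EdLMS22} for MST to the hypergraph orientation setting. The adaptation is mostly syntactic because~\cite{EdLMS22} already gives the key template of trading off consistency and robustness through rounds of size at most $\gamma$; the substantive contribution here is improving the error coefficient from the factor $5$ of~\cite{EdLMS22} down to $(1+1/\gamma)$ in front of $k_h/\opt$. This improvement will come from exploiting \Cref{lem:VC-property} in the final phase, which guarantees that after processing prediction-mandatory vertices the remaining work reduces to querying an arbitrary vertex cover of $\bar G$ plus exhaustive applications of \Cref{cor_min_left_mandatory}.

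I would organize the algorithm in three stages. \emph{Stage~1 (preprocessing):} iteratively query every vertex that is known mandatory by \Cref{cor_min_left_mandatory}; all such queries charge directly to $\OPT$. \emph{Stage~2 (prediction-guided rounds):} use the predicted weights $\w$ in \Cref{lema_mandatory_min} to identify the set of prediction-mandatory vertices. Process them in rounds of size at most $\gamma$: each round picks a prediction-mandatory vertex $v$, queries it, and---depending on whether the revealed $w_v$ matches the prediction in a way that confirms $v$'s mandatory status---either stops (when $v$ is actually mandatory) or follows up with up to $\gamma-1$ additional queries on vertices whose interval relations to $v$ were mispredicted. After each round, re-apply Stage~1 to sweep up any new known-mandatory vertices. \emph{Stage~3 (residual vertex cover):} when no prediction-mandatory vertex remains, invoke \Cref{lem:VC-property} to argue that querying any vertex cover of the current $\bar G$ (together with exhaustive \Cref{cor_min_left_mandatory} sweeps) solves every remaining hyperedge; use a vertex cover computed from the predicted weights.

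For the analysis, I would use a two-part charging scheme. Robustness: every round either contains a witness pair (\Cref{lem:witness}) or terminates after confirming a truly-mandatory vertex, and the bound $\gamma$ on round size yields competitive ratio at most $\gamma$. Error-dependent bound: in Stage~2 each round of size $s>1$ triggered by a prediction-mandatory $v$ that turns out not to be actually mandatory must correspond to at least $s-1$ ``hops'' between $w$ and $\w$ on vertices adjacent to $v$ in~$\bar G$, contributing $s-1$ to $k_h$; rounds that terminate correctly charge $1$ query to a truly-mandatory vertex plus a $1/\gamma$ overhead. In Stage~3, \Cref{lem:VC-property} lets us compare the vertex cover queried to one used by $\OPT$ on the same residual instance, and every discrepancy is again witnessed by at least one hop. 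Summing up yields $|\ALG| \le (1+1/\gamma)(|\OPT|+k_h)$, which rearranges to $(1+1/\gamma)(1+k_h/\opt)\cdot \opt$, and the overall competitive ratio is the minimum of this and~$\gamma$.

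The main obstacle will be the tight error accounting in Stages~2 and~3. In~\cite{EdLMS22}, the final vertex-cover phase is handled through a generic cycle/cut argument that costs a constant factor in the error coefficient; achieving coefficient $(1+1/\gamma)$ requires showing that every ``extra'' query made in Stage~3 (beyond the residual optimum) can be injectively mapped to a hop distance unit not already used to pay for a Stage~2 round. This injectivity is where \Cref{lem:VC-property} is crucial: because the residual instance has no known-mandatory vertex and a unique leftmost vertex per hyperedge, each edge of $\bar G$ on which prediction and reality differ can be traced back to a specific weight crossing $L_v$ or $U_v$ of some vertex~$v$, exactly as in the proof of \Cref{Theo_hop_distance_mandatory_distance}. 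Verifying that this mapping is injective across stages, and that round assignment in Stage~2 is consistent with it, is the technical heart of the proof.
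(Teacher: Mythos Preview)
Your high-level decomposition into three stages matches the paper, and you correctly identify \Cref{lem:VC-property} as the tool that tightens the error coefficient relative to~\cite{EdLMS22}. However, the heart of the argument---your Stage~2 round structure---has a genuine gap.

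You propose that each round ``picks a prediction-mandatory vertex $v$, queries it, and---depending on whether the revealed $w_v$ matches the prediction in a way that confirms $v$'s mandatory status---either stops \ldots\ or follows up.'' But querying $v$ does \emph{not} reveal whether $v$ is mandatory: by \Cref{lema_mandatory_min}, $v$'s mandatory status depends on whether some \emph{other} vertex's precise weight lies in $I_v$, which you do not learn by querying~$v$. Consequently you cannot decide, after querying $v$ alone, whether to stop or continue, and your robustness claim (``every round either contains a witness pair or terminates after confirming a truly-mandatory vertex'') is unsupported in the branch where $v$ is queried first and then $\gamma-1$ follow-ups are added. Nothing guarantees those $\gamma$ vertices contain a witness pair.

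The paper fixes this by reversing the order: it identifies a pair $(u,v)$ such that $\w_u$ \emph{enforces}~$v$ (i.e., $\w_u\in I_v$ and $u,v$ lie in a suitable hyperedge), together with a third vertex $w$ making $\{u,w\}$ a witness pair by \Cref{lem:witness}. The round first queries the witness pair $u,w$---so robustness is secured up front---and only then queries $v$, and only if the revealed $w_u$ actually lies in $I_v$, which by \Cref{lemma_min_kh_enforce_mandatory} certifies that $v$ is truly mandatory. This yields a \emph{strengthened witness set} of size three in which $\OPT$ must contain at least two vertices, and the $\gamma-2$ additional prediction-mandatory queries pad the round to size~$\gamma+1$ with at least $\gamma$ charged either to $\OPT$ or to hops. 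Your proposal is missing this ``query the enforcer, not the enforced'' mechanism and the resulting strengthened-witness-set accounting.

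A second, smaller issue: in Stage~3 you say to ``use a vertex cover computed from the predicted weights.'' The paper instead queries a \emph{minimum} vertex cover of the current $\bar G$ (which depends only on the interval structure, not on predictions). Minimality is what gives $|Q'|\le |\OPT\cap V'|$ in the consistency analysis; an arbitrary prediction-based cover would not yield this inequality.
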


\subsubsection*{Overview of the algorithmic ideas} 

To match the lower bound on the consistency and robustness tradeoff, we, for a given $\gamma \in \mathbb{N}_{\ge 2}$, have to guarantee $(1+\frac{1}{\gamma})$-consistency and $\gamma$-robustness.
As we observed before, following the offline algorithm based on the predicted information can lead to an arbitrarily bad robustness while using just the witness set algorithm will not improve upon $2$-consistency.
The idea of our algorithm (and the one given in~\cite{EdLMS22}) is to emulate the offline algorithm using the predicted information and to combine it with the witness set algorithm.

To illustrate this idea, assume $\gamma = 2$, so we have to guarantee $1.5$-consistency and $2$-robustness.
To combine both algorithms, we consider \emph{strengthened witness sets}~\cite{EdLMS22}, which are sets $W \subseteq V$ of size three such that every feasible solution queries at least \emph{two} members of $W$. If we repeatedly query strengthened witness sets, then we achieve a competitive ratio of at most $1.5$, matching our target consistency. Clearly, strengthened witness sets cannot always be identified based on the given graph and intervals alone. If we always could identify a strengthened witness set when the instance is not solved yet, then we would have a $1.5$-competitive algorithm by repeatedly querying such sets, contradicting the lower bound of $2$ on the adversarial competitive ratio. Therefore, we have to identify strengthened witness sets based on the predicted information, i.e., we identify sets $W$ of cardinality three such that each feasible solution contains at least two members of the set \emph{if} the predictions of the elements in $W$ are correct. Since we can only identify strengthened witness sets based on the predicted information, we cannot afford to just query the complete set, as we might lose the guarantee on the set if the predictions are faulty, which could violate our target $2$-robustness. To that end, we query such sets in a carefully selected order that allows us to detect errors that might cause us to violate the $2$-robustness after at most two queries within the set. We select the first two queries within the set in such a way that they form a witness set. So even if there is an error within these queries, we can discard the third query and the two queries we already executed will never violate $2$-robustness as they form a witness set. Furthermore, we will show that we can charge all executed queries that violate the consistency bound to a distinct error.

Our algorithm does this repeatedly until we cannot identify strengthened witness sets anymore, not even by using the predictions. 
After that, the instance has a certain structure that allows us to solve it with an adjusted second phase of the offline algorithm while achieving the optimal consistency and robustness tradeoff of~\Cref{theo_minimum_combined_lb} with linear error dependency on $k_h$.

If $\gamma > 2$, then the first phase of our algorithm repeatedly identifies a strengthened witness set \emph{and} $\gamma-2$ prediction mandatory vertices. It then queries the $\gamma-2$ prediction mandatory vertices and afterwards proceeds to query the strengthened witness sets as described above. We show that this adjustment allows us to achieve the optimal tradeoff with linear dependency on $k_h$ for every integral $\gamma \ge 2$.

\subsubsection*{Precise algorithm and formal analysis}

As mentioned above, our algorithm is based on identifying strengthened witness sets using the predicted weights, i.e., sets $W \subseteq V$ with $|W|= 3$ such that every feasible solution contains at least two elements of $W$ if the predictions are accurate. We want to be able to query $W$ in such a way that (i) the first two queries in $W$ are a witness set and (ii) after the first two queries in $W$ we either have detected a prediction error or can guarantee that each feasible solution indeed contains two elements of $W$ (no matter if the predicted weights of vertices outside $W$ are correct or not). 

To achieve this, we identify prediction mandatory vertices that are not only mandatory if all predicted weights are correct but become mandatory if a \emph{single} predicted weight is correct. To that end, we use the following definition.

\begin{definition}
	We say that a predicted weight~$\pred{w}_u$ {\em enforces} another vertex~$v$ if $u$ and $v$ have non-trivial uncertainty intervals, $\pred{w}_u \in I_v$, and $u,v \in S$, where~$S$ is a hyperedge such that either~$v$ is leftmost in~$S$, or~$u$ is leftmost in~$S$ and~$v$ is leftmost in $S \setminus \{u\}$.
\end{definition}

If the predicted weight $\w_u$ of a vertex $u$ enforces another vertex $v$ and the predicted weight of $u$ is accurate, then after querying $u$ we know for sure that $v$ is indeed mandatory. The following lemma formulates this property.

\begin{lemma} \label{lemma_min_kh_enforce_mandatory}
	If $\pred{w}_u$ enforces~$v$, then $\{v, u\}$ is a witness set.
	Also, if $w_u \in I_v$, then $v$ is mandatory.
\end{lemma}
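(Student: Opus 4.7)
The plan is to derive both statements from the already established characterizations, namely Lemma~\ref{lem:witness} for the witness-set claim and Lemma~\ref{lema_mandatory_min} for the mandatory claim, exploiting the structural information packed into the definition of ``enforces''.

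For the first claim, I would unpack the definition: there exists a hyperedge $S$ containing both $u$ and $v$ such that at least one of $u,v$ is leftmost in $S$ (either $v$ is, or $u$ is and $v$ is leftmost in $S\setminus\{u\}$). Since predicted weights always lie in their own intervals, $\w_u\in I_u$, and by the enforces hypothesis $\w_u\in I_v$. Hence $\w_u\in I_u\cap I_v$, so the two intervals overlap. All three hypotheses of Lemma~\ref{lem:witness} are met and $\{v,u\}$ is a witness set.

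For the second claim, I would apply Lemma~\ref{lema_mandatory_min} to the same hyperedge $S$ and case-split on whether $v$ is a minimum-weight vertex of $S$. If it is, condition $(i)$ of the lemma is immediate from $w_u\in I_v$ and $u\in S\setminus\{v\}$. Otherwise pick any minimum-weight $x^*\in S$, so $x^*\neq v$ and $w_{x^*}<w_v$, and verify $w_{x^*}\in I_v=(L_v,U_v)$, which yields condition $(ii)$. The upper bound $w_{x^*}<w_v<U_v$ is immediate from the choice of $x^*$ together with $w_v\in I_v$. For the lower bound I would distinguish $x^*=u$, in which case $w_{x^*}=w_u>L_v$ by the assumption $w_u\in I_v$, from $x^*\in S\setminus\{u,v\}$, in which case the leftmost clause in the enforces definition gives $L_{x^*}\geq L_v$ (either because $v$ is leftmost in $S$, or because $v$ is leftmost in $S\setminus\{u\}$ and $x^*\in S\setminus\{u\}$); since $I_{x^*}$ is non-trivial and $w_{x^*}\in I_{x^*}$, this yields $w_{x^*}>L_{x^*}\geq L_v$.

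The whole proof is really bookkeeping against two previously proved lemmas, so there is no serious obstacle. The one subtlety worth keeping in mind is that uncertainty intervals are open: I must use the strict inequalities $w_v<U_v$ and $w_{x^*}>L_{x^*}$ to ensure that $w_{x^*}$ lands strictly inside $I_v$ rather than only on its closure, and I should record early on that both $I_u$ and $I_v$ are non-trivial (which is part of the definition of enforces) so that these strict bounds are legitimate.
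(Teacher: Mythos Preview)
Your proposal is correct and takes essentially the same approach as the paper: both derive the witness-set claim from Lemma~\ref{lem:witness} using $\w_u\in I_u\cap I_v$, and both derive the mandatory claim from Lemma~\ref{lema_mandatory_min} by arguing that $I_v$ contains the minimum precise weight of~$S$. The paper organizes its case split slightly differently (it groups ``$v$ is leftmost in~$S$'' with ``$v$ has minimum weight'' and dispatches that via Corollary~\ref{cor_min_left_mandatory}), whereas you split only on whether $v$ is a minimum and then explicitly bound $w_{x^*}$ against both endpoints of $I_v$; the two routes are equivalent, and your version spells out details the paper leaves implicit.
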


\begin{proof}
	Since $\pred{w}_u$ enforces~$v$, there must be a hyperedge $S$ with $v,u \in S$ such that $\pred{w}_u \in I_v$ and either $v$ is leftmost in $S$ or $u$ is leftmost in $S$ and $v$ is leftmost in $S \setminus \{u\}$.
	The first claim follows from Lemma~\ref{lem:witness} as one of $u,v$ is leftmost in $S$ and $I_v \cap I_u \not= \emptyset$.

	If~$v$ has minimum precise weight in~$S$ or is leftmost in~$S$, then the second claim follows from Lemma~\ref{lema_mandatory_min} and Corollary~\ref{cor_min_left_mandatory}.
	Otherwise, the fact that $w_u \in I_v$ and that~$v$ is leftmost in $S \setminus \{u\}$ implies that~$I_v$ contains the minimum precise weight, so the claim follows from Lemma~\ref{lema_mandatory_min}.
\end{proof}

We can now define our Algorithm~\ref{ALG_min_beta}. Within the definition of the algorithm, the \emph{current instance} always refers to the problem instance obtained after executing all previous queries.
We say that a vertex is prediction mandatory for the current instance, if it is mandatory if the predicted weight of all \emph{not yet queried} vertices are correct.

The algorithm ensures that the instance never contains known mandatory vertices according to~\Cref{cor_min_left_mandatory} by exhaustively querying those vertices (cf.~Lines~\ref{line_min_beta_kh_mandatory_first},~\ref{line_min_beta_kh_mandatory_loop} and~\ref{line_min_beta_kh_mandatory_second}).
The Lines~\ref{lin_min_beta_cond_trio} to~\ref{line_min_beta_query_first} identify and query strengthened witness sets. This is done by identifying a witness set $\{u,w\}$ such that the predicted weight $\w_u$ enforces another vertex $v$. Only if $w_u \in I_v$, the algorithm also queries $v$. If that is the case, then~\Cref{lemma_min_kh_enforce_mandatory} implies that $v$ is mandatory and, therefore, every feasible solution must query at least two members of $\{u,v,w\}$. If such a triple $\{u,v,w\}$ of vertices does not exist but there still is a vertex $u$ such that $\w_u$ enforces a vertex $v$, then the algorithm just queries $v$ in Line~\ref{line_min_beta_query_first}. We will prove that this never violates the target consistency or robustness.
If $\gamma > 2$, then the algorithm precedes this step by querying (up-to) $\gamma - 2$ predictions mandatory vertices in Lines~\ref{line_min_beta_kh_calc_predict_first} to~\ref{line_min_beta_kh_calc_predict_second}. The algorithm does this repeatedly as long as possible and afterwards queries a minimum vertex cover of the current vertex cover instance in Line~\ref{p2-line2-kh}. Exploiting~\Cref{lem:VC-property}, the remaining instance can then be solved by exhaustively querying vertices that are mandatory due to~\Cref{cor_min_left_mandatory} (cf.~Line~\ref{p2-line3-kh}).

\begin{algorithm}[tb]
	\KwIn{Hypergraph $H=(V,E)$, intervals $I_v$ and predictions $\w_v \in I_v$ for all $v \in V$}
	\label{line_min_beta_start}
	\lWhile{there is a known mandatory vertex~$v$ by~\Cref{cor_min_left_mandatory}}{query $v$\label{line_min_beta_kh_mandatory_first}}
	\Repeat{the current instance has no prediction mandatory vertices \label{line_min_beta_kh_cond_loop}}{
		$Q \leftarrow \emptyset$\; 
		$P \leftarrow$ set of prediction mandatory vertices for the current instance\; \label{line_min_beta_kh_calc_predict_first}
		\While{$P \neq \emptyset$ \KwAnd $|Q| < \gamma - 2$ \label{line_while}}{
			pick and query some $u \in P$; \quad $Q \leftarrow Q \cup \{u\}$\; \label{line_min_beta_kh_predict}
			
			\lWhile{there is a known mandatory vertex~$v$ by~\Cref{cor_min_left_mandatory}}{query $v$\label{line_min_beta_kh_mandatory_loop}}
			$P \leftarrow$ set of prediction mandatory vertices for  {the current} instance\; \label{line_min_beta_kh_calc_predict_second}
		}
		\uIf{$\exists \mbox{ distinct } u,v,w$ s.t.~$\pred{w}_u$ enforces $v$ and $\{u,w\}$ is a witness set for the current instance by \Cref{lem:witness} \label{lin_min_beta_cond_trio}} {
			query $u, w$\; \label{line_min_beta_wit_trio}
			\lIf{$w_u \in I_v$}{query $v$ \label{line_min_beta_mand_trio}}
		} \lElseIf{$\exists v,u$ such that $\pred{w}_u$ enforces $v$ \label{line_min_beta_pair}} {
			query $v$ \label{line_min_beta_query_first}
		}
		\lWhile{there is a known mandatory vertex~$v$ by~\Cref{cor_min_left_mandatory}}{query $v$\label{line_min_beta_kh_mandatory_second}}
	}
	Compute and query a minimum vertex cover~$Q'$ for the current vertex cover instance\label{p2-line2-kh}\; 
	\lWhile{there is a known mandatory vertex~$v$ by~\Cref{cor_min_left_mandatory}}{query $v$\label{p2-line3-kh}}
	\caption{Learning-augmented algorithm for the hypergraph orientation problem under explorable uncertainty with respect to the hop distance $k_h$}
	\label{ALG_min_beta}
\end{algorithm}

We proceed by proving three more important properties of the algorithm that will help us to prove~\Cref{thm_minimum:hop}. 

The~\Cref{lemma_min_kh_enforce_mandatory} implies that if $\pred{w}_u$ enforces~$v$ and the predicted value of $u$ is correct, then $v$ is mandatory. This directly implies that $v$ is prediction mandatory for the current instance: If the predicted weight of all not yet queried vertices are correct, then the predicted weight of $u$ is correct and $v$ is mandatory. This leads to the following corollary.

\begin{coro}
	\label{lemma_min_beta_kh_enforce_predmand}
	Consider a point of execution of the algorithm in which a predicted weight~$\pred{w}_u$ enforces another vertex~$v$.
	It holds that~$v$ is prediction mandatory for the current instance.
\end{coro}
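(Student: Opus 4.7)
The plan is to derive the corollary as a near-immediate consequence of Lemma~\ref{lemma_min_kh_enforce_mandatory} together with the definition of ``prediction mandatory for the current instance.'' Recall that $v$ is prediction mandatory for the current instance if, under the hypothetical assumption that the predicted weights of all not-yet-queried vertices coincide with their true weights, $v$ belongs to every feasible query set for the current instance.

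First, I would note that the definition of ``$\pred{w}_u$ enforces $v$'' requires both $u$ and $v$ to have non-trivial uncertainty intervals, and in particular $\pred{w}_u \in I_v$. Since $u$ has a non-trivial interval at the considered point of execution, $u$ has not been queried yet. Therefore, the hypothetical assumption underlying the notion of prediction mandatory includes the assumption $w_u = \pred{w}_u$. Combining this with $\pred{w}_u \in I_v$ gives $w_u \in I_v$ in the hypothetical instance.

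Now I would invoke Lemma~\ref{lemma_min_kh_enforce_mandatory}: since $\pred{w}_u$ enforces $v$ and we have $w_u \in I_v$ under the hypothetical assumption, the lemma concludes that $v$ is mandatory in that hypothetical instance. By definition, this is precisely the statement that $v$ is prediction mandatory for the current instance, proving the corollary.

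The only subtlety to be careful about, and essentially the ``main obstacle'' in a proof that is otherwise a one-line unpacking of definitions, is making sure that the hypothetical replacement of intervals by predicted values used in the definition of prediction mandatory is compatible with the hypothesis of Lemma~\ref{lemma_min_kh_enforce_mandatory}. Since the lemma only needs $w_u \in I_v$ (and this is guaranteed by $\pred{w}_u \in I_v$ once we replace $w_u$ with $\pred{w}_u$ for the not-yet-queried vertex $u$), this compatibility is immediate, and no further work beyond this verification is required.
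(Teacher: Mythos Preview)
Your proposal is correct and follows essentially the same approach as the paper: apply Lemma~\ref{lemma_min_kh_enforce_mandatory} under the hypothetical assumption that the predicted weights of all not-yet-queried vertices are correct, so that $w_u=\pred{w}_u\in I_v$ and hence $v$ is mandatory, i.e., prediction mandatory for the current instance. Your extra remark that $u$ must be unqueried because it has a non-trivial interval is a nice clarification that the paper leaves implicit.
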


Next, we show that there is at most one iteration of the repeat-loop that executes less than $\gamma-2$ queries in Line~\ref{line_min_beta_kh_predict} or no queries in Lines~\ref{line_min_beta_wit_trio}--\ref{line_min_beta_query_first}.

\begin{lemma}
	\label{lemma_min_beta_kh_unique_loop}
	Every iteration of the repeat-loop in Algorithm~\ref{ALG_min_beta} (cf.~Lines~$1$--\ref{line_min_beta_kh_cond_loop}) apart from the final one executes $\gamma-2$ queries in Line~\ref{line_min_beta_kh_predict} and at least one query in Lines~\ref{line_min_beta_wit_trio}--\ref{line_min_beta_query_first}. Furthermore, if an iteration executes a query in Lines~\ref{line_min_beta_wit_trio}--\ref{line_min_beta_query_first}, then it executes $\gamma-2$ queries in Line~\ref{line_min_beta_kh_predict}.
\end{lemma}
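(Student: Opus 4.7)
My plan is to prove both halves by establishing a single structural equivalence for the instance at any point where the algorithm has exhausted Corollary~\ref{cor_min_left_mandatory}: there exists a prediction mandatory vertex if and only if there exists a pair $(u,v)$ with $\w_u$ enforcing~$v$. The direction from enforcing pair to prediction mandatory is already Corollary~\ref{lemma_min_beta_kh_enforce_predmand}; the converse is the technical heart of the argument.

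Given the equivalence, both statements reduce to tracking the control flow of a single iteration. For the ``furthermore'' clause, if an iteration executes fewer than $\gamma-2$ queries in Line~\ref{line_min_beta_kh_predict}, the while loop on Line~\ref{line_while} must have exited with $P=\emptyset$; by the equivalence, no enforcing pair exists, so neither Line~\ref{lin_min_beta_cond_trio} nor Line~\ref{line_min_beta_pair} triggers and no query happens in Lines~\ref{line_min_beta_wit_trio}--\ref{line_min_beta_query_first}. For the main clause I would argue by contrapositive: suppose an iteration either makes fewer than $\gamma-2$ queries in Line~\ref{line_min_beta_kh_predict} or no query in Lines~\ref{line_min_beta_wit_trio}--\ref{line_min_beta_query_first}. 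In the first sub-case the previous argument already shows that additionally no query happens in Lines~\ref{line_min_beta_wit_trio}--\ref{line_min_beta_query_first}; in both sub-cases the equivalence yields $P=\emptyset$ at the end of the iteration. Since the preceding Line~\ref{line_min_beta_kh_mandatory_first} or Line~\ref{line_min_beta_kh_mandatory_loop} has exhausted Corollary~\ref{cor_min_left_mandatory} and no queries occur afterwards, the concluding Line~\ref{line_min_beta_kh_mandatory_second} is a no-op, and the repeat condition on Line~\ref{line_min_beta_kh_cond_loop} terminates the loop. Hence the iteration is the final one.

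The main obstacle is the converse of the equivalence. To prove it I would fix a prediction mandatory vertex $v$ together with a witnessing hyperedge $S$ from Lemma~\ref{lema_mandatory_min} applied to the predicted weights, and let $x$ be the leftmost vertex of $S$. Because the algorithm has exhausted Corollary~\ref{cor_min_left_mandatory}, $x$ is not known mandatory, which forces $I_y\not\subseteq I_x$, equivalently $U_y>U_x$, for every $y\in S\setminus\{x\}$. If $v=x$, the vertex $u$ from Lemma~\ref{lema_mandatory_min} directly witnesses $\w_u$ enforcing~$v$. Otherwise, in case~(i) of Lemma~\ref{lema_mandatory_min} the minimality of $\w_v$ gives $L_x<L_v\le\w_v\le\w_x<U_x$, so $\w_v\in I_x$ and $\w_v$ enforces~$x$; the analogous argument in case~(ii) with min-$\w$ vertex $u\neq x$ yields $\w_u\in I_x$, so $\w_u$ enforces~$x$. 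The remaining sub-case is case~(ii) with $u=x$: if $v$ is leftmost in $S\setminus\{u\}$ then $\w_u$ enforces $v$ directly; otherwise, for $y$ leftmost in $S\setminus\{u\}$, combining $\w_u\in I_v$, $L_y\le L_v$, and $U_y>U_u$ gives $\w_u\in I_y$, so $\w_u$ enforces~$y$. Residual boundary issues at interval endpoints can be handled by a consistent tie-breaking rule on leftmost vertices.
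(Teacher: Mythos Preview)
Your proposal is correct and follows essentially the same approach as the paper: both arguments hinge on the equivalence, in an instance with Corollary~\ref{cor_min_left_mandatory} exhausted, between the existence of a prediction mandatory vertex and the existence of an enforcing pair. The only presentational difference is that you state this equivalence explicitly and prove the harder direction (prediction mandatory $\Rightarrow$ enforcing pair) by a case analysis on Lemma~\ref{lema_mandatory_min}, whereas the paper argues the contrapositive directly by observing that the absence of any enforcing pair forces, for each hyperedge $S$ with leftmost vertex~$v$, both $\pred{w}_u\notin I_v$ and $\pred{w}_v\notin I_u$ for all unqueried $u\in S\setminus\{v\}$, whence no vertex is prediction mandatory; the control-flow reasoning that then shows such an iteration must be the final one is identical in both proofs.
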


\begin{proof}
	Consider an iteration of the repeat-loop that executes less than $\gamma-2$ queries in Line~\ref{line_min_beta_kh_predict}. 
	Then, the while-loop from Line~\ref{line_while} to Line~\ref{line_min_beta_kh_calc_predict_second} terminates because the current instance has no prediction mandatory vertices.
	This means that the algorithm also does not execute any queries in Lines~\ref{line_min_beta_wit_trio}--\ref{line_min_beta_query_first} as there cannot be a predicted weight $\w_u$ that enforces another vertex $v$ because $v$ would be prediction mandatory for the current instance by~\Cref{lemma_min_beta_kh_enforce_predmand}.
	Since the algorithm does not execute queries in Lines~\ref{line_min_beta_wit_trio}--\ref{line_min_beta_query_first}, the current instance still does not contain prediction mandatory vertices at the end of the current iteration of the repeat-loop, which implies that the loop terminates.
	
	To conclude the proof, consider an iteration of the repeat-loop that executes $\gamma-2$ queries in Line~\ref{line_min_beta_kh_predict} but no queries in Lines~\ref{line_min_beta_wit_trio}--\ref{line_min_beta_query_first}.
	No queries in Lines~\ref{line_min_beta_wit_trio}--\ref{line_min_beta_query_first} imply that there is no $\w_u$ that enforces a vertex $v$.
	Since the current instance contains no vertices that are mandatory by~\Cref{cor_min_left_mandatory}, this means that for each hyperedge $S$ we have that (i) $\w_u \not\in I_v$ for the unique leftmost vertex $v$ in $S$ and all not yet queried vertices $u \in S\setminus \{v\}$ and (ii) $\w_v \not\in I_u$ for the unique leftmost vertex $v$ in $S$ and all not yet queried vertices $u \in S\setminus \{v\}$. If the predictions of the not yet queried vertices are correct, then we can find the orientation of each hyperedge $S$ by either querying the unique leftmost vertex $v$ in $S$ or all not yet queried vertices in $S\setminus\{u\}$. This implies that no vertex is prediction mandatory for the current instance and, therefore, the loop terminates.
\end{proof}

We prove the following lemma that helps us to prove that queries in Line~\ref{line_min_beta_pair} will never violate our target consistency or robustness.

\begin{lemma}
	\label{lemma_min_no_witness_implies_solved}
	Let $u, v$ be a pair that satisfies the condition in Line~\ref{line_min_beta_pair} of Algorithm~\ref{ALG_min_beta} leading to a query of~$v$.
	After querying $v$, vertex $u$ will either become mandatory by~\Cref{cor_min_left_mandatory} and be queried in the next execution of Line~\ref{line_min_beta_kh_mandatory_second} or for each hyperedge $S$ containing $u$ we either know the orientation $S$ or know that $u$ cannot be of minimum weight in $S$.
\end{lemma}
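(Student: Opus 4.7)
The plan is to perform a case analysis on each hyperedge $T$ containing $u$, based on which vertex is leftmost in $T$. I will use three structural facts: (a)~by the definition of enforcement, there is a hyperedge $S$ with $u,v \in S$ and $\w_u \in I_v$ such that either $v$ is leftmost in $S$ or ($u$ is leftmost in $S$ and $v$ is leftmost in $S\setminus\{u\}$); (b)~since the condition in Line~\ref{lin_min_beta_cond_trio} fails, no $w \in V\setminus\{u,v\}$ makes $\{u,w\}$ a witness pair; (c)~since Line~\ref{line_min_beta_kh_mandatory_second} has been exhaustively applied just before Line~\ref{line_min_beta_query_first}, for every not-yet-solved hyperedge with leftmost $y$, no other vertex $w$ in that hyperedge satisfies $I_w \subseteq I_y$.

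For the easy cases, first suppose $u$ is leftmost in $T$. Fact~(b) together with Lemma~\ref{lem:witness} forces $I_u \cap I_w = \emptyset$ for every $w \in T \setminus \{u,v\}$, giving $U_u \le L_w$ and hence $w_u < w_w$ for such $w$. If $v \notin T$ then $u$ is the minimum of $T$; if $v \in T$, then $L_u \le L_v < w_v$ rules out $w_v \le L_u$, so after querying~$v$ either $w_v \in I_u$ (making $u$ mandatory by Corollary~\ref{cor_min_left_mandatory}) or $w_v \ge U_u$ (so $u$ is the minimum of~$T$). Now suppose $u$ is not leftmost in $T$ and the leftmost $z \in T$ satisfies $z \ne v$; the analogous witness-pair argument for $\{u,z\}$ yields $I_u \cap I_z = \emptyset$, so $U_z \le L_u < w_u$, hence $w_z < w_u$ and $u$ is not the minimum of~$T$.

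The main obstacle is the remaining case, where $v$ is leftmost in $T$ and $u \in T \setminus \{v\}$, since the witness-pair condition gives no direct information here. Fact~(c) implies $I_w \not\subseteq I_v$ for every $w \in T \setminus \{v\}$, which combined with $L_v \le L_w$ yields $U_w > U_v > w_v$. I will analyse the cascade of mandatory queries performed by Line~\ref{line_min_beta_kh_mandatory_second} after $v$ is queried, maintaining the invariant that whenever the current leftmost of~$T$ is an unqueried vertex $z'$, one has $L_{z'} \le w_v < U_{z'}$ (the upper bound from $U_{z'} > U_v > w_v$; the lower bound because otherwise $v$ with new lower limit~$w_v$ would still be leftmost of~$T$). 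Hence $w_v \in I_{z'}$ and Corollary~\ref{cor_min_left_mandatory} makes $z'$ mandatory, triggering its query, so the cascade continues until the leftmost of~$T$ is a queried vertex. If at some stage $u$ is leftmost then $u$ becomes mandatory and queried during Line~\ref{line_min_beta_kh_mandatory_second} (outcome~A); otherwise the cascade terminates with some queried $q^* \ne u$ as leftmost, and $w_{q^*}$ is at most every other lower limit in~$T$, so in particular $w_{q^*} \le L_u < w_u$ and $w_{q^*} \le w_q$ for every other queried $q \in T$, whence $q^*$ is the minimum of~$T$ and $u$ is not (outcome~B). Combining the conclusions across all hyperedges $T \ni u$ completes the proof.
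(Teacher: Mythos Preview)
Your proof is correct and uses the same three-way case split on the leftmost vertex of each hyperedge $T \ni u$ as the paper does. The difference lies in the case where $v$ is leftmost in~$T$: the paper subdivides by whether $I_u$ intersects $I_v$ and, when $w_v \in I_u$, simply asserts that $u$ becomes mandatory via Corollary~\ref{cor_min_left_mandatory}; you instead run an explicit cascade argument on Line~\ref{line_min_beta_kh_mandatory_second}, showing that as long as the current leftmost of $T$ is unqueried it contains $w_v$ and is therefore known mandatory, so the cascade eventually either reaches $u$ (outcome~A) or terminates at a queried vertex with weight at most $L_u$ (outcome~B). Your cascade makes explicit why $u$ need not be \emph{immediately} leftmost after querying $v$ --- there may be vertices $z \in T$ with $L_v \le L_z < L_u$ --- yet still becomes known mandatory during the exhaustive loop; the paper's one-line assertion leaves this step implicit. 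One minor slip: your Fact~(c) should cite Line~\ref{line_min_beta_kh_mandatory_loop} (or Line~\ref{line_min_beta_kh_mandatory_first}, or the previous iteration's Line~\ref{line_min_beta_kh_mandatory_second}) rather than Line~\ref{line_min_beta_kh_mandatory_second} itself, which comes \emph{after} Line~\ref{line_min_beta_query_first}.
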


\begin{proof}
	Consider the instance before~$v$ is queried.
	Due to the failed test in Line~\ref{lin_min_beta_cond_trio}, for every hyperedge~$S$ containing~$v$, the following facts hold:
	\begin{enumerate}
		\item If~$u$ is leftmost in~$S$, then the orientation of $S$ is already known, or $v \in S$ and $v$ is the only vertex in~$S \setminus \{u\}$ with an interval that intersects~$I_u$.
		\item If~$u$ is not leftmost in~$S$ but intersects the interval of the leftmost vertex $v'$ in~$S$, then $v = v'$.
		\item If~$u$ is not leftmost in~$S$ and $I_u$ does not intersect the interval of the leftmost vertex in~$S$, then~$u$ is certainly not of minimum weight in~$S$.
	\end{enumerate}
	If condition~(1) holds and the orientation of~$S$ is not known then, after querying~$v$, either $w_v \notin I_u$ and the orientation of~$S$ is determined, or $w_v \in I_u$ and $u$ becomes mandatory by~\Cref{cor_min_left_mandatory}.
	
	If condition~(2) holds, then either $w_v \not\in I_u$ and 
	$u$ is certainly not of minimum weight in $S$, or $w_v \in I_u$ and $u$ becomes mandatory due to~Corollary~\ref{cor_min_left_mandatory}.
	The result follows trivially if condition~(3) holds.
\end{proof}

Intuitively, the lemma means that if vertex $u$ does not become mandatory by~\Cref{cor_min_left_mandatory} after querying $v$, then the algorithm will \emph{never} even consider vertex $u$ anymore as all hyperedges containing $u$ are either resolved or have an orientation that is completely independent of vertex $u$. If that is the case, then the algorithm queries \emph{exactly} one vertex of the witness set $\{u,v\}$. This can never lead to a violation of the target consistency and robustness as even the optimal solution has to query at least one member of $\{u,v\}$. 
If on the other hand $u$ becomes mandatory, then either the predicted weight $\w_u$ is correct and $v$ is also mandatory by~\Cref{lemma_min_kh_enforce_mandatory} or the predicted weight of $\w_u$ is wrong. In the former case even $\OPT$ queries $u$ and $v$, so queries to those vertices certainly do not lead to a violation of the target consistency. In the latter case, $\{u,v\}$ is still a witness set and we will show that we can charge one of the queries against a prediction error caused by vertex $u$, so queries to $\{u,v\}$ do not violate robustness or error-dependency.

Using these insights, we are finally ready to prove~\Cref{thm_minimum:hop}.

\begin{proof}[Proof of~\Cref{thm_minimum:hop}]
	Before we prove the performance bounds, we remark that the algorithm clearly solves the given instance by definition of the final two lines of the algorithm and~\Cref{lem:VC-property}.
	Next, we separately show that the algorithm executes at most $\gamma \cdot \opt$ queries and at most $(1 + \frac{1}{\gamma})(1 + \frac{k_h}{\opt}) \cdot \opt$ queries.
	
	\paragraph{Proof of $|\ALG| \le \gamma \cdot \opt$ (robustness).} We start by proving the robustness bound.
	Vertices queried in Line~\ref{line_min_beta_mand_trio} are mandatory due to~\Cref{lemma_min_kh_enforce_mandatory} and, thus, in any feasible solution. Clearly, querying those vertices will never worsen the competitive ratio of $\ALG$.
	To analyze all further queries executed by $\ALG$, fix an optimum solution $\OPT$.
	
	Consider an iteration of the repeat-loop in which some query is performed in Lines~\ref{line_min_beta_wit_trio}--\ref{line_min_beta_pair}.
	Let~$P'$ be the set of vertices queried in Lines~\ref{line_min_beta_kh_predict},~\ref{line_min_beta_wit_trio} and~\ref{line_min_beta_query_first}.
	If the iteration queries a vertex $v$ in Line~\ref{line_min_beta_query_first} that is enforced by the predicted weight $\w_u$ of a vertex $u$, then we include $u$ in $P'$ independent of whether the algorithm queries $u$ at some point or not.
	Note that, by~\Cref{lemma_min_no_witness_implies_solved}, such a vertex $u$ is considered in exactly one iteration as it either is queried directly afterwards in Line~\ref{line_min_beta_kh_mandatory_second} or will never be considered again be the algorithm (as we argued above). 
	Using this and the fact that we never query vertices multiple times, we can conclude that the sets $P'$ of different iterations are pairwise disjoint.
	We continue by showing that all such sets $P'$ are also witness sets of size at most $\gamma$ and, thus, querying them never violates the $\gamma$-robustness.
	
	By~\Cref{lemma_min_kh_enforce_mandatory}, $P'$ contains exactly $\gamma-2$ vertices queried in Line~\ref{line_min_beta_kh_predict}.
	Furthermore, $P'$ contains either two vertices $u$ and $w$ queried in Line~\ref{line_min_beta_wit_trio} or two vertices $u$ and $v$ as considered in Line~\ref{line_min_beta_query_first}. Either way, we have $|P'| = \gamma$.
	
	Next, we argue that $P'$ is a witness set.
	If Line~\ref{line_min_beta_wit_trio} is executed, then note that $\{u, w\} \subseteq P'$ is a witness set.
	If a query is performed in Line~\ref{line_min_beta_query_first}, then note that $\{v, u\}\subseteq P'$ is a witness set.
	In both cases, $P'$ contains a witness set and, therefore, is a witness set itself.
	We conclude that $P'$ is a witness set of size $\gamma$ and, thus, querying $P'$ never worsens the competitive ratio below~$\gamma$.
	
	Let $V'$ be the set of unqueried vertices in Line~\ref{line_min_beta_kh_calc_predict_first} during the iteration of the repeat-loop consisting of Lines~1--\ref{line_min_beta_kh_cond_loop} in which no query is performed in Lines~\ref{line_min_beta_wit_trio}--\ref{line_min_beta_pair}. 
	Recall that Lemma~\ref{lemma_min_beta_kh_unique_loop} states that there is at most one such iteration and it has to be the last iteration of the loop.
	If no such iteration exists, then let $V'$ denote the set of unqueried vertices before Line~\ref{p2-line2-kh}.
	
	If the orientation is not yet known at this point, then the instance is not yet solved and we have $|\OPT \cap V'| \geq 1$.
	Furthermore, $|Q| \leq \gamma - 2$ holds for the set of queries executed in the iteration of the repeat-loop in which no query is performed in Lines~\ref{line_min_beta_wit_trio}--\ref{line_min_beta_pair}.
	This implies $|Q| \leq (\gamma - 2) \cdot |\OPT \cap V'|$. 
	
	Let $Q'$ denote the set of all vertices queried in Lines~\ref{p2-line2-kh}~and~\ref{p2-line3-kh}. Since the queries of Line~\ref{p2-line2-kh} are a minimum vertex cover for the current instance, they are a lower bound on $|(\OPT \cap V')\setminus Q|$ by~\Cref{lem:witness}. Additionally, all queries of Line~\ref{p2-line3-kh} are mandatory and thus their number is at most $\opt$. This implies that $|Q'| \le 2 \cdot |(\OPT \cap V') \setminus Q|$.
	Combining the bounds for $|Q|$ and $|Q'|$, we get $|Q| + |Q'| \leq \gamma \cdot |\OPT \cap V'|$.
	
	All remaining vertices queried by $\ALG$ have been queried in  Lines~\ref{line_min_beta_kh_mandatory_first},~\ref{line_min_beta_kh_mandatory_second} and~\ref{line_min_beta_kh_mandatory_loop}. 
	Thus, they are mandatory, part of any feasible solution, and never violate the $\gamma$-robustness. This concludes the proof of the robustness bound.
	
	\paragraph{Proof of $|\ALG| \le (1 + \frac{1}{\gamma})(1 + \frac{k_h}{\opt}) \cdot \opt$ (consistency and error-dependency).} 
	We continue by proving consistency and linear error-dependency.
	Fix an optimum solution $\OPT$.
	Let $\oj(u)$ be the number of vertices $v$ such that $v, u \in S$ for some hyperedge $S$, and the value of $v$ passes over an endpoint of $u$, i.e., $w_v \le L_u < \w_v$, $\w_v \le L_u < w_v$, $\w_v < U_u \le w_v$ or $w_v < U_u \le \w_v$.
	From the arguments in the proof of Theorem~\ref{Theo_hop_distance_mandatory_distance}, it can be seen that, for each vertex~$u$ that is prediction mandatory at some point during the execution of the algorithm (not necessarily for the initially given instance) and is not in $\OPT$, we have that $\oj(u) \geq 1$.
	Similar, if $u$ at some point during the execution is not prediction mandatory for the current instance but turns out to be mandatory for the precise weights, then $\oj(u) \geq 1$.
	For a subset $U \subseteq V$, let $\oj(U) = \sum_{u \in U} \oj(u)$.
	Note that $k_h = \oj(V)$ holds by reordering summations.
	
	In the following, we will show for various disjoint subsets $S \subseteq V$ that $|S \cap \ALG| \le (1+\frac{1}{\gamma})\cdot (|\OPT \cap S| + \oj(S))$.
	The union of the subsets $S$ will contain $\ALG$, so it is clear that the bound of $(1+\frac{1}{\gamma}) \cdot (1 + \frac{k_h}{\opt})$ on the competitive ratio of the algorithm follows.
	
	Vertices queried in Line~\ref{line_min_beta_kh_mandatory_loop} are in any feasible solution, so the set~$P_0$ of these \jnew{vertices} satisfies $|P_0| \leq |\OPT \cap P_0|$.
	
	If there is an execution of the loop consisting of Lines~1--\ref{line_min_beta_kh_cond_loop} that does not perform queries in Lines~\ref{line_min_beta_wit_trio}--\ref{line_min_beta_pair}, then let~$P_1$ be the set of vertices queried in Line~\ref{line_min_beta_kh_predict}.
	Every vertex~$u \in P_1$ is prediction mandatory, so if $u \notin \OPT$ then $\oj(u) \geq 1$.
	Thus we have that $|P_1| \leq |P_1 \cap \OPT| + \oj(P_1)$.
	
	Let $V'$ be the set of unqueried vertices before the execution of Line~\ref{p2-line2-kh}, and let $Q'$ denote the vertex cover of Line~\ref{p2-line2-kh}.
	Since $Q'$ is a minimum vertex cover of the vertex cover instance, we have that $|Q'| \leq |\OPT \cap V'|$.
	Let~$M$ be the set of vertices queried in Line~\ref{p2-line3-kh}.
	Each vertex $v \in M$ is known mandatory because it contains the precise weight~$w_u$ of a vertex $u \in Q$.
	But since $v$ was not prediction mandatory before querying $Q'$, we have $\pred{w}_u \notin I_v$ and, therefore, $\oj(v) \geq 1$.
	This implies $|V' \cap \ALG| = |Q' \cup M| \leq |V' \cap \OPT| + \oj(M) \leq |V' \cap \OPT| + \oj(V')$.
	
	Finally, consider an execution of the repeat-loop in which some query is performed in Lines~\ref{line_min_beta_wit_trio}--\ref{line_min_beta_pair}.
	Let~$Q$ be the set of vertices queried in Line~\ref{line_min_beta_kh_predict}, and let~$W$ be the set of vertices queried in Lines~\ref{line_min_beta_wit_trio}--\ref{line_min_beta_query_first}.
	If a query is performed in Line~\ref{line_min_beta_query_first} and $u$ is queried in Line~\ref{line_min_beta_kh_mandatory_second} directly afterwards, then include~$u$ in~$W$ as well.
	Note that $|Q| = \gamma - 2$ holds by~\Cref{lemma_min_beta_kh_unique_loop}. If~$\pred{w}_u$ enforces~$v$ in Line~\ref{lin_min_beta_cond_trio} or~\ref{line_min_beta_pair}, then~$v$ is prediction mandatory due to~\Cref{lemma_min_beta_kh_enforce_predmand}.
	Also, note that $\oj(Q) \geq |Q \setminus \OPT|$, since every vertex in~$Q$ is prediction mandatory at some point. If some $v \in Q$ is not in $\OPT$, then $v$ is not mandatory and, as argued above, has $\oj(v) \ge 1$.
	
	We divide the proof in three cases.
	For a pair $\{v, u\}$ as in Line~\ref{line_min_beta_pair}, note that, due to~\Cref{lemma_min_beta_kh_unique_loop} (and as argued before the proof), $u$~is not considered more than once, and is not considered in any of the previous cases.
	
	\begin{enumerate}
		\item If $|W| = 1$, then some vertex~$v$ was queried in Line~\ref{line_min_beta_query_first} because $\pred{w}_u$ enforces $v$, and $u$ is not queried by the algorithm due to Lemma~\ref{lemma_min_no_witness_implies_solved}.
		Then it suffices to note that $\{v, u\}$ is a witness set to see that $|Q \cup W| \leq |\OPT \cap (Q \cup \{u,v\})| + \oj(Q)$.
		\item Consider $|W| = 2$.
		If $W$ is a pair of the form $\{u, w\}$ queried in Line~\ref{line_min_beta_wit_trio}, then $\oj(v) \geq 1$ because~$\pred{w}_u$ enforces~$v$ but $w_u \notin I_v$ (as $v$ was not queried in Line~\ref{line_min_beta_mand_trio}).
		We can conceptually move this contribution in the hop distance to~$u$, making $\oj(v) := \oj(v) - 1$ and $\oj(u) := \oj(u) + 1$.
		If $v$ is considered another time in Line~\ref{lin_min_beta_cond_trio} or in another point of the analysis because it is enforced by some predicted weight, then it has to be the predicted weight of a vertex $u' \neq u$, so we are not counting the contribution to the hop distance more than once.
		If $W$ is a pair of the form $\{v,u\}$ queried in Line~\ref{line_min_beta_query_first} and in Line~\ref{line_min_beta_kh_mandatory_second} directly afterwards, then either $W \subseteq \OPT$ or $\oj(v) \ge 1$: It holds that~$u$ is mandatory, so if~$v$ is not in $\OPT$ then it suffices to see that $\pred{w}_u$ enforces $v$ to conclude that $\oj(v)\ge 1$.
		Either way, the fact that~$W$ is a witness set is enough to see that $|Q \cup W| \leq |\OPT \cap (Q \cup W)| + \oj(Q) + \oj(W)$.
		\item If $|W| = 3$, then $W = \{u,v,w\}$ as in Line~\ref{lin_min_beta_cond_trio}, and $|Q \cup W| = \gamma + 1$.
		As $v$ is queried in Line~\ref{line_min_beta_mand_trio}, it is mandatory by~\Cref{lemma_min_kh_enforce_mandatory}.
		Since $\{u,w\}$ is a witness set, it holds that~$v$ and at least one of $\{u, w\}$ are contained in any feasible solution.
		This implies that at least $\frac{\gamma}{\gamma+1} \cdot |Q \cup W| - \oj(Q)$ of the vertices in $Q \cup W$ are in $\OPT$, so $|Q \cup W| \leq (1 + \frac{1}{\gamma})(|\OPT \cap (Q \cup W)| + \oj(Q))$.
	\end{enumerate}
	
	The remaining vertices queried in Lines~\ref{line_min_beta_kh_mandatory_first} and~\ref{line_min_beta_kh_mandatory_second} are in any feasible solution and querying them never worsens the consistency. 
\end{proof}

\subsection{An error-sensitive algorithm w.r.t.\ the mandatory-query distance}

We continue by designing a learning-augmented algorithm with a linear error dependency on the mandatory query distance $k_M$. Before we do so, we provide the following lower bound stating that we can only hope for a slightly worse consistency and robustness tradeoff if we want a linear error dependency on $k_M$.

\begin{restatable}{theorem}{ThmLBMandQueryDist}
	\label{theo_lb_sym_diff}
	Let $\gamma \in \mathbb{R}_{\ge 2}$ be fixed. 
	If a deterministic algorithm for hypergraph orientation under explorable uncertainty with predictions is
	$\gamma$-robust,
	then it cannot have competitive ratio better than $1 + \frac{1}{\gamma-1}$ for $k_M = 0$.
	If an algorithm has competitive ratio $1 + \frac{1}{\gamma-1}$ for $k_M = 0$, then it cannot be better than
	$\gamma$-robust.
\end{restatable}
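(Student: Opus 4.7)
The plan is to prove the theorem by contradiction, in close analogy with the proof of Theorem~\ref{theo_minimum_combined_lb}, but restricted to instances with $k_M = 0$.

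I would assume for contradiction that there is a deterministic algorithm that is $\gamma$-robust and that achieves competitive ratio strictly less than $1+\frac{1}{\gamma-1}$ on all instances with $k_M = 0$. Next, I would reuse the interval/prediction structure of Figure~\ref{fig_combined_lb_min_single_set} but with one fewer vertex: a single hyperedge $S = \{v_0, v_1, \ldots, v_{\gamma-1}\}$ with $I_{v_0} = (0,2)$, $I_{v_i} = (1,3)$ for $i \ge 1$, and $\pred{w}_{v_0} = 1.5$, $\pred{w}_{v_i} = 2.5$. Applying Lemma~\ref{lema_mandatory_min} to the predicted weights gives $\mathcal{I}_P = \{v_1, \ldots, v_{\gamma-1}\}$, and applying the same lemma to any real-weight profile with $w_{v_0} \in [1,2]$ and $w_{v_i} \in (2,3)$ gives $\mathcal{I}_R = \mathcal{I}_P$, so $k_M = 0$ and $\opt = \gamma - 1$ on such instances.

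Then I would observe that the assumed consistency bound $\alpha < \gamma/(\gamma-1)$ together with $\opt = \gamma-1$ forces $|\ALG| \le \gamma-1$. Since $\mathcal{I}_R$ is a subset of every feasible query set, this forces $\ALG$ to query exactly $\{v_1, \ldots, v_{\gamma-1}\}$ and to skip $v_0$ on every such $k_M = 0$ instance. WLOG the algorithm queries in the order $v_1, v_2, \ldots, v_{\gamma-1}$. Finally I would construct the adversarial (not $k_M = 0$) instance with the same intervals and predictions but real weights $w_{v_i} = 2.5$ for $i < \gamma-1$, $w_{v_{\gamma-1}} = 1.5 \in I_{v_0}$, and $w_{v_0} = 0.5$, and observe that after the $\gamma-1$ committed queries the minimum is not yet determined (it could be either $v_0$ or $v_{\gamma-1}$), so the algorithm is forced to query $v_0$ as well. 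Thus $|\ALG| = \gamma$, while $\{v_0\}$ alone is feasible (giving $\opt = 1$ since $w_{v_0} = 0.5 < 1 \le w_{v_i}$), so the resulting ratio is $\gamma$.

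The main obstacle is that this ratio equals $\gamma$ exactly, matching $\gamma$-robustness at the boundary rather than strictly violating it; the strict contradiction is obtained by exploiting the real-valued parameter regime $\gamma \in (\gamma_0, \gamma_0+1)$ with a construction on $\gamma_0+1$ vertices (which forces integer adversarial ratio $\gamma_0+1 > \gamma$), or by amplifying via sufficiently many disjoint copies of the construction until the adversarial ratio strictly exceeds $\gamma$ for integer $\gamma$. The second statement of the theorem is the contrapositive of the first and follows immediately from the same argument.
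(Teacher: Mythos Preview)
Your forcing idea—commit the algorithm to the prediction-mandatory set on $k_M=0$ instances, then trap it adversarially—is the paper's strategy, but your single-hyperedge construction with one ``left'' vertex $v_0$ only realises the needed dichotomy (consistency $\ge n/(n-1)$ \emph{or} robustness $\ge n$) at integer values~$n$, and neither proposed fix closes the resulting gap. For Fix~1 with $\gamma_0+1$ vertices: the $k_M=0$ optimum is~$\gamma_0$, so forcing the algorithm to skip $v_0$ requires consistency strictly below $1+1/\gamma_0$; your hypothesis only gives consistency $<1+1/(\gamma-1)$, which is a \emph{weaker} bound when $\gamma\in(\gamma_0,\gamma_0+1)$ (since $\gamma-1<\gamma_0$), so an algorithm with consistency exactly $1+1/\gamma_0$ satisfies your hypothesis yet may query $v_0$ first. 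Shrinking to $\gamma_0$ vertices makes the forcing work, but then the adversarial ratio is $\gamma_0<\gamma$—no contradiction. In general no integer~$n$ can satisfy both $n/(n-1)\ge \gamma/(\gamma-1)$ (needed for forcing) and $n>\gamma$ (needed to violate $\gamma$-robustness). Fix~2 does not help either: $N$ disjoint copies still give adversarial $|\ALG|=N\gamma$ against $\opt=N$, ratio exactly~$\gamma$.

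The paper avoids this by using $b$ left vertices and $a-b$ right vertices across $b$ hyperedges, obtaining the dichotomy for \emph{every} rational $\gamma'=a/b$; density of the rationals then lets one choose $\gamma'$ strictly between $\gamma$ and the assumed bound and get a strict contradiction. Also, the second statement is not the contrapositive of the first: both are consequences of the single disjunction ``consistency $\ge 1+1/(\gamma'-1)$ or robustness $\ge\gamma'$'', and the paper derives the second one separately by taking a rational $\gamma'\in(\gamma-\varepsilon,\gamma)$.
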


\begin{proof}
	We first establish the following auxiliary claim, which is slightly weaker than
	the statement of the theorem:
	
	\begin{claim}
		\label{claim:LBM}%
		Let $\gamma'\ge 2$ be a fixed rational number. Every deterministic algorithm for the hypergraph orientation
		problem has competitive ratio at least $1+\frac{1}{\gamma'-1}$ for $k_M=0$
		or has competitive ratio at least $\gamma'$ for arbitrary $k_M$.
	\end{claim}
	
	Let $\gamma' = \frac{a}{b}$, with integers $a \geq 2b > 0$.
	Consider an instance with vertices $\{1,\ldots,a\}$, hyperedges $S_i = \{i, b+1, b+2, \ldots, a\}$ for $i = 1, \ldots, b$, and intervals and predicted weights as depicted in Figure~\ref{fig_lb_sym_diff_input}.
	Suppose without loss of generality that the algorithm queries vertices $\{1,\ldots,b\}$ in the order $1, 2, \ldots, b$, and the vertices $\{b+1,\ldots,a\}$ in the order $b+1, b+2, \ldots, a$.
	Let the predictions be correct for $b+1, \ldots, a-1$, and $w_{1}, \ldots, w_{b-1} \notin I_{b+1} \cup \ldots \cup I_a$.

	\begin{figure*}[tb]
		\centering
		\captionsetup[subfigure]{justification=centering}
		\begin{subfigure}{0.25\textwidth}
			\begin{tikzpicture}[line width = 0.3mm, scale=0.8]
				\intervalpr{$I_1$}{0}{2}{0}{1.5}{0.5}
				\intervalpr{$I_2$}{0}{2}{0.5}{1.5}{0.5}
				\intervalp{$I_b$}{0}{2}{1.5}{1.5}
				\path (1, 0.5) -- (1, 1.6) node[font=\LARGE, midway, sloped]{$\dots$};
				
				\intervalpr{$I_{b+1}$}{1}{3}{2}{2.5}{2.5}
				\intervalpr{$I_{b+2}$}{1}{3}{2.5}{2.5}{2.5}
				\intervalp{$I_{a}$}{1}{3}{3.5}{2.5}
				\path (2, 2.5) -- (2, 3.6) node[font=\LARGE, midway, sloped]{$\dots$};
			\end{tikzpicture}
			\subcaption{}\label{fig_lb_sym_diff_input}
		\end{subfigure}
		\begin{subfigure}{0.25\textwidth}
			\begin{tikzpicture}[line width = 0.3mm, scale=0.8]
				\intervalpr{$I_1$}{0}{2}{0}{1.5}{0.5}
				\intervalpr{$I_2$}{0}{2}{0.5}{1.5}{0.5}
				\intervalpr{$I_b$}{0}{2}{1.5}{1.5}{0.5}
				\path (1, 0.5) -- (1, 1.6) node[font=\LARGE, midway, sloped]{$\dots$};
				
				\intervalpr{$I_{b+1}$}{1}{3}{2}{2.5}{2.5}
				\intervalpr{$I_{b+2}$}{1}{3}{2.5}{2.5}{2.5}
				\intervalpr{$I_{a}$}{1}{3}{3.5}{2.5}{1.5}
				\path (2, 2.5) -- (2, 3.6) node[font=\LARGE, midway, sloped]{$\dots$};
			\end{tikzpicture}
			\subcaption{}\label{fig_lb_sym_diff_afirst}
		\end{subfigure}
		\begin{subfigure}{0.25\textwidth}
			\begin{tikzpicture}[line width = 0.3mm, scale=0.8]
				\intervalpr{$I_1$}{0}{2}{0}{1.5}{0.5}
				\intervalpr{$I_2$}{0}{2}{0.5}{1.5}{0.5}
				\intervalpr{$I_b$}{0}{2}{1.5}{1.5}{1.5}
				\path (1, 0.5) -- (1, 1.6) node[font=\LARGE, midway, sloped]{$\dots$};
				
				\intervalpr{$I_{b+1}$}{1}{3}{2}{2.5}{2.5}
				\intervalpr{$I_{b+2}$}{1}{3}{2.5}{2.5}{2.5}
				\intervalpr{$I_{a}$}{1}{3}{3.5}{2.5}{2.5}
				\path (2, 2.5) -- (2, 3.6) node[font=\LARGE, midway, sloped]{$\dots$};
			\end{tikzpicture}
			\subcaption{}\label{fig_lb_sym_diff_bfirst}
		\end{subfigure}
				%
		\caption{Intervals, precise weights and predicted weights as used in the lower bound instance in the proof of~\Cref{theo_lb_sym_diff}.}
		\label{fig_lb_sym_difference}
	\end{figure*}
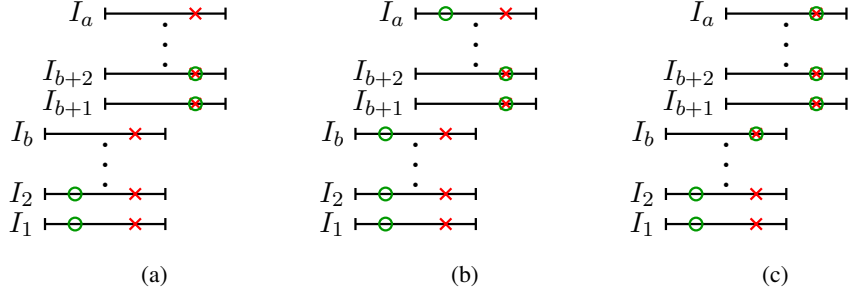
	
	If the algorithm queries vertex~$a$ before vertex~$b$, then the adversary sets $w_a \in I_{b}$ and $w_{b} \notin I_a$.
	(See Figure~\ref{fig_lb_sym_diff_afirst}.)
	This forces a query in all vertices $\{1,\ldots,b\}$ as they become mandatory by~\Cref{cor_min_left_mandatory}, so the algorithm queries all $a$ vertices, while the optimal solution queries only the~$b$ vertices $\{b+1,\ldots,a\}$.
	Thus the competitive ratio is at least $\frac{a}{b} = \jnew{\gamma'}$ if~$k_M$ can be arbitrarily large.
	
	If the algorithm queries~$b$ before~$a$, then the adversary sets $w_a = \pred{w}_a$ and $w_{b} \in I_a$; see Figure~\ref{fig_lb_sym_diff_bfirst}.
	This forces the algorithm to query all remaining vertices in $\{b+1,\ldots,a\}$ as they become mandatory by~\Cref{cor_min_left_mandatory}, i.e., $a$ queries in total, while the optimum queries only the $a-b$ vertices in $\{b+1,\ldots,a\}$. 
	Note, however, that $k_M = 0$, since the right-side vertices $\{b+1,\ldots,1\}$
	are mandatory for both predicted and precise weights by~\Cref{lema_mandatory_min}, while $\{1, \ldots, b\}$ are neither mandatory nor prediction mandatory.
	Thus, the competitive ratio is at least $\frac{a}{a-b} = 1 + \frac{1}{\jnew{\gamma'}-1}$ for $k_M = 0$.
	This concludes the proof of~\Cref{claim:LBM}.
	
	
	
	Now we are ready to prove the theorem.
	Let $\gamma\ge 2$ be a fixed rational. Assume that there is a deterministic algorithm
	that is $\gamma$-robust and has competitive ratio strictly smaller than
	$1 + \frac{1}{\gamma-1}$, say $1+\frac{1}{\gamma+\varepsilon-1}$ with $\varepsilon>0$,
	for $k_M = 0$. Let $\gamma'$ be a rational number with $\gamma < \gamma' <\gamma+\varepsilon$.
	Then the algorithm has competitive ratio strictly smaller than $\gamma'$ for arbitrary $k_M$
	and competitive ratio strictly smaller than $1+\frac{1}{\gamma'-1}$ for $k_M=0$,
	a contradiction to Claim~\ref{claim:LBM}. This shows the first statement of the theorem.
	
	Let $\gamma\ge 2$ again be a fixed rational. Assume that there is a deterministic algorithm
	that has competitive ratio $1 + \frac{1}{\gamma-1}$ for $k_M=0$ and is
	$(\gamma-\varepsilon)$-robust, where $\varepsilon>0$.
	As there is a lower bound of $2$ on the robustness of any deterministic algorithm, no such algorithm can exist for $\gamma=2$. So we only
	need to consider the case $\gamma>2$ and $\gamma-\varepsilon\ge 2$. Let $\gamma'$ be a rational number
	with $\gamma-\varepsilon<\gamma'<\gamma$. Then the algorithm has competitive ratio strictly smaller than
	$1 + \frac{1}{\gamma'-1}$ for $k_M=0$ and competitive ratio strictly smaller than
	$\gamma'$ for arbitrary $k_M$, a contradiction to Claim~\ref{claim:LBM}. This shows
	the second statement of the theorem.
\end{proof}

This lower bound shows that the $k_h$-dependent guarantee of Algorithm~\ref{ALG_min_beta} (cf.~\Cref{thm_minimum:hop}) cannot translate to the error measure $k_M$. Instead, for any $\gamma \in \mathbb{N}_{\ge 2}$, we aim for a $(1+\frac{1}{\gamma-1})$-consistent and $\gamma$-robust algorithm with linear error dependency on $k_M$.

To that end, we prove the following tight bound
by presenting the new Algorithm~\ref{ALG_min_alpha} with dependency on $k_M$.
We remark that this algorithm only uses the initial set of prediction mandatory vertices, and otherwise ignores the predicted weights.
Since access to this set is sufficient to execute the algorithm, it requires strictly less predicted information than the Algorithm~\ref{ALG_min_alpha} which relies on having access to the actual predicted weights.

\begin{restatable}{theorem}{ThmMinAlpha}
	\label{thm:min-alpha}
	There is an algorithm for hypergraph orientation under explorable uncertainty with predictions that, given an integer parameter $\gamma \geq 2$, 
	has a competitive ratio of $\min\{ (1+\frac{1}{\gamma-1}) \cdot (1 + \frac{k_M}{\opt}), \gamma\}$. 
\end{restatable}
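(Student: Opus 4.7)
The plan is to design a two-phase algorithm that uses only the initial prediction mandatory set $\mathcal{I}_P$ (computed from the predicted weights via~\Cref{lema_mandatory_min}) and otherwise ignores the numerical predictions. The algorithm runs a repeat-loop with two kinds of actions per iteration: (i) query up to $\gamma-2$ not-yet-queried vertices of $\mathcal{I}_P$, interleaved with exhaustive application of~\Cref{cor_min_left_mandatory}; and (ii) query a witness pair (via~\Cref{lem:witness}) in the current instance, if one exists. The loop terminates when neither (i) nor (ii) produces a query. We then compute and query a minimum vertex cover of the current vertex cover instance $\bar G$ (\Cref{def:vertex_cover_instance}) and exhaust~\Cref{cor_min_left_mandatory} once more. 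Feasibility of the output follows from~\Cref{lem:VC-property}.

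For the $\gamma$-robustness bound, I would partition the algorithm's queries into "bundles", one per iteration of the repeat-loop: each bundle contains at most $\gamma-2$ queries to $\mathcal{I}_P$ plus a witness pair, for a total size of at most $\gamma$. Any additional queries triggered by~\Cref{cor_min_left_mandatory} are mandatory and hence in $\OPT$. Since every bundle contains a witness pair, at least one vertex per bundle lies in every feasible solution, giving a per-bundle ratio of at most $\gamma$. The residual minimum vertex cover contributes at most $|\OPT\cap V'|$ queries by~\Cref{lem:witness} applied through~\Cref{lem:VC-property}, yielding overall $\gamma$-robustness.

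For the consistency bound together with the multiplicative error factor $(1+k_M/\opt)$, consider first $k_M=0$, so $\mathcal{I}_P=\mathcal{I}_R$. Then every vertex queried from $\mathcal{I}_P$ lies in $\OPT$; each bundle therefore contains $\gamma-2$ real mandatory queries plus a witness pair with at least one vertex in $\OPT$, matching $\gamma-1$ out of $\gamma$ queries to distinct $\OPT$-vertices, for a per-bundle ratio of $\gamma/(\gamma-1)=1+\tfrac{1}{\gamma-1}$. The minimum vertex cover on the residual is exactly $|\OPT\cap V'|$ by~\Cref{lem:witness}, so the overall ratio is $1+\tfrac{1}{\gamma-1}$. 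For $k_M>0$, each vertex in $\mathcal{I}_P\setminus\mathcal{I}_R$ is a spurious phase-1 query and each vertex in $\mathcal{I}_R\setminus\mathcal{I}_P$ must be picked up later via~\Cref{cor_min_left_mandatory} or in the residual vertex cover. A charging argument maps each such vertex to a distinct unit of $k_M=|\mathcal{I}_P\sym\mathcal{I}_R|$, contributing the multiplicative $(1+k_M/\opt)$ factor on top of the per-bundle ratio.

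The main obstacle is the charging scheme: we must ensure that each "misplaced" query is paid for by a \emph{distinct} vertex of $\mathcal{I}_P\sym\mathcal{I}_R$ without double counting, even as an error in $\mathcal{I}_P$ propagates through chains of~\Cref{cor_min_left_mandatory}-induced queries and may cause the algorithm to over-query in subsequent bundles. The bookkeeping will mirror the case analysis of~\Cref{thm_minimum:hop}, but is simpler here because $k_M$ directly counts the mandatory-set discrepancy rather than endpoint-hop events, so each discrepancy can be tied to one specific bundle failing the clean $(\gamma-1)$-out-of-$\gamma$ match against $\OPT$.
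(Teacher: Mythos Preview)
Your proposal has the right skeleton but contains a genuine gap in the loop design that breaks the consistency bound.

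The issue is your termination condition: ``The loop terminates when neither (i) nor (ii) produces a query.'' Since (ii) queries \emph{any} witness pair in the current instance, the loop keeps running after $\mathcal{I}_P$ is exhausted, and from that point on each bundle is just a witness pair---two queries with only one guaranteed to be in $\OPT$. That gives ratio~$2$, not $1+\tfrac{1}{\gamma-1}$. Concretely, take $\gamma$ large and an instance where $\mathcal{I}_P=\mathcal{I}_R=\emptyset$ (so $k_M=0$) but many edges remain in the vertex cover instance: your loop will pair-query its way through with ratio~$2$, while the claimed bound is $1+\tfrac{1}{\gamma-1}$, arbitrarily close to~$1$. Moreover, once (ii) alone drives the loop to completion, no witness pair remains, so the vertex cover phase you rely on in the consistency argument is vacuous.

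The paper fixes this by coupling the witness pair to $P$: the loop condition is ``there exist $p\in P$ and an unqueried $b$ with $\{p,b\}$ a witness set,'' and each iteration queries $\gamma-1$ vertices of $P$ (including $p$) together with~$b$. Thus every bundle has at least $\gamma-1$ vertices in $\mathcal{I}_P$ out of at most $\gamma$, which is what yields the per-bundle factor $\gamma/(\gamma-1)$. The loop therefore terminates precisely when no remaining vertex of $\mathcal{I}_P$ participates in a witness pair; this is exactly the structural property needed so that any vertex forced in the final mandatory sweep after the vertex cover lies in $\mathcal{I}_R\setminus\mathcal{I}_P$ and can be charged to $k_M$ (the paper isolates this as a separate lemma). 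Your charging sketch for $k_M>0$ is on the right track, but it cannot be completed without this coupling, since otherwise you have no control over which vertices become known mandatory after the vertex cover step.
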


\subsubsection*{Overview of the algorithmic ideas}

Before we give a formal proof of the theorem, we start by sketching the main ideas.
The algorithm emulates the two-stage structure of the offline algorithm. 
Recall that the offline algorithm in a first stage queries all mandatory vertices and in a second stage queries a minimum vertex cover in the remaining vertex cover instance.
Since blindly following the offline algorithm based on the predicted weights would lead to a competitive ratio of $n$, the algorithm augments both stages with additional queries. 
Algorithm~\ref{ALG_min_alpha} implements the augmented first stage in Lines~\ref{line_min_param_cond} to~\ref{line_min_param_small} and afterwards executes the second stage.

To start the first phase, the algorithm computes the set~$P$ of initial prediction mandatory vertices (Lemma~\ref{lema_mandatory_min}).
In contrast to the $k_h$-dependent algorithm, we fix the set $P$ and do \emph{not} recompute it when the instance changes.
Then the algorithm tries to find a vertex $p \in P$ that is part of a witness set $\{p, b\}$.
If~$|P|\ge\gamma-1$, we query a set $P' \subseteq P$ of size $\gamma-1$ that includes~$p$, plus~$b$ (we allow $b \in P'$).
This is clearly a witness set of size at most $\gamma$, which ensures that the queries do not violate the $\gamma$-robustness.
Also, at least a $\frac{\gamma-1}{\gamma}$ fraction of the queried vertices are in~$P$, and every vertex in $P \setminus \OPT$ is in $\mathcal{I}_P \setminus \mathcal{I}_R$ and, thus, contributes to the mandatory query distance $k_M$.
This ensures, at least locally, that the queried vertices do not violate the error-dependent consistency.
We then repeatedly query known mandatory vertices, remove 
them from~$P$ and repeat 
without recomputing~$P$, until~$P$ is empty or no vertex in~$P$ is part of a witness set.

We may have one last iteration of the loop where $|P| < \gamma -1$.
After that, the algorithm will proceed to the second phase, querying a minimum vertex cover of the current vertex cover instance and vertices that become known mandatory by~\Cref{cor_min_left_mandatory}.
For the second phase itself, we can use that a minimum vertex cover of the vertex cover instance (cf.~\Cref{def:vertex_cover_instance}) is a lower bound on the optimal solution for the remaining instance by~\Cref{lem:witness}. Since all queries of Line~\ref{p2-line3} are mandatory, the queries of the Lines~\ref{p2-line2} and~\ref{p2-line3} are $2$-robust for the remaining instance. Even in combination with the additional at most $\gamma -2$ queries of the last iteration of the loop, this is still $\gamma$-robust.
It is not hard to show that each query of Line~\ref{p2-line3} contributes an error to $k_M$, which completes the argument.

\begin{algorithm}[tb]
	\KwIn{Hypergraph $H=(V,E)$, intervals $I_v$ and predictions $\w_v \in I_v$ for all $v \in V$}
	$P \leftarrow$ set of initial prediction mandatory vertices (characterized in Lemma~\ref{lema_mandatory_min})\;
	\While{$\exists p \in P$ and an unqueried vertex $b$ where $\{p, b\}$ is a witness set for the current instance by \Cref{lem:witness} \label{line_min_param_cond}}{
		\uIf{$|P| \geq \gamma-1$ \label{line_min_param_size}}{
			pick $P' \subseteq P$ with $p \in P'$ and $|P'| = \gamma-1$\;
			query $P' \cup \{b\}$, $P \leftarrow P \setminus (P' \cup \{b\})$\; \label{line_min_param_big}
			\lWhile{there is a known mandatory vertex~$v$ by~\Cref{cor_min_left_mandatory}}{query $v$, $P \leftarrow P \setminus \{v\}$ \label{line_min_param_mandatory}}
		} \lElse{query $P$, $P \leftarrow \emptyset$ \label{line_min_param_small}}
	}
	Compute and query a minimum vertex cover~$Q'$ on the current vertex cover instance\label{p2-line2}\; 
	\lWhile{there is a known mandatory vertex~$v$ by~\Cref{cor_min_left_mandatory}}{query $v$\label{p2-line3}}
	\caption{Algorithm for hypergraph orientation without prediction mandatory vertices}
	\caption{Algorithm for hypergraph orientation under uncertainty w.r.t.\ error measure~$k_M$}
	\label{ALG_min_alpha}
\end{algorithm}

\subsubsection*{Precise algorithm and formal analysis}

We proceed by turning these arguments into a formal analysis of Algorithm~\ref{ALG_min_alpha} to prove~\Cref{thm:min-alpha}. To that end, we first show the following auxiliary lemma. Recall that $\mathcal{I}_P$ denotes the set of (initially) prediction mandatory vertices for the instance and $\mathcal{I}_R$ denotes the set of vertices that are mandatory for the precise weights.

\begin{restatable}{lemma}{LemmaMinNoPredMandAfterVC}
	\label{lemma_min_no_pred_mand_after_vc}
	Every vertex queried in Line~\ref{p2-line3} of Algorithm~\ref{ALG_min_alpha} is in $\mathcal{I}_R \setminus \mathcal{I}_P$, i.e., mandatory but not (initially) prediction mandatory.
\end{restatable}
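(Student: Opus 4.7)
The plan is to establish separately that any vertex $v$ queried in Line~\ref{p2-line3} lies in $\mathcal{I}_R$ and not in $\mathcal{I}_P$. The first inclusion is immediate: Line~\ref{p2-line3} only queries vertices that are known mandatory by~\Cref{cor_min_left_mandatory}, which in particular makes them mandatory, so $v\in\mathcal{I}_R$. The substance of the proof is the second inclusion, which I would approach by contradiction.

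Assume $v\in\mathcal{I}_P$. My first step would be to observe that $P$ is initialised to $\mathcal{I}_P$ and vertices leave $P$ only when queried (in Lines~\ref{line_min_param_big},~\ref{line_min_param_small}, or~\ref{line_min_param_mandatory}); since $v$ is queried only in Line~\ref{p2-line3}, $v$ must still belong to $P$ when the while-loop exits. I would then dispense with the possibility that the final loop iteration took the else-branch by noting that Line~\ref{line_min_param_small} empties $P$, contradicting $v\in P$. So the last executed iteration (if any) was an if-branch, whose Line~\ref{line_min_param_mandatory} exhaustively queries known mandatory vertices; consequently at loop exit there are no known mandatory vertices, and by the exit condition no $p\in P$ — in particular $v$ — has an unqueried witness partner. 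The remaining task is to exhibit such a partner and contradict this.

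Because $v$ becomes known mandatory in Line~\ref{p2-line3},~\Cref{cor_min_left_mandatory} provides a not-yet-solved hyperedge $S\ni v$ in which $v$ is leftmost at that later moment together with some $u\in S\setminus\{v\}$ satisfying $I_u\subseteq I_v$ by then. Such $u$ must have been queried in $Q'$ or earlier within Line~\ref{p2-line3}, since an unqueried $u$ with $I_u\subseteq I_v$ would already have made $v$ known mandatory at loop exit. I would then case-split on whether $v$ was already leftmost in $S$ at loop exit. If yes, then the facts that $S$ is unsolved and no vertex is known mandatory at loop exit force the existence of an unqueried $u'\in S\setminus\{v\}$ with $I_{u'}\cap I_v\neq\emptyset$ (if $u'$ were queried, $I_{u'}=[w_{u'}]\subseteq I_v$ would make $v$ known mandatory), and~\Cref{lem:witness} identifies $\{v,u'\}$ as the desired witness pair.

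The main obstacle will be the subcase where $v$ is not leftmost in $S$ at loop exit. Here I would take $z\in S$ to be leftmost at loop exit, so $L_z<L_v$, and argue that for $v$ to become leftmost in $S$ by the time of its Line~\ref{p2-line3} query every $z'\in S$ with $L_{z'}<L_v$ must have been queried with $w_{z'}\geq L_v$, and such queries occur only in $Q'$ or in Line~\ref{p2-line3} itself — both after loop exit. Hence $z$ is unqueried at loop exit, and the chain $L_z\leq L_v\leq w_z\leq U_z$ places $L_v$ inside both $I_z$ and $I_v$, so they intersect. Since $z$ is leftmost in $S$,~\Cref{lem:witness} then produces the witness pair $\{v,z\}$ with $z$ unqueried, contradicting the loop exit condition and completing the plan.
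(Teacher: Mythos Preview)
Your proposal is correct and follows essentially the same approach as the paper: both arguments hinge on the while-loop's exit condition, showing that any $v\in\mathcal{I}_P$ still unqueried at that point can neither be leftmost in a hyperedge nor intersect the leftmost vertex's interval, and hence cannot become known mandatory later. The paper states this directly and tersely for an arbitrary hyperedge $S$, whereas you unfold it as a contradiction with an explicit case split on whether $v$ is leftmost in $S$ at loop exit; your Case~2 in particular makes precise the step the paper leaves implicit (why $v$ cannot \emph{later} become leftmost without having had a witness partner earlier).
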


\begin{proof}
	Clearly every such vertex is in~$\mathcal{I}_R$  because it is known {to be} mandatory by~\Cref{cor_min_left_mandatory}, so it remains to prove that it is not in $\mathcal{I}_P$.
	Consider a hyperedge~$S$.
	If a vertex $u \in S \cap \mathcal{I}_P$ is not queried in Line~\ref{p2-line2}, then the condition for identifying a witness set in Line~\ref{line_min_param_cond} of Algorithm~\ref{ALG_min_alpha} implies that, before Line~\ref{p2-line2} is executed, $u$~is not leftmost in~$S$ and does not intersect the leftmost vertex in~$S$.
	Thus, $v$ cannot become known mandatory in Line~\ref{p2-line3} and, therefore, is not queried.
	This implies that a vertex queried in Line~\ref{p2-line3} cannot be contained in $\mathcal{I}_P$ and, thus, the lemma.
\end{proof}

Using the auxiliary lemma, we now proceed to prove~\Cref{thm:min-alpha}.

\begin{proof}[Proof of~\Cref{thm:min-alpha}]
	Before we prove the performance bounds, we remark that the algorithm clearly solves the given instance by definition of the final two lines of the algorithm and~\Cref{lem:VC-property}.
	Next, we separately show that the algorithm executes at most $\gamma \cdot \opt$ queries and at most $(1 + \frac{1}{\gamma-1})(1 + \frac{k_M}{\opt}) \cdot \opt$ queries.
	
	\paragraph{Proof of $|\ALG| \le \gamma \cdot \opt$ (robustness).}
	Given $P' \cup \{b\}$ queried in Line~\ref{line_min_param_big}, at least one vertex is in any feasible solution since $\{b, p\}$ is a witness set and, thus, $P' \cup \{b\}$ is a witness set of size $\gamma$. Therefore, querying $P' \cup \{b\}$ never worsens the robustness below $\gamma$.
	
	Line~\ref{line_min_param_small} is executed at most once, since the size of~$P$ never increases.
	Fix an optimum solution $\OPT$, and let~$V'$ be the set of unqueried vertices before Line~\ref{line_min_param_small} is executed (or before Line~\ref{p2-line2} if Line~\ref{line_min_param_small} is never executed).
	Let~$P$ be the set of vertices queried in Line~\ref{line_min_param_small}, and let $Q$ be the queries in Line~\ref{p2-line2}.
	
	If the orientation is already known before querying $P$ and $Q$, then it must hold $P=Q=\emptyset$ and the lemma clearly holds.
	If the orientation is not yet known at this point, then $|\OPT \cap V'| \geq 1$, so $|P| \leq \gamma - 2$ implies $|P| \leq (\gamma - 2) \cdot |\OPT \cap V'|$.
	Also, since~$Q$ is a minimum vertex cover of the vertex cover instance, we get $|Q| \leq |\OPT \cap V'|$ by~\Cref{lem:witness}.
	Let~$M$ be the set of vertices in~$V'$ that are queried in Line~\ref{p2-line3}; clearly $M \subseteq \OPT \cap V'$ as all those vertices are mandatory.
	Thus $|P| + |Q| + |M| \leq \gamma \cdot |\OPT \cap V'|$.
	
	The vertices queried in Line~\ref{line_min_param_mandatory} are in any feasible solution, which implies the robustness bound.

	\paragraph{Proof of $|\ALG| \le (1+\frac{1}{\gamma-1}) \cdot (1 + \frac{k_M}{\opt}) \cdot \opt$ (consistency and error-dependency).}
	Fix an optimum solution $\OPT$.
	In the following, we will show for various disjoint subsets $J\subseteq V$ that $|J \cap \ALG| \le (1+\frac{1}{\gamma-1})\cdot (|\OPT \cap J| + k_{J})$, where $k_{J} \leq |J \cap (\mathcal{I}_P \sym \mathcal{I}_R)|$.
	The union of the subsets $J$ will contain $\ALG$, so it is clear that the bound of $(1+\frac{1}{\gamma-1}) \cdot (1 + \frac{k_M}{\opt})$ on the competitive ratio of the algorithm follows.
	
	Vertices queried in Lines~\ref{line_min_param_mandatory} of Algorithm~\ref{ALG_min_alpha} are part of any feasible solution, hence the set $P_0$ of these vertices satisfies $|P_0|\le | \OPT \cap P_0|$.
	
	Given $P' \cup \{b\}$ queried in Line~\ref{line_min_param_big}, at least $\frac{\gamma-1}{\gamma}$ of the vertices in $P' \cup \{b\}$ are prediction mandatory for the initial instance by choice of $P'$.
	Among those, let~$k' \leq k_M$ be the number of vertices in $\mathcal{I}_P \setminus \mathcal{I}_R$.
	Then, $|\OPT \cap (P' \cup \{b\})| \geq \frac{\gamma-1}{\gamma} \cdot |P' \cup \{b\}| - k'$, which gives the desired bound, i.e., $|P' \cup \{b\}| \le (1+\frac{1}{\gamma-1})\cdot (|\OPT \cap (P' \cup \{b\})| + k')$.
	
	Every vertex queried in Line~\ref{line_min_param_small} that is not in $\OPT$ is in $\mathcal{I}_P \setminus \mathcal{I}_R$.
	Hence, if there are $k''$ such intervals, then the set $P$ of vertices queried in Line~\ref{line_min_param_small} satisfies $|P| \le |\OPT \cap P| + k'' < (1+\frac{1}{\gamma})\cdot (|\OPT \cap P| + k'')$.
	
	Let~$V'$ be the set of unqueried vertices before Line~\ref{p2-line2} is executed, and let $Q$ be the queries in Line~\ref{p2-line2}.
	Then $|Q| \leq |\OPT \cap V'|$ because~$Q$ is a minimum vertex cover of the vertex cover instance, which is a lower bound on $\opt$ by~\Cref{lem:witness}.
	Let~$M$ be the set of vertices that are queried in Line~\ref{p2-line3}.
	It holds that $|Q \cup M| \leq |\OPT \cap V'| + |M|$, so the claimed bound follows from Lemma~\ref{lemma_min_no_pred_mand_after_vc}.
\end{proof}

\subsection{Non-integral parameter $\gamma$ via randomization}

The parameter $\gamma$ in~\Cref{thm_minimum:hop,thm:min-alpha} is restricted to integral values since the corresponding algorithms use it to determine sizes of query sets. Nevertheless, a generalization to arbitrary $\gamma \in \RR_{+}$ is possible at a small loss in the guarantee. We give the following rigorous upper bound on the achievable tradeoff between consistency and robustness with linear error-dependency on $k_M$. 

The following two theorems are shown via an adapted version of the corresponding proof given in~\cite{EdLMS22}.

\begin{restatable}{theorem}{ThmMinFractionalGamma}
	\label{thm:min-problem-arbitrary-gamma}
	For any real number $\gamma \geq 2$, there is a randomized algorithm for the hypergraph orientation problem under explorable uncertainty with predictions that achieves a competitive ratio of $\min\{(1+\frac{1}{\gamma-1}+\xi)\cdot(1+\frac{k_M}{\opt}), \gamma\}$, for $\xi \leq \frac{\gamma - \lfloor \gamma \rfloor}{(\gamma-1)^2} \leq1.$
\end{restatable}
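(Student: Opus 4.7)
The plan is to obtain the randomized algorithm via a convex combination of two deterministic instances of Algorithm~\ref{ALG_min_alpha}, mirroring the argument for the analogous MST result in~\cite{EdLMS22}. Given a real $\gamma \geq 2$, let $g = \lfloor \gamma \rfloor$ and $f = \gamma - g \in [0,1)$. If $f = 0$, Theorem~\ref{thm:min-alpha} with parameter $g$ already gives the claim with $\xi = 0$, so assume $g \geq 2$ and $f > 0$. The randomized algorithm then runs Algorithm~\ref{ALG_min_alpha} with integer parameter $g$ with probability $1-f$, and with parameter $g+1$ with the remaining probability $f$.

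For each of the two deterministic choices, Theorem~\ref{thm:min-alpha} yields two valid deterministic upper bounds on the number of queries: a robustness bound ($g$ or $g+1$ times $\opt$) and a consistency-with-error bound. Applying linearity of expectation to each of the two bounds separately, the expected number of queries is at most both
\[
\bigl((1-f)\, g + f(g+1)\bigr) \cdot \opt \;=\; \gamma \cdot \opt
\]
and
\[
\Bigl((1-f)\bigl(1+\tfrac{1}{g-1}\bigr) + f\bigl(1+\tfrac{1}{g}\bigr)\Bigr)\bigl(1+\tfrac{k_M}{\opt}\bigr) \cdot \opt.
\]
Since both upper bounds on $\EX[|\ALG|]$ hold simultaneously, their minimum also bounds $\EX[|\ALG|]$, which gives the required form once the second bound is identified with $(1 + \tfrac{1}{\gamma-1} + \xi)(1+\tfrac{k_M}{\opt})$.

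The remaining step is to compute $\xi$ and verify the claimed bound. A direct calculation, using $\gamma - 1 = g + f - 1$ to combine fractions, gives
\[
\xi \;=\; (1-f)\bigl(1+\tfrac{1}{g-1}\bigr) + f\bigl(1+\tfrac{1}{g}\bigr) - 1 - \tfrac{1}{\gamma-1} \;=\; \frac{g-f}{g(g-1)} - \frac{1}{\gamma-1} \;=\; \frac{f(1-f)}{g(g-1)(\gamma-1)}.
\]
The inequality $\xi \leq \frac{f}{(\gamma-1)^2}$ then reduces, after canceling $f$, to $(1-f)(\gamma-1) \leq g(g-1)$, which rearranges to $-(g-1)^2 + f(2 - g - f) \leq 0$; this is immediate from $g \geq 2$ and $f \in [0,1)$, since both summands are nonpositive. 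The final bound $\frac{f}{(\gamma-1)^2} \leq 1$ follows from $f < 1$ and $\gamma - 1 \geq 1$. The only mildly delicate step is the algebraic identification of the expected consistency factor with $1 + \tfrac{1}{\gamma-1}+\xi$; everything else is immediate from Theorem~\ref{thm:min-alpha} and linearity of expectation.
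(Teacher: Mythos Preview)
Your proof is correct and follows essentially the same approach as the paper: randomize between Algorithm~\ref{ALG_min_alpha} with parameters $\lfloor\gamma\rfloor$ and $\lceil\gamma\rceil$ using probabilities $1-\{\gamma\}$ and $\{\gamma\}$, apply linearity of expectation separately to the robustness and consistency bounds from Theorem~\ref{thm:min-alpha}, and then identify the expected consistency factor $(1-f)/(g-1)+f/g$ with $\tfrac{1}{\gamma-1}+\xi$ for $\xi=\tfrac{f(1-f)}{g(g-1)(\gamma-1)}$. Your explicit verification of $\xi\le f/(\gamma-1)^2$ via the rearrangement $-(g-1)^2+f(2-g-f)\le 0$ is a nice touch that the paper leaves implicit.
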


\begin{proof}
	For $\gamma \in \ZZ$, we run Algorithm \ref{ALG_min_alpha} and achieve the performance guarantee from Theorem \ref{thm:min-alpha}. 
	Assume $\gamma \notin\ZZ$, and let $\{\gamma\}:=\gamma - \lfloor \gamma \rfloor = \gamma - \lceil \gamma \rceil +1$ denote its fractional part. We run the following randomized variant of Algorithm \ref{ALG_min_alpha}. We randomly chose $\gamma'$ as $\lceil \gamma \rceil$  with probability $\{\gamma\}$ and as $\lfloor \gamma \rfloor$ with probability $1-\{\gamma\}$, and then we run the algorithm with $\gamma'$ instead of $\gamma$. We show that the guarantee from Theorem \ref{thm:min-alpha} holds in expectation with an additive term less than $\{\gamma\}$, more precisely, we show the competitive ratio
	\[
	\min\left\{\left( 1+\frac{1}{\gamma-1} +\xi \right) \cdot \left(1+\frac{k_M}{\opt} \right), \gamma\right\}, \text{ for } \xi = \frac{ \{\gamma\}(1-\{\gamma\}) }{(\gamma -1) \lfloor \gamma \rfloor (\lfloor \gamma \rfloor - 1)} \leq \frac{\{\gamma\}}{(\gamma-1)^2} .
	\]
	
	Following the arguments in the proof of Theorem \ref{thm:min-alpha} on the robustness, the ratio of the algorithm's number of queries $|\ALG|$ and $|\OPT|$ is bounded by $\gamma'$. In expectation the robustness is
	\begin{align*}
		\EX\left[{\gamma'}\right] &= (1-\{\gamma\}) \cdot \lfloor \gamma \rfloor + \{\gamma\} \cdot \lceil \gamma \rceil \\
		&= (1-\{\gamma\}) \cdot (\gamma - \{\gamma\}) + \{\gamma\} \cdot (\gamma - \{\gamma\} +1) \\
		&= \gamma.
	\end{align*}
	
	The error-dependent bound on the competitive ratio is in expectation (with $\opt$ and $k_M$ not being random variables)
	\begin{align*}
		\EX\left[\left(1+\frac{1}{\gamma'-1}\right) \cdot \left( 1+ \frac{k_M}{\opt} \right)\right] = \left(1+\EX\left[\frac{1}{\gamma'-1}\right]\right)\cdot \left( 1+ \frac{k_M}{\opt} \right).
	\end{align*}
	
	Applying simple algebraic transformations, we obtain
	\begin{align*}
		\EX\left[\frac{1}{\gamma'-1}\right] &= \frac{1-\{\gamma\}}{\lfloor \gamma \rfloor-1} + \frac{\{\gamma\}}{\lceil \gamma \rceil-1}\ 
		=\ \frac{1-\{\gamma\}}{\gamma - \{\gamma\}-1} + \frac{\{\gamma\}}{\gamma - \{\gamma\}}\\
		&= \frac{(1-\{\gamma\})(\gamma - \{\gamma\}) + \{\gamma\}(\gamma - \{\gamma\}-1)}{(\gamma - \{\gamma\}-1)(\gamma - \{\gamma\})} \\
		&= \frac{ \gamma -2 \{\gamma\} }{(\gamma - \{\gamma\}-1)(\gamma - \{\gamma\})} \ 
		=\ \frac{1}{\gamma-1} - \frac{1}{\gamma-1} + \frac{ \gamma -2\{\gamma\}}{(\gamma - \{\gamma\}-1)(\gamma - \{\gamma\})} \\
		&=\ \frac{1}{\gamma-1} + \frac{ \{\gamma\}(1-\{\gamma\}) }{(\gamma-1) (\gamma - \{\gamma\}-1)(\gamma - \{\gamma\})} \ = \ \frac{1}{\gamma-1} + \frac{ \{\gamma\}(1-\{\gamma\}) }{(\gamma-1) \lfloor \gamma \rfloor (\lfloor \gamma \rfloor - 1)}. 
	\end{align*}
	
	Hence, the error-dependent bound on the competitive ratio is in expectation
	\[
	\left( 1+\frac{1}{\gamma-1} +\xi \right) \cdot \left(1+\frac{k_M}{\opt} \right) \text{ with } \xi = \frac{ \{\gamma\}(1-\{\gamma\}) }{(\gamma -1) \lfloor \gamma \rfloor (\lfloor \gamma \rfloor -1)} \leq \frac{\{\gamma\}}{(\gamma-1)^2} \le 1,
	\]
	which concludes the proof.
\end{proof}

We can repeat essentially the same proof to obtain the analogous result for Algorithm~\ref{ALG_min_beta} and linear error-dependency on $k_h$.

\begin{restatable}{theorem}{ThmMinFractionalGammaHop}
	\label{thm:min-problem-arbitrary-gamma-hop}
	For any real number $\gamma \geq 2$, there is a randomized algorithm for the hypergraph orientation problem under explorable uncertainty with predictions that achieves a competitive ratio of $\min\{(1+\frac{1}{\gamma}+\xi)\cdot(1+\frac{k_h}{\opt}), \gamma\}$, for $\xi \leq \frac{1}{48} < 0.021$.
\end{restatable}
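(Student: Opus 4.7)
The plan is to mirror the proof of \Cref{thm:min-problem-arbitrary-gamma}, but now using Algorithm~\ref{ALG_min_beta} and the guarantee from \Cref{thm_minimum:hop} in place of Algorithm~\ref{ALG_min_alpha} and \Cref{thm:min-alpha}. For integer $\gamma\ge 2$, the result is exactly \Cref{thm_minimum:hop} (with $\xi=0$). For non-integer $\gamma\ge 2$, I would run Algorithm~\ref{ALG_min_beta} with a random parameter $\gamma'$ set to $\lceil\gamma\rceil$ with probability $\{\gamma\}:=\gamma-\lfloor\gamma\rfloor$ and to $\lfloor\gamma\rfloor$ with probability $1-\{\gamma\}$, and then analyse the two bounds from \Cref{thm_minimum:hop} in expectation.

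For the robustness factor, a direct one-line computation gives
\[
\EX[\gamma'] \;=\; (1-\{\gamma\})\lfloor\gamma\rfloor + \{\gamma\}\lceil\gamma\rceil \;=\; \gamma,
\]
so the ``$\gamma$'' side of the min holds in expectation. For the consistency/error side, the only quantity that depends on the realization is $1+1/\gamma'$, and I would compute $\EX[1/\gamma']$ by setting $a=\lfloor\gamma\rfloor$ and $f=\{\gamma\}$:
\[
\EX\!\left[\tfrac{1}{\gamma'}\right] \;=\; \tfrac{1-f}{a} + \tfrac{f}{a+1} \;=\; \tfrac{a+1-f}{a(a+1)}.
\]
Then the ``overhead'' over the ideal $1/\gamma$ is
\[
\xi \;=\; \EX\!\left[\tfrac{1}{\gamma'}\right] - \tfrac{1}{\gamma} \;=\; \tfrac{(a+1-f)(a+f) - a(a+1)}{a(a+1)(a+f)} \;=\; \tfrac{f(1-f)}{a(a+1)\gamma},
\]
after expanding the numerator $(a+1-f)(a+f)-a(a+1)=a^2+a+f(1-f)-a^2-a=f(1-f)$. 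Bounding $f(1-f)\le \tfrac14$, $a\ge 2$, $\gamma\ge 2$ yields $\xi\le \tfrac{1}{4\cdot 2\cdot 3\cdot 2}=\tfrac{1}{48}$, matching the stated bound.

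Putting the two computations together and invoking \Cref{thm_minimum:hop} conditionally on $\gamma'$, linearity of expectation (applied separately to each side of the outer $\min$, exactly as in the proof of \Cref{thm:min-problem-arbitrary-gamma}) gives the expected competitive ratio
\[
\min\!\left\{\left(1+\tfrac{1}{\gamma}+\xi\right)\!\left(1+\tfrac{k_h}{\opt}\right),\;\gamma\right\}.
\]
I don't anticipate any real obstacle: the structural argument is identical to that of \Cref{thm:min-problem-arbitrary-gamma}, and the only real calculation is the one above for $\EX[1/\gamma']$. The mild subtlety worth flagging is that the bound $\xi\le 1/48$ is tighter than the corresponding $k_M$-bound because the base consistency factor here is $1+1/\gamma$ rather than $1+1/(\gamma-1)$, so the denominator in the $\xi$ formula picks up an extra factor of $a+1$ instead of $a-1$; this is precisely what improves the worst case from essentially $1$ (for $k_M$) to $1/48$ (for $k_h$).
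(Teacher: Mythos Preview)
Your proposal is correct and follows essentially the same approach as the paper: randomize between $\lfloor\gamma\rfloor$ and $\lceil\gamma\rceil$ with the natural probabilities, invoke \Cref{thm_minimum:hop} conditionally, and compute $\EX[1/\gamma']-1/\gamma = \{\gamma\}(1-\{\gamma\})/(\lfloor\gamma\rfloor\lceil\gamma\rceil\gamma)$, then bound this by $1/48$ using $\{\gamma\}(1-\{\gamma\})\le 1/4$ and $\lfloor\gamma\rfloor\lceil\gamma\rceil\gamma\ge 12$. Your algebra is in fact a bit cleaner than the paper's, but the argument is the same.
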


\begin{proof}
	\newcommand{\gfloor}{\lfloor \gamma \rfloor}
	\newcommand{\gceil}{\lceil \gamma \rceil}
	For $\gamma \in \ZZ$, we run Algorithm \ref{ALG_min_beta} and achieve the performance guarantee from Theorem \ref{thm_minimum:hop}. 
	Assume $\gamma \notin\ZZ$. As before, let $\{\gamma\}:=\gamma - \lfloor \gamma \rfloor = \gamma - \lceil \gamma \rceil +1$ denote its fractional part. We run the following randomized variant of Algorithm \ref{ALG_min_beta}. We randomly chose $\gamma'$ as $\lceil \gamma \rceil$  with probability $\{\gamma\}$ and as $\lfloor \gamma \rfloor$ with probability $1-\{\gamma\}$, and then we run the algorithm with $\gamma'$ instead of $\gamma$. We show that the guarantee from Theorem \ref{thm_minimum:hop} holds in expectation with an additive term less than $\{\gamma\}$, more precisely, we show the competitive ratio
	\[
	\min\left\{\left( 1+\frac{1}{\gamma} +\xi \right) \cdot \left(1+\frac{k_h}{\opt} \right), \gamma\right\}, \text{ for } \xi = \frac{ \{\gamma\}(1-\{\gamma\}) }{(\gamma -1) \lfloor \gamma \rfloor (\lfloor \gamma \rfloor - 1)} \leq \frac{\{\gamma\}}{(\gamma-1)^2} .
	\]
	
	Exactly as in the proof of~\Cref{thm:min-problem-arbitrary-gamma}, we get
	$
	\EX[\gamma'] =  \gamma.
	$
	The error-dependent bound on the competitive ratio is in expectation (with $\opt$ and $k_h$ not being random variables)
	\begin{align*}
		\EX\left[\left(1+\frac{1}{\gamma'}\right) \cdot \left( 1+ \frac{k_h}{\opt} \right)\right] = \left(1+\EX\left[\frac{1}{\gamma'}\right]\right)\cdot \left( 1+ \frac{k_M}{\opt} \right).
	\end{align*}
	
	Applying simple algebraic transformations, we obtain
	\begin{align*}
		\EX\left[\frac{1}{\gamma'}\right] &= \{\gamma\}\frac{1}{\gceil} + (1-\{\gamma\})\cdot\frac{1}{\gfloor}\\
		&= \{\gamma\}\frac{1}{\gceil} + (1-\{\gamma\})\cdot\frac{1}{\gfloor} \ -\  \frac{1}{\gamma} \ +\  \frac{1}{\gamma}\\
		&= \frac{1}{\gamma} + \frac{1}{\gamma \gceil \gfloor}\cdot \bigg( \{\gamma\} \gamma \gfloor  + (1- \{\gamma\}) \cdot \gceil  \gamma - \gceil \gfloor  \bigg).
	\end{align*}
	As we consider fractional $\gamma >2$, it holds $\lceil \gamma\rceil = \gfloor + 1$. Using this, we rewrite the term in brackets as
	\begin{align*}
		& \{\gamma\} \gamma \gfloor  + (1- \{\gamma\}) \cdot (\gfloor +1)  \gamma - (\gfloor +1)  \gfloor\\
		& = \{\gamma\} \gamma \gfloor  + (\gfloor +1)  \gamma - \{\gamma\} \cdot (\gfloor +1)  \gamma - (\gfloor +1)  \gfloor\\
		& = (\gfloor +1)(\gamma -\gfloor) - \{\gamma\} \cdot \gamma\\
		& = \{\gamma\} (\gfloor + 1 - \gamma) \\
		& = \{\gamma\}(1-\{\gamma\}),
	\end{align*}
	where the third and fourth equalities come from the fact that $\gamma -\gfloor = \{\gamma\}$. Note that for $\{\gamma\} \geq 0$, the expression $\{\gamma\}(1-\{\gamma\})$ is at most $1/4$, where it reaches its maximum for $\{\gamma\} =1/2$. Further, notice that for fractional $\gamma > 2$ it holds that $\gamma \gceil \gfloor \geq 12$. We conclude that
	\[
	\EX{\frac{1}{\gamma'}} \leq \frac{1}{\gamma} + \frac{ \{\gamma\}(1-\{\gamma\}) }{\gamma \gceil \gfloor} \leq \frac{1}{\gamma} + \frac{1}{48} < \frac{1}{\gamma} + 0.021
	\]
	which proves the theorem.
\end{proof}

\section{Sorting under uncertainty} 
\label{sec:sorting}

 In this section, we consider the special case of the hypergraph orientation problem, where  {the input graph is a simple graph $G=(V,E)$ that satisfies $\{u,v\} \in E$ if and only if $I_v \cap I_u \not= \emptyset$. That is, $G$ corresponds to the interval graph induced by the uncertainty intervals $\mathcal{I}= \{I_v \mid v\in V\}$.}
 To orient such a graph, we have to, for each pair of intersecting intervals, decide which one has the smaller precise weight.
 An orientation of the graph defines an order of the intervals according to their precise weights (and vice versa).
 Thus, the problem corresponds to the problem of sorting a single set of uncertainty intervals.
 Note that, by querying vertices, the uncertainty intervals change and, thus, the graph induced by the intervals also changes. When we speak of the \emph{current interval graph}, we refer to the interval graph induced by the uncertainty intervals \emph{after} all previous queries.
 
 As the main result of this section, we give a learning-augmented algorithm for sorting under explorable with predictions that is $1$-consistent and $2$-robust with a linear error-dependency for any $k \in \{k_{\#}.k_M,k_h\}$. Clearly, no algorithm can be better than $1$-consistent, and no deterministic algorithm can be better than $2$-robust according to the adversarial lower bound.
 Before we give our algorithmic results, we show that a sublinear error dependency is not possibly by giving a lower bound example.
 
 \begin{restatable}{theorem}{ThmLBAllErrorMeasures}
 	\label{thm_lb_error_measure}
 	Any deterministic algorithm for sorting or hypergraph orientation under explorable uncertainty with predictions (even for pairwise disjoint hyperedges) has a competitive ratio $\rho\geq \min\{1+\frac{k}{\opt},2\}$, for any error measure $k \in \{k_{\#}, k_M, k_h\}$.
 \end{restatable}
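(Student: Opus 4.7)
The plan is to give a single common construction that proves the bound for all three error measures simultaneously. Fix $\opt$ and $k \le \opt$ (the case $k > \opt$ is handled by the classical adversarial lower bound of~$2$ from~\cite{kahan91queries,halldorsson19sortingqueries}, since $\min\{1+k/\opt, 2\} = 2$ in that regime). I would construct an instance with $n := \opt$ pairwise-disjoint pairs $\{u_i,v_i\}$ of intersecting intervals, using for each pair a shifted copy of $I_{u_i} = (0,2)$ and $I_{v_i} = (1,3)$, and designate exactly $k$ of them as \emph{error pairs}. For each non-error pair, I set $\bar{w}_{u_i} = w_{u_i} = 1.5$ and $\bar{w}_{v_i} = w_{v_i} = 2.5$; Lemma~\ref{lema_mandatory_min} then yields $\mathcal{I}_P = \mathcal{I}_R = \{v_i\}$ for the pair, so it contributes $1$ to $\opt$, $0$ to every error measure, and is orientable with the single query~$v_i$.

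For each error pair I would set $\bar{w}_{u_i} = 1.4$ and $\bar{w}_{v_i} = 1.5$, both inside the intersection $(1,2)$, so that by Lemma~\ref{lema_mandatory_min} both vertices are prediction mandatory. An adaptive adversary then responds to the algorithm's first query on that pair as follows: if it queries $u_i$, the adversary returns $w_{u_i}=1.4$ and commits to $w_{v_i}=2.5$; if it queries $v_i$, the adversary returns $w_{v_i}=1.5$ and commits to $w_{u_i}=0.5$. In both subcases Corollary~\ref{cor_min_left_mandatory} makes the unqueried vertex known mandatory immediately after the first query, so the algorithm is forced to issue a second query, and a direct application of Lemma~\ref{lema_mandatory_min} shows that the real mandatory set restricted to the pair is the singleton $\{v_i\}$ in the first subcase and $\{u_i\}$ in the second. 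Hence the error pair contributes $1$ to $\opt$, exactly $2$ to $|\ALG|$, and exactly $1$ to each of $k_\#$, $k_h$ and $k_M$: only one of the two predicted weights is changed, that change crosses exactly one endpoint of the partner interval, and $|\mathcal{I}_P \sym \mathcal{I}_R|$ restricted to the pair equals~$1$.

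Summing over pairs gives $\opt = n$, error $k$ for any of the three measures, and $|\ALG| \ge (n-k) + 2k = n+k$, so the competitive ratio is at least $1 + k/\opt$, establishing the claim. The delicate step is the choice of predictions for an error pair. A naive construction in which the prediction flags one vertex as mandatory while the truth flags the other would contribute~$2$ to $k_M$ per bad pair and hence only yield the weaker bound $1 + k_M/(2\opt)$. Putting both predicted weights into the pairwise intersection, so that both vertices are prediction mandatory while the adaptive adversary makes only one of them really mandatory, is precisely what drives $k_M$ down to~$1$ per error pair; checking simultaneously that the \emph{same} construction still contributes only~$1$ to $k_\#$ and~$1$ to $k_h$ is the only place where some care is needed. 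Since the hyperedges used are just pairwise-disjoint edges, the construction doubles as a valid sorting instance (the induced interval graph is a disjoint union of edges), so the same bound covers both problems in the statement.
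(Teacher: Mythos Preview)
Your proof is correct and follows essentially the same approach as the paper: a two-vertex gadget with both predicted weights in the intersection (so both vertices are prediction mandatory), an adaptive adversary that confirms the first query and places the other weight outside the intersection, and disjoint copies to scale. The paper's proof is terser—it presents the single gadget and then writes ``taking multiple copies of this instance gives the result for any $k\le\opt$''—whereas you explicitly distinguish error pairs from non-error pairs (with correct predictions and a single mandatory vertex) to realize every pair $(k,\opt)$ with $k\le\opt$; this is a useful elaboration but not a different idea.
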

 
 \begin{proof}
 	Consider
 	the input instance of the hypergraph orientation problem consisting of a single edge $\{v,u\}$ with intervals and predicted weights as
 	shown in Figure~\ref{fig_lb_error_measure}.
 	
 	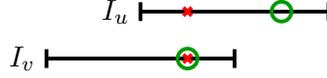
\begin{figure}[t]
 		\centering
 		\begin{tikzpicture}[thick,line width = \lw, scale=1.25]
 			\intervalpr{$I_v$}{0}{2}{0}{1.5}{1.5}
 			\intervalpr{$I_u$}{1}{3}{0.5}{1.5}{2.5}
 		\end{tikzpicture}
 		\caption{
 			Uncertainty intervals as well as the predicted and precise weights as used in the proof of~\Cref{thm_lb_error_measure}.
 		}
 		\label{fig_lb_error_measure}
 	\end{figure}
 	
 	If the algorithm starts querying~$I_v$, then the adversary sets $w_v = \pred{w}_v$ and the algorithm is forced to query~$u$. Then $w_u \in I_u \setminus I_v$, so the optimum queries only~$u$.
 	It is easy to see that $k_{\#} = k_M = k_h = 1$.
 	A symmetric argument holds if the algorithm starts querying~$u$. In that case, $w_u = \pred{w}_u$ which enforces to query $v$ with $w_v \in I_v \setminus I_u$. Taking multiple copies of this instance gives the  result for any $k \leq \opt$.
 \end{proof}
 
 \subsection{A Learning-augmented Algorithm for Sorting}
 
 As a main result of this section, we show the following upper bound that matches~\Cref{thm_lb_error_measure}.
 
 \begin{restatable}{theorem}{sortingsingleset}
 	\label{thm:sorting:singleset}
 	There exists a single polynomial-time algorithm for 
 	{sorting} 
 	under uncertainty with predictions that is $\min\{1 + \frac{k}{\opt}, 2\}$-competitive for any $k \in \{k_{\#},k_M,k_h\}$. 
 \end{restatable}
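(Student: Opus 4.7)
The plan is to emulate the offline \OPT\ on the predicted inputs. For sorting, the vertex cover instance coincides with $G$, so \OPT\ queries all mandatory vertices (Lemma~\ref{lema_mandatory_min}) followed by a minimum vertex cover (MVC) of the residual interval graph; since interval graphs admit polynomial MVC, the whole pipeline is polytime. The algorithm computes the prediction mandatory set $\mathcal{I}_P$ by applying Lemma~\ref{lema_mandatory_min} to $\overline{w}$ in place of $w$, queries $\mathcal{I}_P$ while propagating known-mandatory queries via Corollary~\ref{cor_min_left_mandatory} after each query, then computes an MVC of the current interval graph, queries it with another propagation, and iterates MVC+propagation until the instance is solved. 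Lemma~\ref{lem:VC-property} ensures termination.

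Consistency is immediate: if all predictions are correct then $\mathcal{I}_P = \mathcal{I}_R$, no further iteration is needed, and the MVC matches \OPT's MVC part, so $|\ALG| = |\OPT|$. For the combined 2-robust/error-sensitive bound, I would fix an optimal $\OPT$ and injectively charge every query made by \ALG\ to either a distinct element of $\OPT$ (with a factor-two overhead arising from MVC edges, each of which contains an $\OPT$-vertex by Lemma~\ref{lem:witness}) or a distinct prediction error. Queries in $\mathcal{I}_P \cap \mathcal{I}_R$ and all known-mandatory queries lie in $\OPT$; MVC queries contribute the factor-two overhead; queries in $\mathcal{I}_P \setminus \mathcal{I}_R$ together with late-discovered truly mandatory queries in $\mathcal{I}_R \setminus \mathcal{I}_P$ are charged to errors---directly to $k_M$ by definition, to $k_h$ via Theorem~\ref{Theo_hop_distance_mandatory_distance}, and to $k_{\#}$ by the argument below. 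In the regime where the error is large, the algorithm must be prevented from over-spending on $\mathcal{I}_P$; this is achieved by interleaving prediction-based queries with witness-pair checks so that whenever the running ``wasted'' count exceeds the confirmed-useful count, the algorithm falls back to pure witness-pair queries (Lemma~\ref{lem:witness}), which guarantees 2-robustness without sacrificing 1-consistency in the error-free case.

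The main obstacle is the $k_{\#}$-bound, since this measure is oblivious to the interval structure. The plan is to trace each $v \in \mathcal{I}_P \sym \mathcal{I}_R$ back, through the conditions of Lemma~\ref{lema_mandatory_min}, to a witness vertex $u$ in the same edge whose precise weight $w_u$ crosses an endpoint of $I_v$ relative to $\overline{w}_u$, thereby forcing $w_u \neq \overline{w}_u$; the challenge is to set this mapping up \emph{injectively}---for instance via a canonical left-to-right sweep over the intervals that charges each mis-predicted vertex at most once---so that the total number of extra queries is bounded by $k_{\#}$. Combined with the factor-two robustness obtained from MVC/witness-pair charging, this yields $|\ALG| \le |\OPT| + k_{\#}$ and hence $\min\{1+k_{\#}/\opt, 2\}$-competitiveness, and the same algorithm and accounting simultaneously deliver the analogous bounds for $k_M$ and $k_h$.
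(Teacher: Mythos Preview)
Your proposal has a genuine gap in the $2$-robustness argument. You describe the second phase as ``compute an MVC of the current interval graph, query it with propagation, and iterate'', treating the choice of MVC as irrelevant. The paper shows by explicit example (Figure~\ref{fig:ex:vcselection}) that this is false: after querying $\mathcal{I}_P=\{v_1\}$ on the five-vertex path, the residual graph is the path $v_2v_3v_4v_5$, and picking the MVC $\{v_3,v_5\}$ forces queries to $v_2$ and $v_4$ in the propagation step, so the algorithm queries all five vertices while $\OPT=\{v_2,v_4\}$. Your charging scheme (``MVC queries contribute the factor-two overhead'') breaks here because the isolated prediction-mandatory query $v_1$ is neither in $\OPT$ nor covered by any clique or witness pair you have identified; you cannot charge it to an error either, since there may be none among $\mathcal{I}_P$ itself. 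The ``fallback to witness pairs when wasted exceeds useful'' patch does not help: in this instance nothing is wasted until the very last query, and in general such a fallback cannot be triggered without information the algorithm does not have, so it either never fires (losing robustness) or fires spuriously (losing $1$-consistency).

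The paper's fix is structural rather than adaptive. Lemma~\ref{lem:clique_partition_summary} shows that $\mathcal{I}_P\cup M$ can be partitioned into cliques so that every singleton is either known-mandatory or comes with a \emph{distinct} partner $u\notin\mathcal{I}_P\cup M$ whose interval intersects it; this partner is necessarily an endpoint of one of the residual path components (Lemma~\ref{lem:path}). The algorithm then chooses, for each even path, the MVC that contains the endpoint attached to such a critical singleton (Lines~\ref{line:patheven1}--\ref{line:patheven2}). The robustness proof groups each path together with its attached singletons into a set $W$ and shows $|\ALG\cap W|\le 2|\OPT\cap W|$ by a parity/case analysis. None of this machinery appears in your plan, and the clique-partition lemma in particular is the nontrivial combinatorial ingredient you are missing. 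Your outline of the $k_{\#}$ argument is directionally right but also understates the work: the paper needs the parent map $\pi$, Lemma~\ref{lem:endpoints} (only path endpoints can be parents), and a reshuffling of error charges between the $S_u$ cliques and the path sets to make the injection go through.
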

 
 The key observation that allows us to achieve improved results for orienting interval graphs is the simple characterization of mandatory vertices: any vertex with an interval that contains the precise weight of another vertex is mandatory~\cite{halldorsson19sortingqueries}. 
 This observation is a direct consequence of Lemma~\ref{lema_mandatory_min} and the structure of interval graphs.
 Analogously, each vertex with an interval that contains the predicted weight of another vertex is prediction mandatory. 
 Furthermore,  {Lemma~\ref{lem:witness}} 
 implies that any two vertices with intersecting intervals constitute a witness set.
 
 To obtain a guarantee of $\opt + k$ for any measure~$k$, our algorithm (cf.\ Algorithm~\ref{fig:sorting1cons2rob}) must trust the predictions as much as possible.
 That is, the algorithm must behave very close to the offline algorithm under the assumption that the predictions are correct. 
 Recall that the offline algorithm in a first stage queries all mandatory vertices and in a second stage queries a minimum vertex cover in the remaining vertex cover instance after the first stage queries.
 Algorithm~\ref{fig:sorting1cons2rob} emulates again these two stages. In contrast to the algorithms for general hypergraph orientation, we cannot afford to augment the stages with additional queries as we aim at achieving $1$-consistency. Thus, we need a new algorithm and cannot apply existing results.
 
 In the emulated first phase, our algorithm queries all prediction mandatory intervals (cf.\ Line~\ref{line:containpredicted}) and all intervals that are mandatory based on the already obtained information (cf.\ Lines~\ref{line:mandatoryproper} and~\ref{line:mandatory}).
 This phase clearly does not violate the $\opt + k$ guarantee for $k \in \{k_M,k_h\}$, as all queried known mandatory vertices (cf.\ Lines~\ref{line:mandatoryproper} and~\ref{line:mandatory}) are contained in $\OPT$ and all queried prediction mandatory vertices (cf.\ Line~\ref{line:containpredicted}) are either in $\OPT$ or contribute one to $k_M \le k_h$.
 We will show that the same holds for $k=k_{\#}$.
 However, the main challenge is to guarantee $2$-robustness. 
 Our key ingredient for ensuring 
 this is the following lemma, which we show in~\Cref{predorient:sec:cliquepartition}.
 
 \begin{algorithm}[th]
 	\KwIn{Interval graph $G=(V,E)$, intervals $I_v$ and predictions $\w_v$ for all $v \in V$}
 	$\mathcal{I}_P \gets$ set of prediction mandatory vertices\label{line:computepm}\;
 	\lWhile{there is a known mandatory vertex~$v$ by~\Cref{cor_min_left_mandatory}}{query $v$\label{line:mandatoryproper}}
 	$M_1 \gets$ vertices queried in Line~\ref{line:mandatoryproper}; $\mathcal{S} \gets \mathcal{I}_P \setminus M_1$ \label{line:fixS}\;
 		Query $\mathcal{S}$\label{line:containpredicted}\;
 	\lWhile{there is a known mandatory vertex~$v$ by~\Cref{cor_min_left_mandatory}}{query $v$\label{line:mandatory}}
 	$M_2 \gets$ set of vertices queried in Line~\ref{line:mandatory}\;
 	$\mathcal{C} \gets$ Clique partition of $\mathcal{S} \cup M_1 \cup M_2$ such that all isolated vertices $v$ satisfy either $v \in M_1 \cup M_2$ or $I_u \cap I_v \not= \emptyset$
 	for a  distinct $u \not\in \mathcal{S} \cup M_1 \cup M_2$ 
 	(computed using Lemma~\ref{lem:clique_partition_summary})\label{line:cliquepartition}\;
 	\While{the problem is unsolved\label{line:vcbeginsort}}{
 		\KwLet $P = x_1 x_2 \cdots x_p$ be a path component of the current interval graph with $p \geq 2$ in direction of non-increasing lower limits $L_{x_i}$\; \label{line:looppaths}
 		\lIf{$p$ is odd}{query $\{x_2, x_4, \ldots, x_{p-1}\}$\label{line:pathodd}}
 		\Else{
 			\lIf{
 				$x_1$ is the distinct partner of a critical isolated vertex $v$~($I_{x_1} \cap I_v \not= \emptyset$ and $v \not\in M_1 \cup M_2$; cf.~\Cref{lem:clique_partition_summary})
 			}{query $\{x_1,x_3,\ldots,x_{p-1}\}$\label{line:patheven1}}
 			\lElse{query $\{x_2, x_4, \ldots, x_{p}\}$\label{line:patheven2}}
 		}
 		\lWhile{there is a known mandatory vertex~$v$ by~\Cref{cor_min_left_mandatory}}{query $v$\label{line:mandatory2}}
 	}
 	\caption{Learning-augmented algorithm for sorting under explorable uncertainty with predictions} 
 \label{fig:sorting1cons2rob}
\end{algorithm}

\begin{restatable}{lemma}{LemClique}
\label{lem:clique_partition_summary}
For an instance of the 
{sorting} problem, let $\mathcal{I}_P$ be the set of prediction mandatory vertices and 
$M$ 
the set of known mandatory vertices after querying $\mathcal{I}_P$ (by exhaustively applying Corollary~\ref{cor_min_left_mandatory}).
Then, we can partition $\mathcal{I}_P \cup M$  into a set of disjoint cliques $\mathcal{C}$ such that each $v$ with $\{v\} \in \mathcal{C}$ either satisfies $v \in M$ or $I_v \cap I_u \not= \emptyset$ for a distinct $u \not\in \mathcal{I}_P \cup M$. 
The partition can be computed in polynomial time.
\end{restatable}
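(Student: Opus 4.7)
The plan is to build $\mathcal{C}$ from the enforcer structure of mandatoriness. For each $v\in\mathcal{I}_P$, Lemma~\ref{lema_mandatory_min} supplies a vertex $\phi(v)\ne v$ with $\bar w_{\phi(v)}\in I_v$; I would choose $\phi(v)\notin\mathcal{I}_P\cup M$ whenever a witness outside exists. For each $v\in M\setminus\mathcal{I}_P$, the definition of $M$ via Corollary~\ref{cor_min_left_mandatory} yields a \emph{queried} vertex $\psi(v)\in\mathcal{I}_P\cup M$ with $w_{\psi(v)}\in I_v$. Attach to each $v$ a pivot point $p_v$ equal to $\bar w_{\phi(v)}$ or $w_{\psi(v)}$; in both cases $p_v\in I_v$.

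The key observation is a Helly-style one: any set of intervals that all contain a common point forms a clique in~$G$. Accordingly, I would partition $\mathcal{I}_P\cup M$ by grouping vertices with the same pivot, and then insert into each group the enforcer $u$ whose weight is the pivot (if $u\in\mathcal{I}_P\cup M$). Since $u$ contains its own weight, every group remains a clique. Conflicts---where a single $u$ could center both a predicted and a precise group---are resolved by a fixed priority rule, e.g., predicted over precise; the orphaned precise group still pairwise intersects via $w_u$. The construction is polynomial since it reduces to finding witnesses, grouping by a linear-size set of pivot points, and one conflict-resolution pass.

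For the singleton condition, take $\{v\}\in\mathcal{C}$. If $v\in M$, the condition holds immediately. Otherwise $v\in\mathcal{I}_P\setminus M$. If the chosen $\phi(v)\in\mathcal{I}_P\cup M$, then the centering rule triggered by $v$ itself places $\phi(v)$ into the same predicted group as $v$, contradicting the singleton assumption. Hence $\phi(v)\notin\mathcal{I}_P\cup M$, and $\bar w_{\phi(v)}\in I_v\cap I_{\phi(v)}$ exhibits the required vertex $u=\phi(v)\notin\mathcal{I}_P\cup M$ with $I_v\cap I_u\ne\emptyset$.

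The hardest part will be handling chains and cycles in the functional digraph $v\mapsto\phi(v)$ restricted to $\mathcal{I}_P\cup M$. A single vertex $u\in\mathcal{I}_P\cup M$ may be both an enforcer of some $v$ and itself enforced by $\phi(u)\in\mathcal{I}_P\cup M$, and a naive priority rule can free $u$ as a center only to isolate $\phi(u)$. The delicate step is to show that such chains can always be paired up consistently---essentially a matching argument within the interval graph induced on $\mathcal{I}_P\cup M$---using the Helly property and the geometric constraint that two enforcer weights both lying inside the same $I_v$ must themselves lie in intersecting intervals. This is what will rule out any scenario in which a \emph{bad} singleton, i.e., a $v\in\mathcal{I}_P\setminus M$ whose $G$-neighbors all lie in $\mathcal{I}_P\cup M$, survives the construction.
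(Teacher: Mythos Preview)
Your high-level idea---assign to each $v\in\mathcal{I}_P$ an enforcer $\phi(v)$ with $\bar w_{\phi(v)}\in I_v$, then group by this pivot---matches the paper's starting point (the paper calls $\phi(v)$ the \emph{parent} $\pi(v)$). But the proposal does not actually resolve the part you yourself flag as hardest, and that gap is real.

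The concrete problem is that your ``insert the enforcer into its group'' step does not yield a partition. Take a chain $v_1\!\to\! v_2\!\to\!\cdots\!\to\! v_k$ inside $\mathcal{I}_P$ (meaning $\phi(v_i)=v_{i+1}$). After grouping by pivot you have singletons $\{v_1\},\ldots,\{v_k\}$; your centering rule then wants $v_{i+1}$ in $v_i$'s group for every~$i$, so $v_2$ lands in two groups. Any fixed priority for resolving this cascades: put $v_2$ with $v_1$, then $v_2$'s own group is empty and $v_3$ must go somewhere, etc. Your singleton argument (``if $\phi(v)\in\mathcal{I}_P\cup M$ the centering rule places $\phi(v)$ with $v$'') presupposes that the centering rule is well-defined, which is exactly what fails along such chains---so the argument is circular. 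Saying the remaining work is ``essentially a matching argument'' using Helly is not a proof; in particular, the functional digraph of $\phi$ on $\mathcal{I}_P$ is a union of in-trees feeding a single cycle per component, not a collection of disjoint chains, and odd cycles cannot be handled by matching alone.

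For comparison, the paper avoids the multi-membership issue by first turning the functional digraph into a \emph{forest of arborescences} (adding $(\pi(v),v)$ only when it creates no cycle) and then running a simple bottom-up clique-partition on each tree (children of a node, possibly together with the node). This immediately yields at most one size-$1$ clique per tree, namely at the root. The substantive work is a separate repartitioning lemma for the one tree whose root~$r$ lies in $\mathcal{I}_P\setminus M$ (i.e., the edge $(\pi(r),r)$ was the one discarded to break the cycle): using the back-edge vertex $m$ with $\bar w_m\in I_r$, the paper shows by a careful case analysis on the root cliques of subtrees that this tree can always be repartitioned so that \emph{no} clique has size~$1$. That lemma is where the real content lives, and your proposal does not supply an analogue of it.
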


We can apply the lemma to the queries of the first phase of the algorithm (cf.\ Line~\ref{line:cliquepartition}).
Given the partition $\mathcal{C}$ of the lemma, we know that queries to vertices $v$ that are part of some $C \in \mathcal{C}$ with $|C| \ge 2$ (or mandatory, i.e., $v \in M$) do not violate the $2$-robustness as even the optimal solution can avoid at most one query per clique~\cite{halldorsson19sortingqueries}.
Thus, we only have to worry about vertices $v \not\in M$ with $\{v\} \in \mathcal{C}$.
We call such vertices \emph{critical isolated} vertices. 
But even for  {critical} isolated vertices $v$, the lemma gives us a distinct not yet queried $u$ with $I_v \cap I_u \not= \emptyset$, i.e., $\{v,u\}$ is a witness set.

In line with the offline algorithm, the second phase of the algorithm (cf.~Lines~\ref{line:vcbeginsort} to~\ref{line:mandatory2}) queries a minimum vertex cover of the remaining instance (the interval graph defined by the intervals of non-queried vertices).  
However, to guarantee $2$-robustness, we have to take the witness sets of the  {critical} isolated vertices into account when deciding which vertex cover to query.
To see this, consider the example of Figure~\ref{fig:ex:vcselection}:  
only $v_1$ is prediction mandatory, so the first phase of the algorithm just queries $v_1$.
After querying $v_1$, there are no prediction mandatory (or mandatory) vertices left.
As the only vertex queried in the first phase, $\{v_1\}$ must be part of every clique partition. Since $v_1$ is not mandatory, it would qualify as a critical isolated vertex, and $v_2$ is the only possible distinct partner of $v_1$ with an intersecting interval that~\Cref{lem:clique_partition_summary} could assign to $v_1$.
After querying $v_1$, the remaining vertex cover instance is the path $v_2,v_3,v_4,v_5$.
One possible minimum vertex cover of this instance is $\{v_3,v_5\}$, but querying this vertex cover renders $v_2$ and  {$v_4$} mandatory by~\Cref{cor_min_left_mandatory}.
Thus, the algorithm would query all five intervals, which violates the $2$-robustness as the optimal solution just queries $\{v_2,v_4\}$.
The example illustrates that the selection of the minimum vertex cover in the second phase is important to ensure $2$-robustness. 

\begin{figure}[t]
\centering
\begin{tikzpicture}[line width = \lw, scale = 1, transform shape]
	\intervalpr{$I_{v_1}$}{0}{3}{0}{0.5}{1}		
	\intervalpr{$I_{v_2}$}{2}{5}{0.7}{2.5}{3.5}		
	\intervalpr{$I_{v_3}$}{4}{7}{0}{5.5}{4.5}					
	\intervalpr{$I_{v_4}$}{6}{9}{0.7}{7.5}{7.5}	
	\intervalpr{$I_{v_5}$}{8}{11}{0}{9.5}{8.5}	
\end{tikzpicture}
\caption{Example showing that it is necessary to query a specific vertex cover in the second phase to ensure $2$-robustness. Circles illustrate precise weights and crosses illustrate the predicted weights.}
\label{fig:ex:vcselection}
\end{figure}
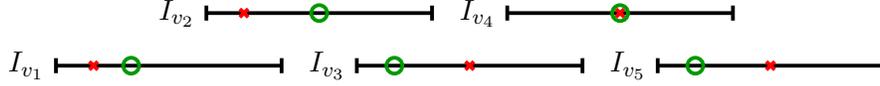

The next lemma shows that instances occurring in the second phase indeed have a structure similar to the example by exploiting that such instances have no prediction mandatory vertices.

\begin{restatable}{lemma}{PathLemma}
\label{lem:path}
Each connected component of an interval graph without prediction mandatory vertices is either a path or a single vertex.
\end{restatable}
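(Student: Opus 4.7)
The plan is to establish three structural properties of $G$ and then combine them with chordality. Recall that by Lemma~\ref{lema_mandatory_min}, in the sorting/interval-graph setting a vertex $v$ is prediction mandatory iff $\w_u\in I_v$ for some neighbor~$u$.

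I would first show that no interval is contained in another: if $I_u\subseteq I_v$ then $\w_u\in I_u\subseteq I_v$, so $v$ would be prediction mandatory. Next I would rule out triangles. Given a pairwise-intersecting triple $\{a,b,c\}$, no-containment forces, after sorting by right endpoint, the strict chain $U_a<U_b<U_c$ and correspondingly $L_a<L_b<L_c$. Then $\w_b\in(L_b,U_b)$ must land either in $I_a$ (if $\w_b<U_a$, since $L_a<L_b<\w_b<U_a$) or in $I_c$ (if $\w_b\ge U_a$, using the strict bound $L_c<U_a$ from $a,c$ intersecting as open intervals, together with $\w_b<U_b<U_c$). Either outcome produces a prediction mandatory vertex, contradiction.

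Third, every vertex has degree at most $2$. If $v$ had three neighbors $u_1,u_2,u_3$, two pigeonhole arguments suffice: if two of them had $U_{u_i}\ge U_v$, they would intersect one another (both contain any point of $(\max L_{u_i},U_v)$) and together with $v$ form a triangle, so at most one has $U_{u_i}\ge U_v$; symmetrically at most one has $L_{u_i}\le L_v$. Out of three neighbors this leaves some $u_i$ with $L_v<L_{u_i}$ and $U_{u_i}<U_v$, i.e.\ $I_{u_i}\subsetneq I_v$, contradicting the first claim.

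Finally, a connected graph of maximum degree~$2$ is either a single vertex, a path, or a cycle; triangle-freeness excludes $C_3$, and chordality of interval graphs excludes induced $C_k$ for $k\ge 4$, which any longer cycle in a max-degree-$2$ component would be. Hence every component is a path or a single vertex. I expect the main technical care to fall on the triangle step, in particular the boundary case $\w_b=U_a$, which is resolved by the strictness of open-interval intersection $L_c<U_a$; the remaining steps are short pigeonhole observations and a one-line appeal to chordality.
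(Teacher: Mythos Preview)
Your proof is correct and shares its first two steps with the paper's argument: both establish that no interval contains another and then rule out triangles by the same ``$\w_b$ must fall in $I_a$ or $I_c$'' observation. The divergence is only in the final deduction. The paper notes that no-containment makes $G$ a \emph{proper} interval graph, invokes Wegner's characterization that proper interval graphs are $K_{1,3}$-free, and concludes from triangle-freeness plus $K_{1,3}$-freeness that every component is a path. You instead prove maximum degree~$2$ directly by a pigeonhole on the three neighbors (two on the same side would create a triangle; otherwise one is contained), and then appeal to chordality to kill long cycles. Your route is slightly more self-contained, as it avoids citing the forbidden-subgraph characterization of proper interval graphs, at the cost of an extra short paragraph; the paper's route is terser but leans on an external structural result. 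Both are clean and essentially equivalent in spirit.
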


\begin{proof}
First, observe that there are no intervals $I_v,I_u$ with $I_u \subseteq I_v$ as this would imply $\w_u \in I_v$, which contradicts the assumption as $v$ would be prediction mandatory.
Thus, the graph is  {a} proper  {interval graph}.
We claim that the graph contains no triangles; for proper interval graphs, this implies that each connected component is a path, because the 4-star $K_{1,3}$ is a forbidden induced subgraph~\cite{wegner67properinterval}.
Suppose 
there is a triangle $abc$, and assume that $L_a \leq L_b \leq L_c$; it holds that $U_a \leq U_b \leq U_c$ because no interval is contained in another.
Since~$I_a$ and~$I_c$ intersect, we have that $U_a \geq L_c$, so $I_b \subseteq I_a \cup I_c$ and it must hold that $\pred{w}_b \in I_a$ or $\pred{w}_b \in I_c$, a contradiction to the instance being prediction mandatory free.
\end{proof}

Further, we observe that if the intervals of critical isolated vertices intersect intervals of vertices on such a path component, they must also intersect the interval of an endpoint of the component. 
Otherwise, the predicted weight $\w_v$ of the critical isolated vertex $v$ would be contained in the interval of at least one vertex on the path component, which contradicts the vertices on the path not being prediction mandatory.
The distinct partner $u$ of a critical isolated vertex $v$ that exists by~\Cref{lem:clique_partition_summary} is an endpoint of such a path component, as we show in~\Cref{predorient:sec:cliquepartition}.

The second phase of our algorithm iterates through all such connected components and, for each component, queries a minimum vertex cover (cf.~Lines~\ref{line:pathodd},~\ref{line:patheven1} and~\ref{line:patheven2}) and all resulting mandatory vertices (cf.~Line~\ref{line:mandatory2}).
If the path is of odd length, then the minimum vertex cover is unique. 
Otherwise, the algorithm selects the minimum vertex cover based on whether the interval of a critical isolated vertex intersects the interval of the first path endpoint.
This case would for example ensure that we pick the \enquote{right} vertex cover for the example instance of~\Cref{fig:ex:vcselection}.
Lemma~\ref{lem:VC-property} guarantees that the algorithm indeed queries a feasible query set.
The following lemma shows that this strategy indeed ensures $2$-robustness by exploiting~\Cref{lem:clique_partition_summary}.

\begin{lemma}
Algorithm~\ref{fig:sorting1cons2rob} is $2$-robust for sorting under explorable uncertainty with predictions.
\end{lemma}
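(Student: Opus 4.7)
The plan is to bound $|\ALG|\le 2|\OPT|$ by partitioning $V$ into \emph{blocks} and showing $|\ALG\cap B|\le 2|\OPT\cap B|$ for each block $B$. Phase 1 blocks come from the clique partition $\mathcal{C}$ of $\mathcal{I}_P\cup M_1\cup M_2$ supplied by \Cref{lem:clique_partition_summary}; phase 2 blocks are the connected components of the interval graph on $V'=V\setminus(\mathcal{I}_P\cup M_1\cup M_2)$, which by \Cref{lem:path} are single vertices or paths. Each critical isolated $v\in\mathcal{C}$ (a non-mandatory singleton clique) is merged into the phase 2 block containing its distinct partner $\phi(v)$ supplied by \Cref{lem:clique_partition_summary}; the word ``distinct'' in that lemma makes $\phi$ injective, and as remarked after \Cref{lem:path}, $\phi(v)$ lies at an endpoint of a phase 2 path component or at an isolated $V'$-vertex, so each path block absorbs at most two critical isolateds (one per endpoint).

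For a phase 1 clique $C\in\mathcal{C}$ with $|C|\ge 2$, pairwise intersecting intervals make every pair a witness set by \Cref{lem:witness}, so $|\OPT\cap C|\ge|C|-1$ and hence $|C|\le 2|\OPT\cap C|$. Mandatory singletons lie in $\OPT$ directly. For a phase 2 path block $B_P$ with $P=x_1\cdots x_p$ and no adjoined critical isolated, the algorithm queries $VC_P\cup M_P$ on $V(P)$ (disjoint sets), and since $\OPT\cap V(P)$ is both a vertex cover of $P$ and a superset of $M_P$, one has $|\OPT\cap V(P)|\ge\max(|VC_P|,|M_P|)\ge(|VC_P|+|M_P|)/2$, giving the bound immediately. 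Isolated $V'$-vertices need no queries, and any adjoined critical isolated $v\in\OPT$ simply adds $1$ to both sides, preserving the bound.

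The main obstacle is absorbing the extra algorithm query per adjoined critical isolated $v\notin\OPT$, since only $\phi(v)\in\OPT$ (via the witness set $\{v,\phi(v)\}$) while the algorithm still pays for $v$. The algorithm's vertex cover choice in Lines~\ref{line:patheven1}--\ref{line:patheven2} is designed precisely for this: when $\phi(v)=x_1$ and $P$ is even, the algorithm picks $VC_P=\{x_1,x_3,\dots,x_{p-1}\}$, placing $x_1$ in $VC_P$ and hence outside $M_P$. Thus $x_1\in\OPT\cap V(P)\setminus M_P$ sharpens the bound to $|\OPT\cap V(P)|\ge|M_P|+1$, which combined with $|\OPT\cap V(P)|\ge|VC_P|$ yields $|VC_P|+|M_P|+1\le 2|\OPT\cap V(P)|$; for odd paths the inherent slack $|VC_P|=(p-1)/2$ works directly. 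The delicate subcase is when both endpoints of an even $P$ are partners of $v_1,v_2\notin\OPT$, since the algorithm forces only $x_1$ (not $x_p$) into $VC_P$. I finish this by a short dichotomy on whether $x_p\in M_P$: if $x_p\notin M_P$, then $\{x_1,x_p\}\subseteq\OPT\cap V(P)\setminus M_P$ directly yields $|\OPT\cap V(P)|\ge|M_P|+2$; if $x_p\in M_P$ and $|M_P|<p/2$, then $M_P\cup\{x_1\}$ fails to cover some internal edge of $P$ and feasibility of $\OPT$ as a vertex cover forces an additional element into $\OPT\cap V(P)$, again giving enough slack; and the case $|M_P|=p/2$ is tight and verified directly.
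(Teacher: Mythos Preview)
Your block decomposition is exactly the paper's: cliques from $\mathcal{C}$ plus path components of $V\setminus(\mathcal{I}_P\cup M_1\cup M_2)$, with each critical isolated vertex merged into the block of its distinct partner. The paper's path analysis, however, rests on one combinatorial fact you never state: when the algorithm queries $VC_P$ on a path, each queried $x_{2j}$ has $w_{x_{2j}}$ in at most one of $I_{x_{2j-1}},I_{x_{2j+1}}$ (these are disjoint since $P$ is an induced path), so it forces at most one Line~\ref{line:mandatory2} query. Hence $|M_P|\le |VC_P|$. The paper uses this to get $|\ALG\cap W|\le |W|-1$ whenever $|W|$ is odd, and then finishes with $|\OPT\cap W|\ge\lfloor|W|/2\rfloor$ from the path structure of~$W$.

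Your alternative route via $|\OPT\cap V(P)|\ge\max(|VC_P|,|M_P|)$ and the endpoint-in-$VC_P$ trick is fine for even $p$, and your dichotomy on $x_p\in M_P$ there is correct. But the phrase ``for odd paths the inherent slack $|VC_P|=(p-1)/2$ works directly'' hides a real gap. For odd $p$ with \emph{two} critical isolateds $v_1,v_2\notin\OPT$ (partners $x_1,x_p$), you need $(p-1)/2+|M_P|+2\le 2|\OPT\cap V(P)|$. Since $M_P\subseteq\{x_1,x_3,\dots,x_p\}$, the set $M_P\cup\{x_1,x_p\}$ is a vertex cover only when it equals all odd-indexed vertices; in that case $|\OPT\cap V(P)|$ could be exactly $(p+1)/2$, and the inequality would fail if $|M_P|=(p+1)/2$. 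Nothing in your outline rules this out. It \emph{is} ruled out, but only by the bound $|M_P|\le|VC_P|=(p-1)/2$ above, which you must invoke. Once you add that one line, your odd-path case closes (and in fact your $\max$ argument together with $|M_P|\le|VC_P|$ and $x_1,x_p\in\OPT$ gives the bound cleanly).
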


\begin{proof}
Fix an optimum solution $\OPT$. 
Let $M_1$, $M_2$ and $\mathcal{S}$ denote the phase one queries of the algorithm as defined in the pseudocode.
Consider the clique partition $\mathcal{C}$ as computed in Line~\ref{line:cliquepartition}, 
then all $C \in \mathcal{C}$ with $|C| \ge 2$ satisfy $|C| \le 2 \cdot  |C \cap \OPT|$ and all $C \in \mathcal{C}$ with $C \subseteq M_1 \cup M_2$ satisfy $|C| \le |C \cap \OPT|$.
The latter holds as all members of $M_1 \cup M_2$ are mandatory by~\Cref{lema_mandatory_min}.
Queries to vertices that are covered by such cliques do not violate the $2$-robustness.
This leaves members of $\mathcal{S}$ that are  {critical} isolated  {vertices} 
in $\mathcal{C}$ and queries of the second phase. 
We partition such queries (and some non-queried vertices) into a collection~$\mathcal{W}$ such that, for each $W \in \mathcal{W}$, the algorithm queries at most $2 \cdot |W \cap \OPT|$ vertices in~$W$.
If we have such a partition, then it is clear that we spend at most $2 \cdot \opt$ queries and are $2$-robust.

By Lemma~\ref{lem:clique_partition_summary}, there is a distinct vertex $u \not\in M_1 \cup M_2 \cup \mathcal{S}$ for each  {critical} isolated vertex $v$ with $I_v \cap I_u \not= \emptyset$; as we noted above and will show in~\Cref{predorient:sec:cliquepartition}, $u$ is the endpoint of a path component of the current instance before line~\ref{line:vcbeginsort}. 
We create the partition $\mathcal{W}$ as follows: Iteratively consider all connected (path) components $P$ of the current instance before line~\ref{line:vcbeginsort}. 
Let $W$ be the union of $P$ and the  {critical} isolated vertices that are the distinct partner of at least one endpoint of $P$.
If $|W| \ge 2$, add $W$ to $\mathcal{W}$.
Then, $\mathcal{W}$ contains all  {critical} isolated vertices of~$\mathcal{C}$ and all vertices that are queried in the Lines~\ref{line:pathodd},~\ref{line:patheven1},~\ref{line:patheven2} and~\ref{line:mandatory2}.

We conclude the proof by arguing that each $W \in \mathcal{W}$ satisfies that the algorithm queries at most $2 \cdot |W \cap \OPT|$ vertices in~$W$.
By construction, $W$ contains a path component $P$ and up-to two  {critical} isolated vertices.
Furthermore, $W$ itself is a path in the initial interval graph
{(in addition to the edges of the path, there may be an additional edge between each critical
	isolated vertex of $\mathcal{C}$ in $W$ and the second or penultimate
	vertex of~$P$, but this does not affect the argument that follows)}. 
Consider an arbitrary $W \in \mathcal{W}$.
If $|W|$ is even, then $|W| \le 2 \cdot |W \cap \OPT|$ as all pairs of neighboring vertices in path $W$ are witness pairs.
Thus, assume that~$|W|$ is odd.
As each critical isolated vertex has a distinct partner by~\Cref{lem:clique_partition_summary} and this partner is an endpoint of a path component, $W$ contains at most one  {critical} isolated vertex per distinct endpoint of $P$ and, thus, we have $|P| \ge 2$.

We divide the analysis in two cases. First assume that $|P|=p$ is odd.
Then the algorithm queries $\{x_2, x_4, \ldots, x_{p-1}\}$ in Line~\ref{line:pathodd}.
As $P$ is a path, the precise weight of each queried vertex can be contained in the interval of at most one other vertex of $P$ and, therefore, force at most one query in Line~\ref{line:mandatory2}. 
This leaves at least one vertex in~$P\subseteq W$ that is never queried by the algorithm.
Since $|W \cap \OPT| \geq \lfloor |W|/2 \rfloor$ (as the subgraph induced by $W$ contains a path of the vertices in $W$), clearly the algorithm queries at most $2 \cdot |W \cap \OPT|$ vertices in~$W$.

Now assume that $|P|=p$ is even.
Then either~{$x_1$ or $x_p$} (but not both)
{is the distinct partner of a}
critical isolated member of $W$, otherwise~$|W|$ would be even.
If $I_{x_1}$ 
intersects the interval $I_v$ of the critical isolated vertex $v$, then the algorithm queries $\{x_1, x_3, \ldots, x_{p-1}\}$ in Line~\ref{line:patheven1}.
If~$w_{x_1}$ forces a query to~$x_2$ in Line~\ref{line:mandatory2} because $w_{x_1}\in I_{x_2}$, then $|\{x_1,x_2,v\}| \le 2 \cdot |\{x_1,x_2,v\} \cap \OPT|$ and the remaining vertices~$W' = W \setminus \{x_1,x_2,v\}$ form an even path, which implies $|W'| \le 2 \cdot |W' \cap \OPT|$ and, therefore $|W| \le 2 \cdot |W \cap \OPT|$.
If~$w_{x_1}$ forces no query to~$x_2$ in Line~\ref{line:mandatory2} because $w_{x_1} \not\in I_{x_2}$, then 
$|\{x_1,v\}| \le 2 \cdot |\OPT \cap \{x_1,v\}|$ and we analyze $W' = W \setminus \{x_1,v\}$ as in the subcase for odd $|P|$. Hence,  
the algorithm queries at most $2 \cdot |W \cap \OPT|$ intervals of $W$. 

If $I_{x_p}$ intersects 
the interval $I_v$ of critical isolated member $v$, then we can analyze $W$ analogously.
\end{proof}

\begin{lemma}
\label{thm_sorting_opt+kM}
Algorithm~\ref{fig:sorting1cons2rob} spends at most $\opt + k_M \le \opt + k_h$ queries.
\end{lemma}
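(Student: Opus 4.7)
The plan is to decompose $|\ALG|$ into the disjoint sets $M_1$, $\mathcal{S}$, $M_2$ (phase~1) and $Q'$, $M_3$ (phase~2, where $Q'$ denotes the vertex cover queried in Lines~\ref{line:pathodd}--\ref{line:patheven2} and $M_3$ the vertices queried in Line~\ref{line:mandatory2}), so $|\ALG| = |M_1| + |\mathcal{S}| + |M_2| + |Q'| + |M_3|$, and to charge each part either to $\OPT$ or to the mandatory-query distance~$k_M$.

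First I would observe that every vertex queried in Lines~\ref{line:mandatoryproper}, \ref{line:mandatory}, or~\ref{line:mandatory2} is queried because it satisfies the criterion of Corollary~\ref{cor_min_left_mandatory}, so it is mandatory. Hence $M_1 \cup M_2 \cup M_3 \subseteq \mathcal{I}_R \subseteq \OPT$. For the vertex cover $Q'$: for every edge $\{u,v\}$ of the current interval graph on $V' := V \setminus (M_1\cup\mathcal{S}\cup M_2)$, the intervals of $u$ and $v$ have not been reduced by phase-1 queries, so $I_u\cap I_v\neq\emptyset$ already in the original instance, so $\{u,v\}$ is a witness pair by Lemma~\ref{lem:witness} and $\OPT$ must query at least one endpoint. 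Hence $\OPT \cap V'$ is a vertex cover of the current interval graph on $V'$, and because $Q'$ is chosen as a minimum vertex cover of the union of the path components, $|Q'| \le |\OPT\cap V'|$. Feasibility of the algorithm follows from Lemma~\ref{lem:VC-property}.

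It remains to bound the overcharge. By construction $\mathcal{S}=\mathcal{I}_P\setminus M_1 \subseteq \mathcal{I}_P$, so any $v\in \mathcal{S}\setminus\OPT$ satisfies $v\notin\OPT\supseteq\mathcal{I}_R$ and hence lies in $\mathcal{I}_P\setminus\mathcal{I}_R$. For the set $M_3$: since $\mathcal{I}_P=(\mathcal{I}_P\cap M_1)\cup\mathcal{S}\subseteq M_1\cup\mathcal{S}$, we have $\mathcal{I}_P\cap V'=\emptyset$, but $M_3\subseteq V'\cap\mathcal{I}_R$, which gives $M_3\subseteq\mathcal{I}_R\setminus\mathcal{I}_P$. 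Since $\mathcal{I}_P\setminus\mathcal{I}_R$ and $\mathcal{I}_R\setminus\mathcal{I}_P$ are the two disjoint parts of $\mathcal{I}_P\sym\mathcal{I}_R$, this yields $|\mathcal{S}\setminus\OPT|+|M_3|\le k_M$.

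Finally, using the disjoint decomposition $\OPT = M_1\cup(\mathcal{S}\cap\OPT)\cup M_2\cup(\OPT\cap V')$, a direct chain of inequalities
\begin{equation*}
|\ALG| \le |M_1|+|\mathcal{S}\cap\OPT|+|M_2|+|\OPT\cap V'|+|\mathcal{S}\setminus\OPT|+|M_3| \le |\OPT|+k_M
\end{equation*}
closes the argument, and $k_M\le k_h$ from Theorem~\ref{Theo_hop_distance_mandatory_distance} gives the second inequality. The only mildly delicate step is verifying $M_3\cap\mathcal{I}_P=\emptyset$, which is exactly what lets us avoid double-charging mandatory phase-2 queries against $k_M$; everything else is bookkeeping.
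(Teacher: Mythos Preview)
Your proof is correct and follows essentially the same approach as the paper: charge $M_1,M_2$ to $\OPT$, charge $\mathcal{S}\setminus\OPT$ to $\mathcal{I}_P\setminus\mathcal{I}_R$, bound the phase-2 vertex cover by $|\OPT\cap V'|$ (the paper does this per path via $|P'|\le|P\cap\OPT|$, you do it globally, which is equivalent), and charge $M_3$ to $\mathcal{I}_R\setminus\mathcal{I}_P$. Your explicit disjoint decomposition of $\OPT$ makes the bookkeeping slightly more transparent than the paper's version, but the underlying argument is identical.
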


\begin{proof}
We show that the algorithm spends at most $\opt +k_M$ queries. 
Theorem~\ref{Theo_hop_distance_mandatory_distance} then implies $\opt + k_M \le \opt + k_h$ .
Fix an optimum solution $\OPT$.
Every vertex queried in Lines~\ref{line:mandatoryproper} and~\ref{line:mandatory} is in $\OPT$ by~\Cref{lema_mandatory_min}.
Every vertex queried in Line~\ref{line:containpredicted} that is not in $\OPT$ is clearly in $\mathcal{I}_P \setminus \mathcal{I}_R$ and contributes one to $k_M$.

For each path~$P$ considered in Line~\ref{line:looppaths}, let $P'$ be the vertices queried in Lines~\ref{line:pathodd}--\ref{line:patheven2}.
It clearly holds that $|P'| \leq |P \cap \OPT|$.
Finally, every vertex queried in Line~\ref{line:mandatory2} is in $\mathcal{I}_R \setminus \mathcal{I}_P$, and therefore contributes to $k_M$, because we query all prediction mandatory vertices at the latest in Line~\ref{line:containpredicted}.
\end{proof}

\subsection{Computing the Clique Partition}
\label{predorient:sec:cliquepartition}

We continue by proving \Cref{lem:clique_partition_summary} and restate it here for the sake of readability:

\LemClique*

We describe how to compute a clique partition $\mathcal{C}$ of $\mathcal{I}_P \cup M$ that satisfies the lemma.
To that end, consider the set $\mathcal{I}_P \setminus M$. The elements of $M$ are allowed to be part of a clique of size one, so we ignore them for now.
Each $v \in \mathcal{I}_P \setminus M$ is prediction mandatory because it contains the predicted weight of some other vertex $u$ (cf.~\Cref{lema_mandatory_min} and recall that the input instance is an interval graph).
For each such $v$ we pick an arbitrary $\pi(v) \in V$ with $\w_{\pi(v)} \in I_v$ as the \emph{parent} of $v$.
Next, we build a forest of arborescences (rooted out-trees) that contains all vertices of $\mathcal{I}_P \setminus M$ (and some additional vertices) and define it by its arcs $\mathcal{E}$.
Afterwards, we use $\mathcal{E}$ to define the clique partition of the lemma.
We construct $\mathcal{E}$ by just iteratively considering all elements $v \in \mathcal{I}_P \setminus M$ in an arbitrary order and add $(\pi(v),v)$ to $\mathcal{E}$ if that does not create a cycle. Each so constructed arborescence contains at most one vertex that is not in $\mathcal{I}_P \cup M$, and this vertex has to be the root of the arborescence.

\begin{algorithm}[tb]
\KwIn{Forest of arborescences~$\mathcal{E}$, set of prediction mandatory vertices $\mathcal{I}_P$, set of known mandatory vertices $M$.}
$C_v \gets \emptyset$ for all $v \in V$;
$\mathcal{S} \gets \mathcal{I}_P \cup M$\;
\While{$\mathcal{S} \neq \emptyset$\label{line:cliquepartitionbegin}}{
	\KwLet $v$ be a deepest vertex in the forest of arborescences $(\mathcal{I}, \mathcal{E})$ among those in~$\mathcal{S}$\;
	\If{$(\pi(v), v) \in \mathcal{E}$}{
		$C_{\pi(v)} \leftarrow \{v' \in \mathcal{S} : (\pi(v), v') \in \mathcal{E}\}$\;
		\lIf{$\pi(v) \in \mathcal{S}$}{$C_{\pi(v)} \leftarrow C_{\pi(v)} \cup \{\pi(v)\}$}
		$\mathcal{S} \leftarrow \mathcal{S} \setminus C_{\pi(v)}$\label{line:cliquepartitionend}\;
	}
	\lElse{$C_v \leftarrow \{v\}$; \quad $\mathcal{S} \leftarrow \mathcal{S} \setminus  {C_v}$\label{line:rootisolated}}
}
\Return $\mathcal{C} = \{C_v \mid v \in V \land C_v \not= \emptyset\}$ \;
\caption{Algorithm to create a clique partition from a forest of arborescences.}
\label{alg:clique-partition}
\end{algorithm}

Based on $\mathcal{E}$, we create a first clique partition $\mathcal{C}$ using Algorithm~\ref{alg:clique-partition}.
Since all vertices in set $C_v$, as created by the algorithm, contain the predicted weight $\w_v$, the created sets are clearly cliques.
Each of the partial clique partitions (created for a single arborescence in the forest of arborescences) may contain at most a single clique of size one. 
To satisfy the lemma, each such clique $\{v\}$ must either satisfy $v \in M$ or needs a distinct $u \not\in \mathcal{I}_P \cup M$ with $I_v \cap I_u \not= \emptyset$.

If the root of the arborescence is in $M$, then the only clique $\{v\}$ of size one created for the arborescence contains the root of the arborescence. Since the root is in $M$, the cliques for the arborescence satisfy the lemma.
If the root of the arborescence is not in $\mathcal{I}_P \cup M$, then the vertex $v$ in the clique of size one might not be the root of th arborescence but a direct child of the root.
In that case, the parent $\pi(v)$ of $v$ (the vertex in the clique of size~$1$) is not in $\mathcal{I}_P \cup M$ and we use $\pi(v)$ as the distinct partner $u \not\in \mathcal{I}_P \cup M$ with $I_v \cap I_u \not= \emptyset$ of $v$.
We remark that there must be such a $\pi(v)$ that is the endpoint of a path component in the subgraph induced by $V\setminus (\mathcal{I}_P \cup M)$ as otherwise there must be a vertex $u$ on the path with $\w_v \in I_u$; a contradiction to the vertices on the path not being part of $\mathcal{I}_P$. We pick this endpoint of a path component as the partner of $v$.
In case we run into this situation for multiple cliques $\{v\}$ of different arborescences but with the same $\pi(v)$, we can just merge all those cliques into a single one. This results in a clique as the intervals of all vertices in those smaller cliques contain the predicted weight $\w_{\pi(v)}$.

The problematic case is when the root of the arborescence is part of $\mathcal{I}_P \setminus M$.
Note that this can only be the case if the parent $\pi(r)$ of the root $r$ is also part of the arborescence, as otherwise the corresponding edge $(\pi(r),r)$ would have been added and $r$ would not be the root. So the parent $\pi(r)$ of the root $r$ is also part of the arborescence but the edge $(\pi(r),r)$ was not added because it would have created a cycle.
We handle this situation by showing the following auxiliary lemma.
It states that in this case we can revise the clique
partition of that arborescence in such a way that all cliques in that clique partition for the have size at least~$2$.
The auxiliary lemma then concludes the proof of \Cref{lem:clique_partition_summary}.

\begin{restatable}{lemma}{cliquerepartition}
\label{lem:cliquerepartition}%
Consider an out-tree (arborescence) $T$ on a set of prediction mandatory
vertices, where an edge $(u,v)$ represents that $\pred{w}_u\in I_v$.
Let the root be $r$.
Let vertex $m$ with $\pred{w}_m\in I_r$ be a descendant
of the root somewhere in $T$. Then the vertices in $T$ can be partitioned into
cliques (sets of pairwise overlapping intervals) in such a way
that all cliques have size at least~$2$.
\end{restatable}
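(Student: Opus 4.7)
I would prove the lemma by induction on $|T|$. The base case $|T|=2$ is immediate: $r$ and its unique child form a $2$-clique. For the inductive step, let $P = r = v_0, v_1, \ldots, v_k = m$ denote the unique path in $T$ from $r$ to $m$. The tree edges give intersections $I_{v_i} \cap I_{v_{i+1}} \ni \bar w_{v_i}$, and the hypothesis gives $I_{v_0} \cap I_{v_k} \ni \bar w_m$, so the $k+1$ vertices of $P$ form a cycle $C$ in the interval graph.

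The tree decomposes into $P$ plus off-path subtrees $T_u$ rooted at non-path children $u$ of each $v_i$. First I would apply Algorithm~\ref{alg:clique-partition} inside each off-path subtree: it produces a partition of $T_u$ into cliques of size $\geq 2$, possibly leaving a single ``orphan'' singleton $\{u\}$. Since $\bar w_{v_i} \in I_u$, the parent $v_i$ together with all its orphans forms a clique (they all contain the point $\bar w_{v_i}$). For each $v_i$ with $a_i$ orphans, this yields parity flexibility: if $a_i$ is odd I absorb $v_i$ into the orphan-clique; if $a_i$ is even (possibly zero) I may pair the orphans among themselves and leave $v_i$ \emph{free} for path-pairing. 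The remaining task is to partition the set $F$ of free path vertices using cycle edges of $C$ and any chords of $C$ in the interval graph. Small path cases are easy: for $k=1$ the pair $\{v_0,v_1\}$ works directly, and for $k=2$ the three intervals $I_{v_0}, I_{v_1}, I_{v_2}$ pairwise intersect (at $\bar w_{v_0}$, $\bar w_{v_1}$, $\bar w_{v_2}$ respectively), forming a triangle.

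Main obstacle: The hard step is to show that $F$ always admits a partition into cliques of size $\geq 2$ given the flexibility above. When $F$ splits the cycle $C$ into consecutive runs of even length, a direct perfect matching using cycle edges suffices. In the general case I would invoke chordality of interval graphs: the cycle $C$ of length $k+1 \geq 3$ has chords, and a suitable chord yields a triangle that collapses three free vertices into one size-$3$ clique, reducing the residual problem to a smaller cycle to be finished by induction. The crux is matching the parity constraints at each $v_i$ to the chord structure of $C$, and I expect this case analysis to be the most technical part. A cleaner route, if one can establish it, is to prove as a standalone sub-lemma that any interval graph containing a Hamilton cycle can be partitioned into cliques of size at least~$2$; combined with the orphan-absorption flexibility above, this would directly yield Lemma~\ref{lem:cliquerepartition}.
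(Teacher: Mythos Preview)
Your parity-based orphan absorption is too rigid, and this is a genuine gap rather than just a technical hurdle. Here is a concrete counterexample on seven vertices. Take the path $r=v_0,v_1,v_2,v_3,v_4=m$ with intervals
\[
I_{v_0}=(0,10),\ I_{v_1}=(3,20),\ I_{v_2}=(12,25),\ I_{v_3}=(8,18),\ I_{v_4}=(2,11),
\]
and predicted weights $\bar w_{v_0}=5,\ \bar w_{v_1}=15,\ \bar w_{v_2}=14,\ \bar w_{v_3}=9,\ \bar w_{v_4}=7$, so all tree edges and the closing condition $\bar w_m\in I_r$ hold. Give $v_1$ a single leaf child $u_1$ with $I_{u_1}=(14,17)$ and $v_3$ a single leaf child $u_3$ with $I_{u_3}=(8.5,9.5)$. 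Then $a_0=a_2=a_4=0$ and $a_1=a_3=1$, so your rule forces $F=\{v_0,v_2,v_4\}$. But $I_{v_2}$ is disjoint from both $I_{v_0}$ and $I_{v_4}$, so $v_2$ is isolated in the graph induced on $F$ and no clique partition of $F$ exists. Neither of your proposed fixes helps: chordality guarantees chords of the full $5$-cycle, but here the only chords touch $v_1$ or $v_3$, not $v_2$; and the Hamilton-cycle sub-lemma applies to all of $C$, not to an arbitrary subset $F$. (A valid partition of the seven vertices does exist, e.g.\ $\{v_0,v_4\},\{v_1,v_2,u_1\},\{v_3,u_3\}$, but reaching it requires pairing an orphan with a path vertex other than its own tree parent, which your framework does not allow.)

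The paper's proof avoids the path entirely. It first re-attaches every vertex whose interval contains $\bar w_r$ directly below $r$, then runs CP on each subtree of $r$ and does a short case analysis on the sizes of the resulting root cliques. The only subtle case is when every root clique has size exactly~$2$ and no child's predicted weight lies in $I_r$; then $m$ is strictly deeper, and the paper detaches the subtree $T_m$ rooted at $m$, re-hangs it directly under $r$, and reruns CP on the two pieces. The key geometric observation is that both $I_m$ and the root $I_v$ of $m$'s original subtree intersect $I_r$ on the same side, so $r$ can always be merged into one of the two new root cliques. This local surgery is what your global path/cycle decomposition is missing.
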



\begin{proof}
We refer to the clique partition method of Lines~\ref{line:cliquepartitionbegin}--\ref{line:rootisolated} in Algorithm~\ref{alg:clique-partition}
as algorithm~CP.
This method will partition the nodes of an arborescence into cliques, each consisting either of a subset of the children of a node, or of a subset of the children of a node plus the parent of those children.
In the case considered in this lemma,
all cliques will have size at least~$2$, except that the clique containing the root of the tree may have size~$1$.

We first modify $T$ as follows: If there is a node $v$ in $T$ that is not a child of the root $r$ but contains $\pred{w}_r$, then we make $r$ the parent of $v$ (i.e., we remove the subtree rooted at $v$ and re-attach it below the root).
After this transformation, all vertices whose intervals contain $\pred{w}_r$ are children of $r$.

Apply CP to each subtree of $T$ rooted at a child of $r$.
For each of the resulting partitions, we call the clique containing
the root of the subtree the \emph{root clique} of that subtree.
There are several possible outcomes that can be handled directly:
\begin{itemize}
	\item At least one of the clique partitions has a root clique
	of size~$1$. In that case we combine all these root cliques
	of size~$1$ with $r$ to form a clique of size at least~$2$, and we are done:
	This new clique together with all remaining cliques from the clique
	partitions of the subtrees forms the desired clique partition.
	
	\item All of the clique partitions have root cliques of size at least~$2$,
	and at least one of them has a root clique of size at least $3$.
	Let $s$ be the root node of a subtree whose root clique has
	size at least~$3$. We remove~$s$ from its clique and form
	a new clique from $s$ and $r$, and we are done.
	
	\item All of the clique partitions have root cliques of size exactly~$2$,
	and at least one of the children $v$ of $r$ has $\pred{w}_v\in I_r$.
	Then we add $r$ to the root clique that contains~$v$. We can do
	this because all intervals in that root clique contain $\pred{w}_v$.
\end{itemize}

Now assume that none of these cases applies, so we have the following
situation:
All of the clique partitions have root cliques of size exactly~$2$,
and every child $v$ of $r$ has its predicted weight outside $I_r$,
i.e., $\pred{w}_v\notin I_r$. In particular, $m$, the vertex whose interval makes $r$ prediction mandatory ($\pred{w}_m \in I_r$), cannot be a child of $r$.

\begin{figure}[t]
	\centerline{\scalebox{0.6}{\input{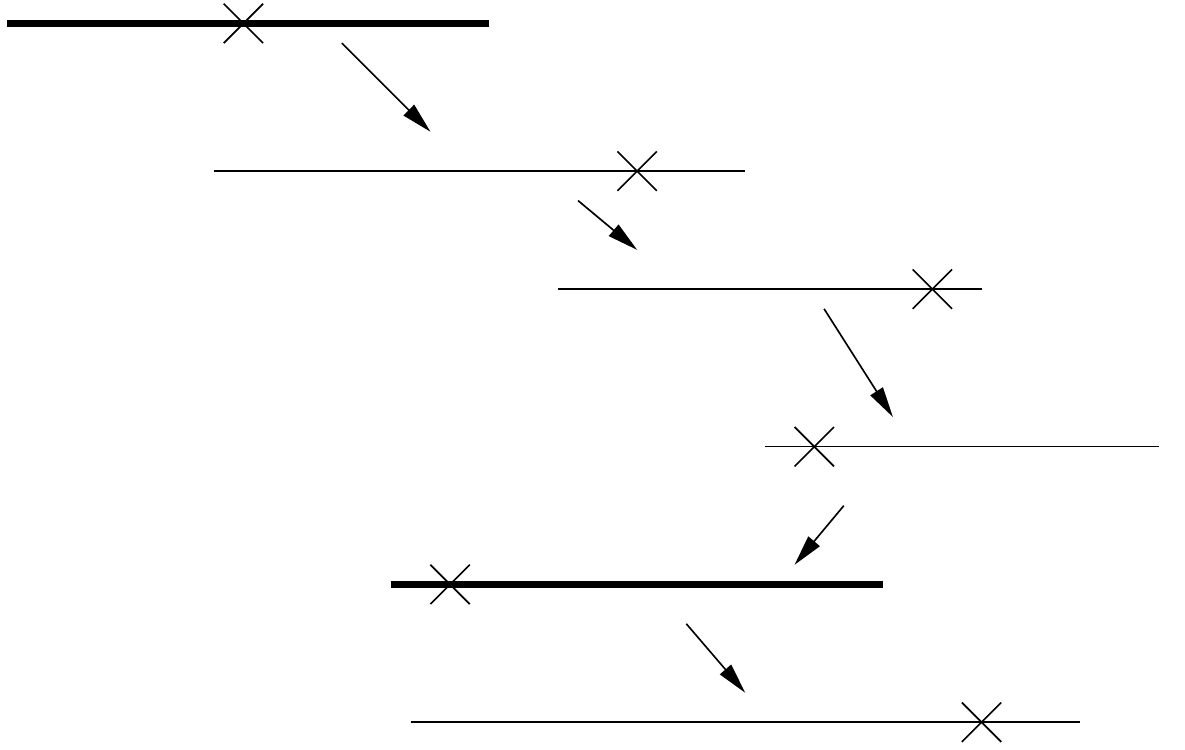_t}}}
	\caption{Illustration of path from $I_r$ to $I_m$'s child $I_q$ in $T$}
	\label{fig:tree}
\end{figure}

Let $T'$ be the subtree of $T$ that is rooted at a child
of $r$ and that contains $m$. Let the root of $T'$
be $v$.

Observe that $I_v$ is the only interval in $T'$ that
contains $\pred{w}_r$, because all vertices with intervals containing $\pred{w}_r$ are
children of $r$ in~$T$. Assume w.l.o.g.\ that $\pred{w}_v$ lies
to the right of $I_r$. Then all intervals of vertices in~$T'$,
except for $I_v$, lie to the right of~$\pred{w}_r$.
See Figure~\ref{fig:tree} for an illustration of
a possible configuration of the intervals on the path from
$r$ to $m$ (and a child $q$ of~$m$) in~$T$.

Now re-attach the subtree $T_m$ rooted at $m$
as a child of $r$ (ignoring the fact that $\pred{w}_r$ is
not inside $I_m$), and let $T_v=T'\setminus T_m$ denote
the result of removing $T_m$ from $T'$. Re-apply CP
to the two separate subtrees $T_m$ and $T_v$.
The possible outcomes are:
\begin{itemize}
	\item The root clique of at least one of the two subtrees has size~$1$.
	We can form a clique by combining $r$ with those (one or two)
	root cliques of size~$1$.
	As both $I_v$ and $I_m$ intersect $I_r$ from the right, the
	resulting set is indeed a clique. Together with all other
	cliques from the clique partitions of $T_m$ and $T_v$,
	and those of the other subtrees of $r$ in $T$,
	we obtain the desired clique partition.
	\item The root cliques of both subtrees have size at
	least~$2$. We add $r$ to the root clique of $T_m$.
	That root clique contains only vertices with intervals containing
	$\pred{w}_m$, and $I_r$ also contains $\pred{w}_m$, so
	we do indeed get a clique if we add $r$ to that
	root clique. This new clique, together with all other
	cliques from the clique partitions of $T_m$ and $T_v$,
	and those of the other subtrees of $r$ in $T$,
	forms the desired clique partition.
\end{itemize}

This concludes the proof of the lemma.
\end{proof}

\subsection{The $k_{\#}$-dependent guarantee}

We continue by proving that Algorithm~\ref{fig:sorting1cons2rob} executes at most $\opt + k_{\#}$ queries. This then concludes the proof of~\Cref{thm:sorting:singleset}.

Recall that, in order to compute the clique partition of~\Cref{lem:cliquerepartition}, we, for each $v \in \mathcal{I}_P \setminus M$ fix an arbitrary $\pi(v) \in V$ with $\w_{\pi(v)} \in I_v$ as the \emph{parent} of $v$.
These parents will be used in the following proofs.

To prove the $k_{\#}$-dependent guarantee, we need the following auxiliary lemma.

\begin{lemma}
\label{lem:endpoints}%
Fix the state of set $\mathcal{S}$ as in Line~\ref{line:fixS} of Algorithm~\ref{fig:sorting1cons2rob}.
For each vertex $u$, let $S_u = \{ v \in \mathcal{S} : \pi(v) = u \}$.
For any path~$P$ considered in Line~\ref{line:looppaths} and any vertex $u \in P$ that is not an endpoint of~$P$, it holds that $S_u = \emptyset$.
\end{lemma}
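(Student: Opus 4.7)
The plan is a proof by contradiction. Suppose some $v\in\mathcal{S}$ satisfies $\w_u\in I_v$ (so that $\pi(v)$ could be chosen to be $u$) for a non-endpoint $u$ of a path $P$ considered in Line~\ref{line:looppaths}; let $a,b\in P$ be the two path-neighbors of $u$. The goal is to show that the query to $v$ in Line~\ref{line:containpredicted} forces one of $a,b$ to become known mandatory by Corollary~\ref{cor_min_left_mandatory} and hence to be queried in Line~\ref{line:mandatory}, so it ends up in $M_2$; this contradicts its presence in $P$.

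First I would set up the geometry. By the argument in the proof of Lemma~\ref{lem:path}, the current interval graph contains no interval containments, so $P$ is a proper-interval path; after relabeling I may assume $L_a<L_u<L_b$ and $U_a<U_u<U_b$. Since $a$ and $b$ are unqueried, they are not prediction mandatory, so $\w_u\notin I_a\cup I_b$; combined with $\w_u\in I_u$ and the proper-interval positions this pins $\w_u$ into the gap $U_a<\w_u<L_b$. I would also rule out containments between $I_v$ and $I_u$: $I_u\subseteq I_v$ would make $v$ known mandatory initially via Corollary~\ref{cor_min_left_mandatory} and hence place it in $M_1$ (contradicting $v\in\mathcal{S}$), while $I_v\subseteq I_u$ would symmetrically put $u$ in $M_1$ (contradicting $u\in P$). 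So $I_v$ properly overlaps $I_u$ from the left or from the right.

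Next I split into the two cases. In the left case $L_v<L_u<U_v<U_u$ and $\w_u<U_v$, so $U_v>U_a$. If $L_v\le L_a$ held, then with $U_v>U_a$ we would get $I_a\subseteq I_v$, forcing $v\in M_1$, a contradiction; hence $L_a<L_v<L_u<U_a$, so $I_v$ and $I_a$ intersect originally. Because $u$ is not queried in Line~\ref{line:mandatory} after $v$ is queried, we must have $w_v\notin I_u$, which under $w_v\in I_v$ and $U_v<U_u$ forces $w_v<L_u$. Now $L_a<L_v<w_v<L_u<U_a$ puts $w_v$ strictly inside $I_a$, so after querying $v$ the reduced interval $I_v=[w_v]$ is contained in $I_a$ with $a$ leftmost in the edge $\{a,v\}$; Corollary~\ref{cor_min_left_mandatory} then makes $a$ known mandatory and thus $a\in M_2$, contradicting $a\in P$. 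The case where $I_v$ lies to the right of $I_u$ is fully symmetric and delivers $b\in M_2$.

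The main obstacle is the delicate chaining in the case analysis: one needs to combine (i) the narrow gap $(U_a,L_b)$ forced for $\w_u$ by $a,b$ not being prediction mandatory, (ii) the absence of containments implied by $v\in\mathcal{S}$ and $u\in P$, and (iii) $u$'s survival in $P$ after Line~\ref{line:mandatory} (which forces $w_v\notin I_u$), in order to squeeze $w_v$ into exactly the neighbor's interval. Once these constraints are aligned, Corollary~\ref{cor_min_left_mandatory} closes the argument immediately.
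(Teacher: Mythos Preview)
Your proof is correct and follows essentially the same approach as the paper's: both argue by contradiction, use that $\pred{w}_u\notin I_a\cup I_b$ to locate $\pred{w}_u$ in the gap, rule out containments between $I_v$ and $I_u$ (and between $I_v$ and $I_a,I_b$), and conclude that $w_v$ must land in one of $I_a,I_u,I_b$, forcing a query in Line~\ref{line:mandatory} and contradicting membership in $P$. The only difference is presentational: the paper compresses your left/right case split into the single observation that $I_v\subseteq I_a\cup I_u\cup I_b$, and then directly notes that $w_v$ falling into any of the three intervals yields the contradiction, without first separating off the case $w_v\in I_u$.
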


\begin{proof}
Suppose for the sake of contradiction that some $u \in P$ that is not an endpoint of~$P$ has $S_u \neq \emptyset$.
Let~$a$ and~$b$ be its neighbors in~$P$, and let $v \in S_u$.
We have that $\pred{w}_u \notin I_a \cup I_b$, otherwise either~$a$ or~$b$ would be prediction mandatory and, therefore, have been queried in Line~\ref{line:containpredicted}.
Thus~$(I_v \cap I_u) \setminus (I_a \cup I_b) \neq \emptyset$, because $\pred{w}_u \in I_v$ but $\w_u \not\in I_a \cup I_b$.
It is not the case that $I_v \subseteq I_u$ or $I_u \subseteq I_v$: If $I_v \subseteq I_u$, then~$u$ would have been queried in Line~\ref{line:mandatoryproper} before~$P$ is considered; if $I_u \subseteq I_v$, then~$v$ would have been queried in Line~\ref{line:mandatoryproper} and $v \notin \mathcal{S}$.
Therefore it must be that $I_v \subseteq I_a \cup I_u \cup I_b$, otherwise $I_a \subseteq I_v$ or $I_b \subseteq I_v$ (again a contradiction for $v \in \mathcal{S}$) since $(I_v \cap I_u) \setminus (I_a \cup I_b) \neq \emptyset$ and $a,u,b$ forms a simple path in the interval graph.
However, if $I_v \subseteq I_a \cup I_u \cup I_b$, then~$v$ would have forced a query to~$a$,~$b$ or~$u$ in Line~\ref{line:mandatory} as one of the corresponding intervals must contain $w_v$ and, thus, becomes mandatory by Corollary~\ref{cor_min_left_mandatory}.
This is a contradiction to $a$,$b$ and $u$ being part of path $P$.
\end{proof}

\begin{theorem}
Algorithm~\ref{fig:sorting1cons2rob} performs at most $\opt + k_{\#}$ queries.
\end{theorem}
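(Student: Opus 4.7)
The plan is to refine the accounting used in Lemma~\ref{thm_sorting_opt+kM} so that every ``extra'' query is charged to a distinct vertex of $W:=\{u:\pred{w}_u\neq w_u\}$. As in that lemma, all queries in $M_1$, $M_2$, and Line~\ref{line:mandatory2} are mandatory and therefore lie in every optimum solution, so the only queries in $\ALG\setminus\OPT$ are $\mathcal{I}_P\setminus\mathcal{I}_R$ (from Line~\ref{line:containpredicted}) and some subset $P'\setminus\OPT$ for each path $P$. I will pick an optimum $\OPT$ that maximises $|\ALG\cap\OPT|$ and construct an injection $\varphi\colon(\ALG\setminus\OPT)\hookrightarrow W$, which suffices since $|\ALG|-|\OPT|=|\ALG\setminus\OPT|-|\OPT\setminus\ALG|\le|\ALG\setminus\OPT|$.

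For each $v\in\mathcal{I}_P\setminus\mathcal{I}_R$ in $\ALG\setminus\OPT$, I set $\varphi(v)=\pi(v)$, the parent used in building the clique partition: by construction $\pred{w}_{\pi(v)}\in I_v$, and because $v$ is not mandatory we get $w_{\pi(v)}\notin I_v$, so $\pi(v)\in W$. For each $v\in P'\setminus\OPT$ on a path $P$, since $v\notin\OPT$ both path-neighbours of $v$ must lie in $\OPT$; by the minimality of $\OPT$ this forces those neighbours into $\mathcal{I}_R$, and each such neighbour $v'\in\mathcal{I}_R$ has some on-path $u$ with $w_u\in I_{v'}$. Using $v'\notin\mathcal{I}_P$ (path vertices are not prediction mandatory by Lemma~\ref{lem:path}) gives $\pred{w}_u\notin I_{v'}$, hence $u\in W$; picking $u=v$ when possible shows $v\in W$, and I set $\varphi(v)$ to this witness. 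Within each part the injection is clean: Lemma~\ref{lem:clique_partition_summary} and the witness-pair bound $|C\setminus\OPT|\le 1$ on cliques of pairwise-intersecting intervals imply that each clique of $\mathcal{C}$ contains at most one element of $\mathcal{I}_P\setminus\mathcal{I}_R$ (so distinct ``parents'' are obtained), and the disjointness of non-adjacent path-interval pairs ensures each trigger $u\in P'\cap W$ is used at most once.

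The main obstacle is avoiding collisions between the two schemes at path endpoints. By Lemma~\ref{lem:endpoints}, any $\pi(v)$ that falls outside $\ALG_1$ is a path endpoint, and the only way a path endpoint enters $P'$ is when Line~\ref{line:patheven1} fires, i.e., precisely when that endpoint $x_1$ is the distinct partner of a critical isolated $v\in\mathcal{I}_P\setminus\mathcal{I}_R$. To resolve a potential clash where $\varphi(v)=\varphi(v')=x_1$ for a Line~\ref{line:mandatory2}-query $v'$ triggered by $x_1$, I will exploit the flexibility in the choice of $\OPT$: in this configuration the only edge between $v$ and the remainder of the graph passes through $x_1$, so swapping $v\leftrightarrow x_1$ inside any optimum preserves both feasibility and size. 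Choosing $\OPT$ so that $|\{v,x_1\}\cap\OPT|=1$ removes one of the two conflicting charges from $\ALG\setminus\OPT$ and lets the remaining vertex be charged to $x_1\in W$. A short case analysis by parity of $p$ and by which branch of Lines~\ref{line:patheven1}--\ref{line:patheven2} fires, paired with Lemma~\ref{lem:endpoints} to localise every potential conflict at path endpoints, then yields the required global injection, establishing $|\ALG|\le|\OPT|+k_{\#}$.
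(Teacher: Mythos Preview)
Your proposal has a genuine gap in the treatment of the vertices $v\in P'\setminus\OPT$ coming from the second phase.

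The step ``by the minimality of $\OPT$ this forces those neighbours into $\mathcal{I}_R$'' is simply false. A vertex can lie in a minimum feasible query set without being mandatory; for instance, on an odd path $x_1x_2x_3$ with all precise weights outside neighbouring intervals, $\{x_2\}$ is the unique minimum vertex cover and hence in every optimum, yet $x_2\notin\mathcal{I}_R$. Once this implication fails, the rest of your chain collapses: you have no guarantee that a neighbour $v'$ of $v$ carries a precise weight $w_u\in I_{v'}$ at all, let alone one coming from an on-path $u$, so there is nothing forcing $\pred{w}_u\neq w_u$ and no mispredicted vertex to which $v$ can be charged. The sentence ``picking $u=v$ when possible shows $v\in W$'' is a non-sequitur for the same reason. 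Your endpoint-collision fix via ``swapping $v\leftrightarrow x_1$ inside any optimum'' is also unjustified: the critical isolated vertex $v$ may intersect intervals other than $I_{x_1}$ in the \emph{original} instance (the path structure describes only the unqueried vertices), so such a swap need not preserve feasibility.

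The paper's proof avoids this pitfall by \emph{not} attempting to inject $P'\setminus\OPT$ into mispredicted vertices. Instead it partitions the instance and, on each path, exploits two separate inequalities: (i) $|P'|=\lfloor|P|/2\rfloor\le|P\cap\OPT|$ because $P'$ is a minimum vertex cover of the path, and (ii) every query in Line~\ref{line:mandatory2} is triggered by some $u\in P'$ whose precise weight landed in a neighbour's interval while $\pred{w}_u$ did not, so those additional queries are at most $k_{\#}(P')$. The only place a double charge can occur is when a path endpoint $x_1$ (or $x_p$) both carries the error used by some clique $S_{x_1}$ and lies in $P'$; the paper resolves this by physically moving the unique vertex of $S_{x_1}\setminus\OPT$ into the path's set $W$ and using $x_1\in\OPT$ (forced by the witness pair $\{v,x_1\}$) to balance the count. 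That local bookkeeping is exactly what your injection sketch would still need, but it only works once one abandons the per-vertex injection for $P'\setminus\OPT$ in favour of the aggregate bound~(i).
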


\begin{proof}
Fix an optimum solution~$\OPT$.
We partition the vertices in~$V$ into sets with the following properties.
One of the sets~$\tilde{S}$ contains vertices that are not queried by the algorithm.
We have a collection~$\mathcal{S}'$ of vertex sets in which each set has at most one vertex not in $\OPT$, i.e., $|S' \setminus \OPT| \le 1$ for all $S' \in \mathcal{S}'$.
Also, if it has one vertex not in~$\OPT$, then we assign a distinct prediction error to that set, in such a way that each error is assigned to at most one set.
The vertex corresponding to the prediction error does not need to be in the same set.
Let~$V'$ be the set of vertices with a prediction error ($w_v \not= \w_v$ for all $v \in V'$) assigned to some set in~$\mathcal{S}'$.
Finally, we have a collection~$\mathcal{W}$ of vertex sets such that for every $W \in \mathcal{W}$ it holds that $|\ALG \cap W| \leq |W \cap \OPT| + k_{\#}(W \setminus V')$, where $k_{\#}(X)$ is the number of vertices in $X$ with incorrect predictions.
If we have such a partition, then it is clear that we spend at most $\opt + k_{\#}$ queries.

We begin by adding a set that contains all vertices queried in Lines~\ref{line:mandatoryproper} and~\ref{line:mandatory} ($M_1 \cup M_2$ in the Pseudocode) to $\mathcal{S}'$; all such vertices are clearly in $\OPT$, and we do not need to assign a prediction error.

Fix the state of $\mathcal{S}$ as in Line~\ref{line:fixS}.
To deal with the vertices queried in Line~\ref{line:containpredicted}, we add to~$\mathcal{S}'$ the set $S_u = \{v \in \mathcal{S} : \pi(v) = u\}$ for all~$u \in V$.
Note that each such set is a clique, because the corresponding intervals of all vertices in $S_u$ contain~$\pred{w}_u$ and we are considering an interval graph.
Therefore, using Lemma~\ref{lem:witness}, at most one vertex in~$S_u$ is not in $\OPT$, and if that occurs, then $\pred{w}_u \neq w_u$, and we assign this prediction error to~$S_u$.

Let~$P = x_1 x_2 \cdots x_p$ with $p \geq 2$ be a path considered in Line~\ref{line:looppaths}, and let~$P'$ be the set of vertices in~$P$ that are queried in the following execution of  Lines~\ref{line:pathodd},~\ref{line:patheven1} or~\ref{line:patheven2}.
It clearly holds that $|P'| = \lfloor |P| / 2 \rfloor \leq |P \cap \OPT|$ as $P$ is a path and each edge defines a witness set~\Cref{lem:witness}.
It also holds that at most $k_{\#}(P')$ intervals in~$P$ are queried in Line~\ref{line:mandatory2}: Each vertex $u \in P'$ can force a query to at most one vertex~$v$ in Line~\ref{line:mandatory2} as weight $w_u$ can only be contained in the interval of at most one neighbor $v$ of $u$ in the path, and in that case the predicted weight of~$u$ is incorrect because $w_u \in I_v$ but $\pred{w}_u \notin I_v$, or $v$ would have been queried in Line~\ref{line:containpredicted}.
We will create a set $W \in \mathcal{W}$ and possibly modify~$\mathcal{S}'$, in such a way that $P \subseteq W$ and $P' \cap V' = \emptyset$, so it is enough to show that
\begin{equation}
	|\ALG \cap W| \leq |W \cap \OPT| + k_{\#}(P').\label{eq:sortingpath}
\end{equation}
We initially take $W$ as the vertices in~$P$.
By Lemma~\ref{lem:endpoints}, it holds that $S_u = \emptyset$ for any $u \in P$ with $u \not\in \{x_1, x_p\}$.
If $S_{x_1} \subseteq \OPT$, then we did not assign a prediction error to~$S_{x_1}$.
Otherwise, let~$v$ be the only vertex in $S_{x_1} \setminus \OPT$.
The predicted weight of $I_{x_1}$ is incorrect because $\pred{w}_{x_1} \in I_v$, and it must hold that $x_1 \in \OPT$, or $\OPT$ would not be able to decide the order between~$x_1$ and~$v$ (as $I_v \cap I_{x_1} \not= \emptyset$ which implies that $\{v,x_1\}$ is an edge and, therefore, a witness pair).
If $x_1 \notin P'$, then we will not use its error in the bound of $|\ALG \cap W|$ to prove Equation~(\ref{eq:sortingpath}).
Otherwise, we add~$v$ to~$W$ and remove it from~$S_{x_1}$, and now we do not need to assign a prediction error to~$S_{x_1}$ anymore as $v$ was the sole member of $S_{x_1}$ not in $\OPT$.
We do a similar procedure for~$x_p$, and since at most one of $x_1, x_p$ is in~$P'$ by definition of Lines~\ref{line:pathodd},~\ref{line:patheven1} and~\ref{line:patheven2}, we only have two cases to analyze: (1)~$W = P$, or (2)~$W = P \cup \{v\}$ with $\pi(v) \in \{x_1, x_p\}$.
\begin{enumerate}
	\item $W = P$. Clearly $|\ALG \cap W| \leq |P'| + k_{\#}(P') \leq |W \cap \OPT| + k_{\#}(P')$ as we already argued that each member of $P'$ (the vertices queried in Lines~\ref{line:pathodd},~\ref{line:patheven1} or~\ref{line:patheven2})
	can lead to at most one additional query in Line~\ref{line:mandatory2} and only if it has an incorrect predicted weight. Since in this case  $|P|$ must either be odd or $x_1 \in P'$ but $S_{x_1}\subseteq \OPT$ or $x_p \in P'$ but $S_{x_p} \subseteq \OPT$, we have $P' \cap V' = \emptyset$.
	
	\item $W = P \cup \{v\}$, with $\pi(v) \in \{x_1, x_p\}$.
	Suppose w.l.o.g.\ that $\pi(v) = x_1$.
	Remember that $x_1 \in P'$, that $x_1 \in \OPT$, that $v \not\in \OPT$ and that the predicted weight of $x_1$ is incorrect.
	Since $x_1 \in P'$, it holds that $|P|$ is even and $x_2 \notin P'$.
	We have two cases.
	\begin{enumerate}
		\item $x_2$ is not queried in Line~\ref{line:mandatory2}.
		Then $x_1$ does not force a query in Line~\ref{line:mandatory2}, so
		\begin{eqnarray*}
			|\ALG \cap W| & \leq & |P' \cup \{v\}| + k_{\#}(P' \setminus \{x_1\}) \\
			& = & |P'| + 1 + k_{\#}(P' \setminus \{x_1\}) \\
			& \leq & |P \cap \OPT| + k_{\#}(P') \\
			& \leq & |W \cap \OPT| + k_{\#}(P').
		\end{eqnarray*}
		
		\item $x_2$ is queried in Line~\ref{line:mandatory2}.
		Then $x_1, x_2 \in \OPT$, and $|\OPT \cap (P \setminus \{x_1, x_2\})| \geq |P' \setminus \{x_1\}|$ because $|P|$ is even.
		Therefore,
		\begin{eqnarray*}
			|\ALG \cap W| & \leq & |P' \cup \{x_2, v\}| + k_{\#}(P' \setminus \{x_1\}) \\
			& \leq & |P \cap \OPT| + 1 + k_{\#}(P' \setminus \{x_1\}) \\
			& \leq & |W \cap \OPT| + k_{\#}(P').   
		\end{eqnarray*}
	\end{enumerate}
\end{enumerate}

Finally, we add the remaining intervals that are not queried by the algorithm to $\tilde{S}$.
\end{proof}

\section{Learnability of predictions}
\label{sec:learnability}

In this section, we argue about the learnability of our predictions with regard to the different error measures for a given instance of hypergraph orientation $H=(V,E)$ with the set of uncertainty intervals $\mathcal{I} = \{I_v \mid v \in V\}$.

We assume that the realization $w$ of precise weights for $\mathcal{I}$ is i.i.d.~drawn from an unknown distribution $\ud$, and 
that we can i.i.d.~sample realizations from $\ud$ 
to obtain a training set.
Let $\hs$ denote the set of all possible
prediction vectors $\pred{w}$, 
with $\pred{w_v} \in I_v$ for each $I_v \in \mathcal{I}$.
Let $k_h(w,\pred{w})$ denote the hop distance of the prediction $\pred{w}$ for the realization with the precise weights $w$.
Since $w$ is drawn from $\ud$, the value $k_h(w,\pred{w})$ is a random variable. 
Analogously, we consider $k_M(w,\pred{w})$ with regard to the mandatory query distance.
Our goal is to learn predictions $\pred{w}$ that (approximately) minimize the expected error $\EX_{w \sim \ud}[k_h(w,\pred{w})]$ respectively $\EX_{w \sim \ud}[k_M(w,\pred{w})]$.
In the following, we argue separately about the learnability with respect to $k_h$ and $k_M$.

Note that, under the PAC-learning assumption that we can sample the precise weights, it is also possible to apply the algorithm for the stochastic problem variant given in~\cite{BampisDEdLMS21}. However, in contrast to the results of this paper, the guarantee of the algorithm given in~\cite{BampisDEdLMS21} holds in expectation instead of in the worst-case and admits no robustness. Furthermore, our algorithms work as a blackbox independent of how the predictions are obtained.

\subsection{Learning with respect to the hop distance}

The following theorem states the learnability w.r.t.~$k_h$ and directly follows from the (problem independent) learnability proof in~\cite{EdLMS22}.

\begin{restatable}{theorem}{TheoLearningHop}
	\label{theo_learnability_hop}
	For any~$\eps, \delta \in (0,1)$, there exists a learning algorithm that, using a training set of size~$m$, returns predictions $\pred{w} \in \hs$, such that
	$\EX_{w \sim \ud}[k_h(w,\pred{w})] \le \EX_{w \sim \ud}[k_h(w,\pred{w}^*)] + \eps$ holds with probability at least~$(1-\delta)$, where~$\pred{w}^* = \arg\min_{\pred{w}' \in \hs} \EX_{w \sim \ud}[k_h(w,\pred{w}')]$.
	The sample complexity is $m \in \mathcal{O}\left(\frac{(\log(n) - \log(\delta/n))\cdot (2n)^2}{(\eps/n)^2}\right)$ and the running time is polynomial in $m$ and $n$.

\end{restatable}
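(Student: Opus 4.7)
The plan is to reduce the learning problem to one of empirical risk minimization (ERM) over a small discrete hypothesis class, exploiting two structural properties of the hop distance.

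First I would observe that the loss decomposes additively over vertices: by definition $k_h(w,\pred{w}) = \sum_{v \in V} \sum_{u \neq v} k_u(v)$, and each term $k_u(v)$ depends only on the realized weight $w_v$, the predicted weight $\pred{w}_v$, and the fixed endpoints $L_u, U_u$ of $I_u$. Consequently $k_h(w,\pred{w}) = \sum_{v \in V} f_v(w_v, \pred{w}_v)$ for functions $f_v$ with range $[0,n-1]$, and each prediction $\pred{w}_v$ can be optimized independently to minimize $\mathbb{E}_{w_v \sim D_v}[f_v(w_v, \pred{w}_v)]$, where $D_v$ is the marginal of $\ud$ on coordinate~$v$.

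Next I would discretize: for fixed $v$, the function $\pred{w}_v \mapsto f_v(w_v, \pred{w}_v)$ only depends on how $\pred{w}_v$ is located relative to the $2(n-1)$ endpoints $\{L_u, U_u : u \neq v\}$. Hence the set of predictions $\pred{w}_v \in I_v$ partitions into at most $2n-1$ equivalence classes, and it suffices to search over one representative per class. Let $\hs' \subseteq \hs$ denote the Cartesian product of these finite sets of representatives; then $|\hs'| \le (2n)^n$, but crucially, because of the per-vertex decomposition, ERM on the whole hypothesis class reduces to $n$ independent ERM subproblems, each over at most $2n$ candidates.

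The learning algorithm would then draw $m$ i.i.d.\ samples $w^{(1)},\dots,w^{(m)}$ from $\ud$ and, for each $v$, return the representative $\pred{w}_v$ minimizing the empirical expectation $\frac{1}{m}\sum_{i=1}^m f_v(w_v^{(i)}, \pred{w}_v)$. To establish the approximation guarantee, I would apply Hoeffding's inequality to each pair $(v, \pred{w}_v)$ with deviation threshold $\varepsilon/(2n)$ so that summing over the $n$ vertices gives a deviation of at most $\varepsilon/2$ between empirical and true expected total loss, hence an excess risk of at most $\varepsilon$ over $\pred{w}^*$. A union bound over at most $n \cdot 2n = 2n^2$ pairs, combined with the range bound $f_v \in [0,n-1]$ and confidence parameter $\delta$, yields $m \in \mathcal{O}\bigl(\frac{n^2\,\log(n^2/\delta)}{(\varepsilon/n)^2}\bigr)$, matching the stated bound (up to constants absorbed by the $\log$ term). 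The running time is polynomial in $m$ and $n$ since each of the $n$ subproblems enumerates at most $2n$ candidates and evaluates $m$ samples on each.

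The only mildly delicate step is the factorization claim; the rest is a standard finite-hypothesis PAC argument. Since this essentially reproduces the problem-independent argument of~\cite{EdLMS22} in our setting, no new ideas beyond the decomposition and discretization observations are required.
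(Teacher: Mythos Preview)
Your proposal is correct and matches the approach the paper defers to~\cite{EdLMS22}: the paper does not give its own proof but simply states that the result ``directly follows from the (problem independent) learnability proof in~\cite{EdLMS22}'', and elsewhere confirms that the cited approach ``learns the predicted weights separately in order to achieve a polynomial running time''---precisely your per-vertex decomposition plus discretization plus ERM. Your sample-complexity computation also matches the stated bound, since $\log(n)-\log(\delta/n)=2\log n-\log\delta=\log(n^2/\delta)$.
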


\subsection{Learning with respect to the mandatory query distance}

We argue about the learnability w.r.t.~$k_M$.
Since each $I_v$ is an open interval, there are infinitely many predictions $\pred{w}$, and, thus, the set $\hs$ is also infinite.
In order to reduce the size of $\hs$, we discretize each $I_v$ by fixing a finite number of potentially predicted weights $\pred{w}_v$ of $I_v$ using the same technique as in~\cite{EdLMS22}.

We define the set $\hs_v$  of 
predicted weights for $v$ as follows.
Let $\{B_1,\ldots,B_l\}$ be the set of lower and upper limits of intervals in $\mathcal{I} \setminus \{I_v\}$ that are contained in $I_v$.
Assume that $B_1,\ldots,B_l$ are indexed by increasing value. 
Let $B_0 = L_v$ and $B_{l+1} = U_v$ and, for each $j \in \{0,\ldots,l\}$, let $h_j$ be an arbitrary value of $(B_j,B_{j+1})$.
We define $\hs_v = \{B_1,\ldots,B_l,h_0,\ldots,h_l\}$.
Since two weights $\pred{w}_v,\pred{w}_v' \in (B_j,B_{j+1})$ always lead to the same hop distance $\jo(v)$ for vertex $v$, there will always be an element of $\hs_v$ that minimizes the expected hop distance for $v$.
Each $\hs_v$ contains at most $\mathcal{O}(n)$ values.
Then, the discretized $\hs$ is the Cartesian product of all discretized $\hs_v$ with $v \in V$ and, thus, the distcretzation reduces the size of $\hs$ to at most $\mathcal{O}(n^n)$.

The following lemma shows that we do not lose any precision by using the discretized $\hs$.
Here, for any vector $\pred{w}$ of predicted weights, $\mathcal{I}_{\pred{w}}$ denotes the set of prediction mandatory vertices.
In particular, $\hs$ is now finite.

\begin{lemma}
	\label{lem_discretization}
	For a given instance of the hypergraph orientation problem with intervals $\mathcal{I}$, let $\pred{w}$ be a vector of predicted weights that is not contained in the discretized $\hs$.
	Then, there is a $\pred{w}' \in \hs$ such that $\mathcal{I}_{\pred{w}} = \mathcal{I}_{\pred{w}'}$.
\end{lemma}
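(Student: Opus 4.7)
The plan is to define $\pred{w}'$ coordinate-wise: for each $v\in V$, if $\pred{w}_v\in\hs_v$ we set $\pred{w}'_v:=\pred{w}_v$; otherwise $\pred{w}_v$ lies in a unique open sub-interval $(B_j,B_{j+1})$ of the subdivision of $I_v$ fixed before the lemma, and we set $\pred{w}'_v:=h_j\in\hs_v$. By construction $\pred{w}'\in\hs$, so it suffices to show $\mathcal{I}_{\pred{w}}=\mathcal{I}_{\pred{w}'}$.

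The first step is the observation that for every ordered pair of distinct vertices $u,v\in V$ and every endpoint $B\in\{L_v,U_v\}$ of $I_v$, each of the three relations $\pred{w}_u<B$, $\pred{w}_u=B$, $\pred{w}_u>B$ has the same truth value for $\pred{w}_u$ and $\pred{w}'_u$. If $B\in I_u$, then $B$ is one of the breakpoints defining the subdivision of $I_u$, so $\pred{w}_u$ and $\pred{w}'_u$ belong to the same atom of that subdivision (either the same open sub-interval or the same singleton $\{B_j\}$) and hence lie on the same side of $B$. If $B\notin I_u$, then $B\le L_u$ or $B\ge U_u$, and the sign of $x-B$ for any $x\in(L_u,U_u)$ is then forced by the interval endpoints alone. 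Combining the two choices $B=L_v$ and $B=U_v$ gives in particular $\pred{w}_u\in I_v\Leftrightarrow \pred{w}'_u\in I_v$ for every $u\ne v$.

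The second step is a reformulation of Lemma~\ref{lema_mandatory_min} applied to $\pred{w}$: a vertex $v$ with non-trivial interval is prediction-mandatory iff some hyperedge $S\ni v$ satisfies $\min_{u\in S\setminus\{v\}}\pred{w}_u\in I_v$. The forward direction comes from inspecting the two cases of the lemma; in case (i), $v$ is itself a minimum of $S$, so all $\pred{w}_u$ with $u\in S\setminus\{v\}$ are at least $\pred{w}_v>L_v$, while the assumed $\pred{w}_u\in I_v$ witnesses that the minimum is below $U_v$; in case (ii) the minimum of $S$ is attained at a vertex of $S\setminus\{v\}$ with weight in $I_v$. Conversely, if $u^*\in S\setminus\{v\}$ realises $\min_{u\in S\setminus\{v\}}\pred{w}_u\in I_v$, we compare $\pred{w}_{u^*}$ with $\pred{w}_v$: if $\pred{w}_{u^*}\le\pred{w}_v$ we are in case (ii), otherwise $v$ is the strict minimum of $S$ and we are in case (i). Now note that $\min_{u\in S\setminus\{v\}}\pred{w}_u\in I_v$ is equivalent to the conjunction ``$\pred{w}_u>L_v$ for all $u\in S\setminus\{v\}$, and $\pred{w}_u<U_v$ for some $u\in S\setminus\{v\}$''. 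By the first step both conjuncts have the same truth value under $\pred{w}$ and under $\pred{w}'$, so the reformulated criterion singles out the same vertices, proving $\mathcal{I}_{\pred{w}}=\mathcal{I}_{\pred{w}'}$.

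The main obstacle is precisely the reformulation in the second step: the total order of predicted weights within a hyperedge is in general not preserved by a per-coordinate discretization (a vertex may switch between being a minimum of some $S$ under $\pred{w}$ and being a non-minimum under $\pred{w}'$), so we cannot match up cases~(i) and~(ii) of Lemma~\ref{lema_mandatory_min} pointwise. The reformulation side-steps this by reducing the mandatory criterion to the cross-membership data $\pred{w}_u\in I_v$, which is exactly what the atom-preserving discretization keeps invariant.
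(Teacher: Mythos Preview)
Your proof is correct and shares the paper's construction of $\pred{w}'$ and its key invariant that $\pred{w}_u\in I_v\Leftrightarrow\pred{w}'_u\in I_v$ for all $u\ne v$; in that sense the approaches coincide. Where you differ is in how you pass from this invariant to $\mathcal{I}_{\pred{w}}=\mathcal{I}_{\pred{w}'}$: the paper argues by a direct case analysis on Lemma~\ref{lema_mandatory_min}, explicitly tracking what happens when the minimum-weight vertex of a hyperedge $S$ changes between $\pred{w}$ and $\pred{w}'$. Your reformulation --- $v$ is prediction mandatory iff $\min_{u\in S\setminus\{v\}}\pred{w}_u\in I_v$ for some $S\ni v$, equivalently ``all $\pred{w}_u>L_v$ and some $\pred{w}_u<U_v$'' --- is a genuinely cleaner route: it collapses the two cases of Lemma~\ref{lema_mandatory_min} into a single condition that depends only on the positions of the $\pred{w}_u$ relative to the endpoints of $I_v$, which is exactly the data your first step shows to be discretization-invariant. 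This sidesteps the need to compare $\pred{w}_v$ with the other weights at all. The paper's case analysis gets to the same place but has to argue, for instance, that if the minimum of $S$ switches from $v$ to some $v'$ then $\pred{w}'_{v'}\in I_v$, which implicitly also uses endpoint preservation; your formulation makes the dependence explicit and the argument shorter.
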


The lemma implies that, for each $\pred{w}$, there is an $\pred{w}' \in \hs$ that has the same error w.r.t.~$k_M$ as $\pred{w}$.
Thus, there always exists an element $\pred{w}$ of the discretized $\hs$ such that $\pred{w}$ minimizes the expected error over all possible vectors of predicted weights. 

\begin{proof}[Proof of Lemma~\ref{lem_discretization}]	
	Given the vector of predicted weights $\pred{w}$, we construct a vector $\pred{w}' \in \hs$ that satisfies the lemma.

	For each $\pred{w}_v$, we construct $\pred{w}'_v$ as follows:
	If $\pred{w}_v = B_j$ for some $j \in \{1,\ldots,l\}$, then we set $\pred{w}_v' = B_j$, where $B_j$ is defined as in the definition of the discretized $\hs_v$.
	Otherwise it must hold $\pred{w}_v \in (B_j,B_{j+1})$ for some $j \in \{1,\ldots,l\}$, and we set $\pred{w}'_v = h_j$.
	Then, for each $u \in V\setminus \{v\}$, it holds $\pred{w}_v \in I_u$ if and only if $\pred{w}_v' \in I_u$.
	
	We show that each $v \in \mathcal{I}_{\pred{w}}$ is also contained in $\mathcal{I}_{\pred{w}'}$.
	Since $v$ is mandatory assuming precise weights $\pred{w}$, Lemma~\ref{lema_mandatory_min} implies that there is a hyperedge $S$ such that either (i) $\pred{w}_v$ is the minimum weight of $S$ and $\pred{w}_u \in I_v$ for some $u \in S\setminus\{v\}$ or (ii) $\pred{w}_v$ is not the minimum weight of $S$ but contains the weight $\pred{w}_u$ of the vertex $u$ with minimum weight $\pred{w}_u$ in $S$.
	
	Assume $v$ satisfies case (i) for the predicted weights $\pred{w}$.
	By construction of $\pred{w}'$ it then also holds $\pred{w}'_u \in I_v$.
	Thus, if $v$ has minimum weight in $S$ for the weights $\pred{w}'$, then $v \in \mathcal{I}_{\pred{w}'}$ by Lemma~\ref{lema_mandatory_min}.
	Otherwise, some $v' \in S\setminus\{v\}$ must have the minimum weight $\pred{w}'_{v'}$ in $S$ for the weights $\pred{w}'$.
	Since $v$ has minimum weight for the weights $\pred{w}$, it must hold $\pred{w}'_{v'} < \pred{w}'_v$ but $\pred{w}_{v'} \ge \pred{w}_v$.
	By construction of $\pred{w}'$, this can only be the case if $\pred{w}'_{v'},\pred{w}_{v'} \in I_v$.
	This implies that $v$ satisfies case (ii) for the weights $\pred{w}'$ and, therefore $v \in \mathcal{I}_{\pred{w}'}$ by Lemma~\ref{lema_mandatory_min}.
	
	Assume $v$ satisfies case (ii) for the predicted weights $\pred{w}$.
	By construction of $\pred{w}'$ it then also holds $\pred{w}'_u \in I_v$.
	Thus, if $u$ has minimum weight in $S$ for the weights $\pred{w}'$, then $v \in \mathcal{I}_{\pred{w}'}$.
	Otherwise, some $u' \in S\setminus\{u\}$ must have minimum weight in $S$ for the weights $\pred{w}'$.
	If $u' = v$, then $v$ satisfies case (i) for the weights $\pred{w}'$ and, therefore, $v \in \mathcal{I}_{\pred{w}'}$ by Lemma~\ref{lema_mandatory_min}.
	If $u'\not= u$, then it must hold $\pred{w}'_{u'} < \pred{w}'_u$ but $\pred{w}_{u'} \ge \pred{w}_u$ as $u$ has minimum weight in $S$ for weights $\pred{w}$ but $u'$ has minimum weight in $S$ for weights $\pred{w}'$.
	By construction and since $\pred{w}_u \in I_v$, this can only be the case if $\pred{w}'_{u'},\pred{w}_{u'} \in I_v$.
	This implies that $v$ satisfies case (ii) for the weights $\pred{w}'$ and therefore $v \in \mathcal{I}_{\pred{w}'}$ by Lemma~\ref{lema_mandatory_min}.
	
	Symmetrically, we can show that each $v \in \mathcal{I}_{\pred{w}'}$ is also contained in $\mathcal{I}_{\pred{w}}$, which implies $\mathcal{I}_{\pred{w}} = \mathcal{I}_{\pred{w}'}$.
\end{proof}

Recall that $k_M$ is defined as $k_M = |\mathcal{I}_P \sym \mathcal{I}_R|$, where $\mathcal{I}_P$ is the set of predictions mandatory vertices and $\mathcal{I}_R$ is the set of mandatory vertices.
In contrast to learning predictions $\pred{w}$ w.r.t.~$k_h$, a vertex	 $v \in \mathcal{I}$ being part of $\mathcal{I}_P \sym \mathcal{I}_R$ depends not only on $\pred{w}_v$ and $w_v$, but on the predicted and precise weights of vertices $V \setminus \{v\}$. 
Thus, the events of $v$ and $u$ with $v \not= u$ being part of $\mathcal{I}_P \sym \mathcal{I}_R$ are not necessarily independent.
Therefore, we cannot separately learn the predicted weights $\pred{w}_v$ for each $v \in V$, which is a major difference to the proof for $k_h$ in~\cite{EdLMS22}.
The approach in~\cite{EdLMS22} learns the predicted weights separately in order to achieve a polynomial running time.

Since the discretized $\hs$ is still finite and $k_M$ is bounded by $[0,n]$, we still can apply \emph{empirical risk minimization (ERM)} to achieve guarantees similar to the ones of Theorem~\ref{theo_learnability_hop}.
However, because we cannot learn the $\pred{w}_v$ separately, we would have to find the element of $\hs$ that minimizes the empirical error. 
ERM first i.i.d.~samples a trainingset~$S=\{w^1,\ldots,w^m\}$ of~$m$ precise weights vectors from~$\ud$.
Then, it returns the predicted weights~$\pred{w} \in \hs$ that minimizes the \emph{empirical error}~$k_S(\pred{w}) = \frac{1}{m} \sum_{j=1}^{m} k_M(w^j,\pred{w})$.
As $\hs$ is of exponential size, a straightforward implementation of ERM requires exponential running time.
We use this straightforward implementation to prove the following theorem.

\begin{theorem}
	\label{theo_learnability_man_exp}
	For any~$\eps, \delta \in (0,1)$, there exists a learning algorithm that, using a training set of size~$m$, returns predictions $\pred{w} \in \hs$, such that
	$\EX_{w \sim \ud}[k_M(w,\pred{w})] \le \EX_{w \sim \ud}[k_M(w,\pred{w}^*)] + \eps$ holds with probability at least~$(1-\delta)$, where~$\pred{w}^* = \arg\min_{\pred{w}' \in \hs} \EX_{w \sim \ud}[k_M(w,\pred{w}')]$.
	The sample complexity is $m \in \mathcal{O}\left(\frac{(n \cdot \log(n) - \log(\delta))\cdot n^2}{\eps^2}\right)$ and the running time is exponential in $n$.
\end{theorem}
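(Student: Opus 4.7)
The plan is to apply empirical risk minimization (ERM) on the discretized hypothesis class $\hs$ introduced before the theorem statement, together with Hoeffding's inequality and a union bound. By \Cref{lem_discretization}, restricting to the discretized class loses no precision: for every prediction vector there is a member of $\hs$ that induces the same set of prediction mandatory vertices and hence has the same mandatory query distance against any realization $w$. Moreover, $|\hs| \le \prod_{v\in V}|\hs_v| = \mathcal{O}(n^n)$, since each $\hs_v$ has $\mathcal{O}(n)$ elements.

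Given a training set $S=\{w^1,\ldots,w^m\}$ drawn i.i.d.\ from $\ud$, the algorithm outputs $\pred{w}^{\mathrm{ERM}} \in \arg\min_{\pred{w}\in\hs} k_S(\pred{w})$, where $k_S(\pred{w}) = \frac{1}{m}\sum_{j=1}^m k_M(w^j,\pred{w})$. The key quantitative step is a uniform-convergence bound. Fix any $\pred{w}\in\hs$. Since $k_M(w,\pred{w})\in[0,n]$ by definition of the mandatory query distance, Hoeffding's inequality yields
\[
\Pr_{S\sim\ud^m}\!\left[\bigl|k_S(\pred{w})-\EX_{w\sim\ud}[k_M(w,\pred{w})]\bigr|>\tfrac{\eps}{2}\right]\;\le\;2\exp\!\left(-\frac{m\,\eps^2}{2n^2}\right).
\]
A union bound over all $\pred{w}\in\hs$ shows that, with probability at least $1-2|\hs|\exp(-m\eps^2/(2n^2))$, every hypothesis has empirical error within $\eps/2$ of its true expected error. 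Choosing
\[
m \;\ge\; \frac{2n^2}{\eps^2}\bigl(\log|\hs|+\log(2/\delta)\bigr) \;=\; \mathcal{O}\!\left(\frac{(n\log n-\log\delta)\,n^2}{\eps^2}\right)
\]
makes the failure probability at most $\delta$. Conditioned on uniform convergence, a standard ERM argument gives
\[
\EX_{w\sim\ud}[k_M(w,\pred{w}^{\mathrm{ERM}})] \;\le\; k_S(\pred{w}^{\mathrm{ERM}})+\tfrac{\eps}{2} \;\le\; k_S(\pred{w}^*)+\tfrac{\eps}{2} \;\le\; \EX_{w\sim\ud}[k_M(w,\pred{w}^*)]+\eps,
\]
which is exactly the claimed guarantee.

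For the running time, computing $k_M(w^j,\pred{w})$ reduces to applying \Cref{lema_mandatory_min} to determine $\mathcal{I}_P$ and $\mathcal{I}_R$ and taking the symmetric difference, which is polynomial in $n$ for each pair $(w^j,\pred{w})$. Evaluating $k_S$ for a fixed $\pred{w}$ therefore takes time polynomial in $m$ and $n$, and the algorithm enumerates all $|\hs|=\mathcal{O}(n^n)$ hypotheses, yielding the stated exponential running time. The main obstacle—and the reason this result is weaker algorithmically than \Cref{theo_learnability_hop}—is precisely that $k_M(w,\pred{w})$ does not decompose into a sum of per-vertex contributions depending only on $(w_v,\pred{w}_v)$: whether $v\in\mathcal{I}_P\triangle\mathcal{I}_R$ depends jointly on the full vectors $w$ and $\pred{w}$ through \Cref{lema_mandatory_min}. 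This prevents the coordinate-wise decomposition used in~\cite{EdLMS22} and forces the joint search over $\hs$, although the PAC bound itself goes through cleanly by treating $k_M$ as a single $[0,n]$-bounded loss.
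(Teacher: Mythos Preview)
Your proof is correct and follows essentially the same approach as the paper: both apply ERM over the discretized finite class $\hs$ of size $\mathcal{O}(n^n)$ with the $[0,n]$-bounded loss $k_M$, invoke uniform convergence (you spell out Hoeffding plus a union bound explicitly, the paper cites the textbook result), and obtain the same sample complexity and exponential running time from exhaustive search over $\hs$. Your added discussion of why the per-vertex decomposition fails mirrors the paper's remarks preceding the theorem.
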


\begin{proof}
	Since $|\hs| \in \mathcal{O}(n^n)$ and $k_M$ is bounded by $[0,n]$, ERM achieves the guarantee of the theorem with a sample complexity of $m \in \mathcal{O}\left(\frac{(n \cdot \log(n) - \log(\delta))\cdot (n)^2}{(\eps)^2}\right)$; for details we refer to~\cite{Shalev2014,Vapnik1992}.
	The prediction $\overline{w} \in \hs$ that minimizes the empirical error $k_S(\pred{w}) = \frac{1}{m} \sum_{j=1}^{m} k_M(w^j,\pred{w})$, where $S=\{w^1,\ldots,w^m\}$ is the training set, can be computed by iterating through all elements of $S \times H$.
	Thus, the running time of ERM is polynomial in $m$ but exponential in $n$.
\end{proof}

To circumvent the exponential running time, we present an alternative approach.
In contrast to $k_h$, for a fixed realization, the value $k_M$ only depends on $\mathcal{I}_P$.
Instead of showing the learnability of the predicted weights, we prove that the set $\mathcal{I}_P$ that leads to the smallest expected error can be (approximately) learned.
To be more specific, let $\mathcal{P}$ be the power set of $V$, let $\mathcal{I}_w$ denote the set of mandatory vertices for the realization with precise weights $w$, and let $k_M(\mathcal{I}_w, P)$ with $P \in \mathcal{P}$ denote the mandatory query distance under the assumption that $\mathcal{I}_P = P$ and $\mathcal{I}_R = \mathcal{I}_w$.
Since $w$ is drawn from $\ud$, the value $\EX_{w \sim \ud}[k_M(\mathcal{I}_w,P)]$ is a random variable.
As main result of this section, we show the following theorem.

\begin{restatable}{theorem}{TheoLearningMan}
	\label{theo_learnability_mandatory}
	For any~$\eps, \delta \in (0,1)$, there exists a learning algorithm that, using a training set of size~$m  \in \mathcal{O}\left(\frac{(n - \log(\delta))\cdot n^2}{\eps^2}\right),$ returns a predicted set of mandatory intervals $P \in \mathcal{P}$ in time polynomial in~$n$ and~$m$, such that
	$\EX_{w \sim \ud}[k_M(\mathcal{I}_w,P)] \le \EX_{w \sim \ud}[k_M(\mathcal{I}_w,P^*)] + \eps$ holds with probability at least~$(1-\delta)$, where~$P^* = \arg\min_{P' \in \mathcal{P}} \EX_{w \sim \ud}[k_M(\mathcal{I}_w,P')]$.
\end{restatable}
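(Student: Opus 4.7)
The plan is to exploit the fact that the mandatory query distance decomposes linearly across vertices, which allows each vertex to be treated independently and bypasses the need for ERM over the exponential-size hypothesis class $\mathcal{P}$.

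First I would write, for any $P\in\mathcal{P}$ and realization $w$,
\[
k_M(\mathcal{I}_w,P)=|P\sym \mathcal{I}_w|=\sum_{v\in P}\mathbb{1}[v\notin \mathcal{I}_w]+\sum_{v\notin P}\mathbb{1}[v\in \mathcal{I}_w].
\]
Letting $p_v=\Pr_{w\sim\ud}[v\in \mathcal{I}_w]$, linearity of expectation gives
\[
\EX_{w\sim\ud}[k_M(\mathcal{I}_w,P)]=\sum_{v\in P}(1-p_v)+\sum_{v\notin P}p_v.
\]
In particular, the contribution of each $v$ is decoupled from the other vertices, so the minimizer $P^*$ is simply $P^*=\{v\in V:p_v\ge 1/2\}$, and for every $v$ the penalty for deciding ``wrongly'' about $v$ (i.e. disagreeing with $P^*$ on membership) is $|2p_v-1|$.

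The learning algorithm is therefore the natural one: sample a training set $S=\{w^1,\ldots,w^m\}$ i.i.d.\ from $\ud$, compute for each $v\in V$ the empirical probability $\hat p_v=\frac{1}{m}\sum_{j=1}^{m}\mathbb{1}[v\in \mathcal{I}_{w^j}]$, and output $P=\{v\in V:\hat p_v\ge 1/2\}$. The set $\mathcal{I}_{w^j}$ can be computed in polynomial time for each sample via \Cref{lema_mandatory_min}, so the overall running time is polynomial in $n$ and $m$.

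For the sample complexity, I would set $\eps'=\eps/(2n)$ and apply Hoeffding's inequality together with a union bound over the $n$ vertices: with $m\in \mathcal{O}((n-\log\delta)\cdot n^2/\eps^2)$ samples, with probability at least $1-\delta$ we have $|\hat p_v-p_v|\le\eps'$ for every $v\in V$ simultaneously. Whenever our decision on $v$ disagrees with that of $P^*$, one must have $|p_v-1/2|\le\eps'$, so the per-vertex excess error is at most $|2p_v-1|\le 2\eps'$. Summing over $v$ yields
\[
\EX_{w\sim\ud}[k_M(\mathcal{I}_w,P)]-\EX_{w\sim\ud}[k_M(\mathcal{I}_w,P^*)]\le \sum_{v\in P\sym P^*}2\eps'\le 2n\eps'=\eps,
\]
completing the argument. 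The main thing to be careful about is the union bound step---because the events ``$v\in\mathcal{I}_w$'' for different $v$ are not independent (as emphasised just before the theorem), one cannot learn the coordinates in isolation without accounting for all $n$ simultaneously, but the linearity of expectation for $k_M$ circumvents the need for joint concentration and only a union bound of single-coordinate Hoeffding estimates is required.
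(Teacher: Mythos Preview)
Your proposal is correct and arrives at exactly the same algorithm as the paper: compute the empirical frequency $\hat p_v$ that each vertex is mandatory, and output $P=\{v:\hat p_v\ge 1/2\}$. The paper derives this by first invoking uniform convergence over the hypothesis class $\mathcal{P}$ of size $2^n$ to justify ERM, and then observing that the empirical risk decomposes as $k_S(P)=\frac{1}{m}(\sum_{v\in P}q_v+\sum_{v\notin P}p_v)$, so the ERM minimizer is precisely the thresholded set.

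The only difference is in the analysis: the paper gets the sample bound from the ERM/uniform-convergence machinery (with $\log|\mathcal{P}|=\Theta(n)$ in the numerator), whereas you bypass ERM entirely, use the same linear decomposition directly in expectation, and apply per-vertex Hoeffding plus a union bound over $n$ vertices. Your route is more elementary and in fact gives a slightly sharper constant (a $\log n$ rather than an $n$ in the numerator), which is comfortably within the stated $\mathcal{O}$ bound. Both approaches rely on the same key observation---that $k_M$ decouples across vertices via linearity of expectation despite the events $\{v\in\mathcal{I}_w\}$ being correlated---so the substantive content is the same.
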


Note that this theorem only allows us to learn a set of prediction mandatory vertices $P$ that (approximately) minimizes the expected mandatory query distance. 
It does not, however, allow us to learn the predicted weights $\pred{w}$ that lead to the set of prediction mandatory vertices $P$.
In particular, it can be the case, that no such predicted weigths exist. 
Thus, there may not be a realization with precise weights $w$ and $k_M(\mathcal{I}_w,P)=0$.
On the other hand, learning $P$ already allows us to execute Algorithm~\ref{ALG_min_alpha} for the hypergraph orientation problem.
Applying Algorithm~\ref{fig:sorting1cons2rob} for the sorting problem would require knowing the corresponding predicted weights $\pred{w}$.

\begin{proof}[Proof of Theorem~\ref{theo_learnability_mandatory}]
	We again show that the basic \emph{empirical risk minimization (ERM)} algorithm already satisfies the lemma. 
	ERM first i.i.d.~samples a trainingset~$S=\{w^1,\ldots,w^m\}$ of~$m$ precise weight vectors from~$\ud$.
	Then, it returns the~$P \in \mathcal{P}$ that minimizes the \emph{empirical error}~$k_S(P) = \frac{1}{m} \sum_{j=1}^{m} k_M(\mathcal{I}_{w^j},P)$.
	Since $\mathcal{P}$ is of exponential size, i.e., $|\mathcal{P}| \in \mathcal{O}(2^n)$, we cannot afford to naively iterate through $\mathcal{P}$ in the second stage of ERM, but have to be more careful.
	
	By definition, $\mathcal{P}$ contains $\mathcal{O}(2^n)$ elements and, thus, is finite.
	Since~$\mathcal{P}$ is finite, and the error function $k_M$ is bounded by the interval $[0,n]$, it satisfies the \emph{uniform convergence property} (cf.~\cite{Shalev2014}).
	This implies that, for~$$m = \left\lceil \frac{2\log(2|\mathcal{P}|/\delta)n^2}{\eps^2} \right\rceil \in \mathcal{O}\left(\frac{(n - \log(\delta))\cdot n^2}{\eps^2}\right),$$ it holds~$\EX_{w \sim \ud}[k_M(\mathcal{I}_{w},P)] \le \EX_{w \sim \ud}[k_M(\mathcal{I}_w,P^*)] + \eps$ with probability at least~$(1-\delta)$, where~$P$ is the set $P\in\mathcal{P}$ learned by ERM (cf.~\cite{Shalev2014,vapnik1999}).

	It remains to show that we can compute the set $P \in \mathcal{P}$ that minimizes the empirical error~$k_S(P) = \frac{1}{m} \sum_{j=1}^{m} k_M(\mathcal{I}_{w^j},P)$ in time polynomial in $n$ and $m$.
	For each $v$, let $p_v = |\{\mathcal{I}_{w^j} \mid 1 \le j \le m \land v \in  \mathcal{I}_{w^j}\}|$ and let $q_v = m - p_v$.
  	For an arbitrary $P \in \mathcal{P}$, we can rewrite $k_S(P)$ as follows:
  	\begin{align*}
  		k_S(P) &= \frac{1}{m} \sum_{j=1}^{m} k_M(\mathcal{I}_{w^j},P) 
  		= \frac{1}{m} \sum_{j=1}^{m} |\mathcal{I}_{w^j} \sym P|\\
  		&= \frac{1}{m} \sum_{j=1}^m |P\setminus \mathcal{I}_{w^j}| + |\mathcal{I}_{w^j} \setminus P|\\
  		&= \frac{1}{m} \left(\sum_{v \in P} q_v + \sum_{v\not\in P} p_v \right).
  	\end{align*}
  	A set $P \in \mathcal{P}$ minimizes the term $k_S(P) = \frac{1}{m} (\sum_{v \in P} q_v + \sum_{v\not\in P} p_v)$, if and only if, $q_v \le p_v$ holds for each $v \in P$.
  	Thus, we can compute the $P \in \mathcal{P}$ that minimizes $k_S(P)$ as follows:
  	\begin{enumerate}
  		\item Compute $q_v$ and $p_v$ for each $I_v \in \mathcal{I}$.
  		\item Return $P = \{v \in V \mid q_v\le p_v\}$.
  	\end{enumerate}
  	Since this algorithm can be executed in time polynomial in $n$ and $m$, the theorem follows.
\end{proof}

To conclude the section, we show the following lemma that compares the quality of the hypothesis sets $\hs$ and $\mathcal{P}$ w.r.t.~$k_M$.
It shows that, with respect to the expected $k_M$, the best prediction of $\mathcal{P}$ dominates the best prediction of $\hs$.

\begin{lemma}
	For any instance of one of our problems and any distribution $D$, it holds $\EX_{w \sim \ud}[k_M(w,\pred{w}^*) \ge \EX_{w \sim \ud}[k_M(\mathcal{I}_w,P^*)]$, where
	$\pred{w}^* = \arg\min_{\pred{w}' \in \hs} \EX_{w \sim \ud}[k_M(w,\pred{w}')]$ and $P^* =  \arg\min_{P' \in \mathcal{P}} \EX_{w \sim \ud}[k_M(\mathcal{I}_w,P')]$.
	Furthermore, there exists an instance and a distribution $D$ such that $\EX_{w \sim \ud}[k_M(w,\pred{w}^*)] > \EX_{w \sim \ud}[k_M(\mathcal{I}_w,P^*)]$.
\end{lemma}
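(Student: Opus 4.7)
For the first inequality the plan is short. Every $\pred{w} \in \hs$ induces a prediction mandatory set $\mathcal{I}_{\pred{w}} \in \mathcal{P}$, and by unwinding definitions we have $k_M(w,\pred{w}) = |\mathcal{I}_{\pred{w}} \sym \mathcal{I}_w| = k_M(\mathcal{I}_w, \mathcal{I}_{\pred{w}})$. Taking expectations over $w \sim D$ yields
\[
\EX_{w\sim D}[k_M(w,\pred{w}^*)] \;=\; \EX_{w\sim D}[k_M(\mathcal{I}_w, \mathcal{I}_{\pred{w}^*})] \;\ge\; \EX_{w\sim D}[k_M(\mathcal{I}_w, P^*)],
\]
where the inequality uses $\mathcal{I}_{\pred{w}^*} \in \mathcal{P}$ together with the fact that $P^*$ minimizes over all of $\mathcal{P}$.

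For strict inequality the plan is to exhibit one instance and one distribution. Writing $\mathcal{M} := \{\mathcal{I}_{\pred{w}} : \pred{w} \in \hs\} \subseteq \mathcal{P}$, the argument above shows $\pred{w}^*$ attains $\min_{M \in \mathcal{M}} \EX_{w\sim D}[|\mathcal{I}_w \sym M|]$, while $P^*$ attains the same minimum over all of $\mathcal{P}$. By the explicit characterization derived in the proof of Theorem~\ref{theo_learnability_mandatory} we have $P^* = \{v \colon \Pr_{w\sim D}[v \in \mathcal{I}_w] \ge 1/2\}$, i.e.\ the coordinate-wise majority set. Hence strict separation reduces to producing an instance whose majority set lies in $\mathcal{P}\setminus\mathcal{M}$ and strictly dominates every element of $\mathcal{M}$.

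The construction I propose uses a $3$-vertex hypergraph orientation instance for which $\mathcal{M}$ contains each of the three $2$-subsets $\{v_1,v_2\},\{v_1,v_3\},\{v_2,v_3\}$ but \emph{not} $\{v_1,v_2,v_3\}$. Picking $D$ as the uniform distribution over three weight realizations whose mandatory sets are exactly those three $2$-subsets gives marginals $q_{v_i} = 2/3$ for every $i$, so $P^* = \{v_1,v_2,v_3\} \notin \mathcal{M}$ with expected error $1$ (it differs from each realization in exactly one coordinate), while by symmetry each of the three $2$-subsets in $\mathcal{M}$ has expected error $\tfrac{0+2+2}{3}=\tfrac{4}{3}$ and every other member of $\mathcal{M}$ does at least as badly; this yields $1 < 4/3$. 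The main obstacle is engineering hyperedges and intervals realizing this exact $\mathcal{M}$: by Lemma~\ref{lema_mandatory_min} each pair must be achievable as the mandatory set for some weight vector, yet no single weight vector may make all three vertices mandatory simultaneously. I intend to achieve the blocking by designing hyperedges so that the mandatory-activation condition for each vertex hinges on a common ``coupling'' coordinate falling into one of three pairwise-disjoint subintervals, which allows any two conditions to hold jointly (realizing the pairs) while making all three mutually exclusive (blocking the triple). Verifying the construction reduces to a case check based on which vertex of each hyperedge is the minimum.
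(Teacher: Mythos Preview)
Your argument for the inequality is exactly the paper's: $k_M(w,\pred{w})=k_M(\mathcal{I}_w,\mathcal{I}_{\pred{w}})$, and since $\mathcal{I}_{\pred{w}^*}\in\mathcal{P}$ while $P^*$ is optimal over $\mathcal{P}$, the bound follows.

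For the strictness claim you share the paper's high-level strategy---exhibit an instance and a distribution for which the coordinate-wise majority set $P^*$ is \emph{not} realizable as $\mathcal{I}_{\pred{w}}$ for any $\pred{w}$---but you diverge in the construction and leave it incomplete. The paper does not attempt a symmetric $3$-vertex gadget; it writes down a concrete $5$-vertex sorting instance (a path $I_{v_1},\ldots,I_{v_5}$ of overlapping intervals) with explicitly given independent weight distributions, observes that $P^*=\{v_1,v_3,v_5\}$ while no single weight vector can make all three of $v_1,v_3,v_5$ simultaneously mandatory, and then simply computes the two expectations numerically ($1.2601>1.2401$). No structural obstruction argument is needed.

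Your plan, by contrast, still has its ``main obstacle'' open. Engineering an instance where every $2$-subset of $\{v_1,v_2,v_3\}$ is realizable as a mandatory set while the full triple is not is more delicate than your sketch suggests: with a single hyperedge on three staircase intervals, for example, the pair $\{v_1,v_3\}$ (the two outer vertices) is typically \emph{not} realizable, because making $v_3$ mandatory forces the minimum weight into $I_3$, which then tends to land in $I_2$ as well. Your ``coupling coordinate in three disjoint subintervals'' idea points toward an auxiliary vertex, which sits uneasily with the claimed $3$-vertex instance and would need its own bookkeeping in the mandatory sets. None of this is fatal, but what you have is a promising outline rather than a finished example; the paper sidesteps all of this by exhibiting a small explicit instance and computing.
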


\begin{proof}
	Let $\pred{w}^*$ and $P^*$ be as described in the lemma, and let $I_{\pred{w}^*}$ denote the set of intervals that are mandatory in the realization with precise weights $\pred{w}^*$.
	By definition, $\EX_{w \sim \ud}[k_M(w,\pred{w}^*)] = \EX_{w \sim \ud}[k_M(\mathcal{I}_w,\mathcal{I}_{\pred{w}^*})]$.
	Since $\mathcal{I}_{\pred{w}^*} \in \mathcal{P}$ and $P^* =  \arg\min_{P' \in \mathcal{P}} \EX_{w \sim \ud}[k_M(\mathcal{I}_w,P')]$, it follows $\EX_{w \sim \ud}[k_M(w,\pred{w}^*)] = \EX_{w \sim \ud}[k_M(\mathcal{I}_w,\mathcal{I}_{\pred{w}^*})]  \ge \EX_{w \sim \ud}[k_M(\mathcal{I}_w,P^*)]$.

	To show the second part of the lemma, consider the example of Figure~\ref{ex_domination}.
	The intervals of the figure define an interval graph with the vertices $\{v_1,\ldots,v_4\}$.	
	We assume that the precise weights for the different vertices	 are drawn independently.
	In this example, we have $P^* = \{v_1,v_3,v_5\}$.
	Since there is no combination of precise weights such that all members of $P^*$ are indeed mandatory, the in expectation best prediction of the precise weights $\pred{w}^*$ has either $\mathcal{I}_{\pred{w}^*} = \{v_1,v_3\}$ or $\mathcal{I}_{\pred{w}^*} = \{v_3,v_5\}$.
	Thus, $\EX_{w \sim \ud}[k_M(w,\pred{w}^*)] = 1.2601  > 1.2401 = \EX_{w \sim \ud}[k_M(\mathcal{I}_w,P^*)]$.
	\end{proof}
	
	\begin{figure}[tb]
			\centering
			\begin{tikzpicture}[line width = 0.3mm, scale = 0.75, transform shape]
			\interval{$I_{v_1}$}{0}{3}{0}{1}{1}		
			\interval{$I_{v_2}$}{2}{5}{1}{3.25}{5.5}		
			\interval{$I_{v_3}$}{4}{7}{0}{3.25}{5.5}					
			\interval{$I_{v_4}$}{6}{9}{1}{3.25}{5.5}	
			\interval{$I_{v_5}$}{8}{11}{0}{3.25}{5.5}	
			
			 \drawreal{1}{0}
			 \node[black!40!green] (1) at (1,0.5) {$1$};
			 
			 \drawreal{5.5}{0}
			 \node[black!40!green] (1) at (5.5,0.5) {$1$};
			 
			 \drawreal{10}{0}
			 \node[black!40!green] (1) at (10,0.5) {$1$};

			\drawreal{2.5}{1}
			\node[black!40!green] (1) at (2.5,1.5) {$0.51$};
			
			\drawreal{4.5}{1}
			\node[black!40!green] (1) at (4.5,1.5) {$0.49$};

			\drawreal{6.5}{1}
			\node[black!40!green] (1) at (6.5,1.5) {$0.49$};

			\drawreal{8.5}{1}
			\node[black!40!green] (1) at (8.5,1.5) {$0.51$};			
		\end{tikzpicture}
		\caption{Instance of the \nnew{sorting} problem for vertices $\{v_1,\ldots,v_5\}$ with a given distribution $D$ for the precise weights. 
		The circles illustrate the possible precise weights for the different intervals according to $D$. The associated numbers denote the probability that the interval has the corresponding precise weight.}
	\label{ex_domination}
	\end{figure}
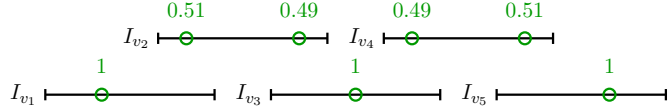

\section{Final remarks}
	We showed how 
	untrusted predictions enable us to circumvent known lower bounds for hypergraph orientation and sorting under explorable uncertainty
	and sparked the discussion on error measures by presenting a new error metric and showing relations between the different errors.
	As a next research step, we suggest investigating the application of error measure $k_M$ for different problems under explorable uncertainty.

\bibliographystyle{plain}
\bibliography{queries,ml}
\newpage 
\appendix

\section{NP-hardness of the offline problem for hypergraph orientation}
\label{app:overview}
\label{app:nphard}

\begin{theorem}
	\label{thm_min_nphard}
	It is NP-hard to solve the offline variant of hypergraph orientation under uncertainty.
\end{theorem}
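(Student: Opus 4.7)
The plan is to prove Theorem~\ref{thm_min_nphard} by a polynomial-time reduction from Minimum Vertex Cover (MVC), which is NP-hard. The reduction leverages the structural characterization preceding Definition~\ref{def:vertex_cover_instance}: an optimum offline query set consists of all mandatory vertices (as characterized in Lemma~\ref{lema_mandatory_min}) plus a minimum vertex cover of the auxiliary graph $\bar{G}$. My goal is therefore to construct, in polynomial time, an offline hypergraph orientation instance whose mandatory vertex set is empty and whose auxiliary graph $\bar{G}$ is isomorphic to a given MVC instance $G=(V,E)$. Then a minimum offline query set coincides with a minimum vertex cover of $G$, so computing an optimum offline query set in polynomial time would solve MVC in polynomial time.

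Given $G=(V,E)$, the construction takes the vertex set of $H$ to be $V$ together with a private auxiliary vertex $a_e$ for each $e \in E$, and for each edge $e=\{u,v\}\in E$ introduces a single size-3 hyperedge $S_e=\{u,v,a_e\}$. I would then design the intervals $\mathcal{I}$ and precise weights $w$ so that for every $e \in E$: (i) no vertex is mandatory by Lemma~\ref{lema_mandatory_min}; (ii) $S_e$ is initially unresolved; (iii) querying either $u$ or $v$ alone resolves $S_e$ while querying $a_e$ alone does not; and (iv) the only edge $S_e$ contributes to $\bar{G}$ is $\{u,v\}$. Under conditions (i)--(iv), $\bar{G}\cong G$ and no mandatory queries are needed, so by the structural characterization the offline optimum equals $\tau(G)$, the minimum vertex cover number of $G$, which gives the claimed hardness.

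The main obstacle is achieving (i)--(iv) simultaneously across all hyperedges $S_e$, given that intervals are one-dimensional and each vertex $v \in V$ has a single interval $I_v$ shared by every hyperedge containing it. A naive construction that forces $I_u \cap I_v \neq \emptyset$ precisely on edges of $G$ would demand that $G$ itself be an interval graph, yet MVC is polynomial on interval graphs, so this cannot suffice. To circumvent this, I plan to exploit the auxiliary vertex $a_e$ in each $S_e$ as a bridging element: place $I_{a_e}$ so that it intersects both $I_u$ and $I_v$ while choosing $w_{a_e}$ strictly above the upper endpoints of $I_u$ and $I_v$, so that $a_e$ is provably non-minimal in $S_e$ from the intervals alone, and the intervals of $V$-vertices can be arranged freely (e.g., as overlapping ``windows'' around carefully shifted centres) without obeying the interval-graph constraint. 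A case analysis against Lemma~\ref{lema_mandatory_min} verifies condition~(i), and ordering the intervals so that $u$ is leftmost in $S_e$ with $I_u \cap I_v \neq \emptyset$ while $I_{a_e}$ does not create spurious leftmost relations yields~(iv). Because the construction uses a constant number of auxiliary vertices per edge and only polynomially-sized numerical data, the reduction is polynomial in $|V|+|E|$, completing the proof of NP-hardness.
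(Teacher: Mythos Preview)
There is a genuine gap. Your plan hinges on building, for an arbitrary graph $G$, an offline instance with no mandatory vertices and $\bar G\cong G$, but the construction you sketch cannot achieve this.

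First, there is an internal contradiction. You want $u$ to be leftmost in $S_e=\{u,v,a_e\}$ and only the edge $\{u,v\}$ to enter $\bar G$ (condition~(iv)); by Definition~\ref{def:vertex_cover_instance} this forces $I_u\cap I_{a_e}=\emptyset$. Yet you also say you will place $I_{a_e}$ so that it intersects $I_u$. With $I_u\cap I_{a_e}\neq\emptyset$ you get the pendant edge $\{u,a_e\}$ in $\bar G$, and then $\tau(\bar G)$ no longer equals $\tau(G)$.

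Second, and more fundamentally, the bridging vertex does not remove the constraint you correctly identified. Having $\{u,v\}\in\bar E$ still requires $I_u\cap I_v\neq\emptyset$, and ``no mandatory vertices'' (Lemma~\ref{lema_mandatory_min}) additionally forces $w_u\notin I_v$ and $w_v\notin I_u$ for every edge $\{u,v\}\in E$. Hence each $w_v$ must lie in $I_v\setminus\bigcup_{u:\,\{u,v\}\in E} I_u$. For an arbitrary $G$ this is impossible: already for $G=K_4$, four pairwise intersecting intervals share a common point by Helly's property, and only the interval with the smallest left endpoint and the one with the largest right endpoint can have a private region; the remaining two are covered by their neighbours, so no admissible weights exist. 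The auxiliary vertices $a_e$ contribute nothing to this obstruction, since the constraint lives entirely on the $V$-intervals.

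The paper avoids this by reducing not from general MVC but from MVC on $2$-subdivision graphs (still NP-hard by Poljak). In a $2$-subdivision $H$ of $G$ the original vertices form an independent set, so their intervals can be made pairwise disjoint; the two subdivision vertices of each edge are then placed locally to intersect exactly their neighbours in $H$. With size-$2$ hyperedges equal to the edges of $H$ and weights chosen in each interval's private region, no vertex is mandatory and $\bar G=H$, so the offline optimum is exactly $\tau(H)$.
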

\begin{proof}
	The proof uses a reduction from the vertex cover problem for 2-subdivision graphs, which is NP-hard~\cite{poljak74subdiv}.
	A 2-subdivision is a graph~$H$ which can be obtained from an arbitrary graph~$G$ by replacing each edge by a path of length four (with three edges and two new vertices).
	The graph in Figure~\ref{fig:2_subdiv} is a 2-subdivision of the graph in Figure~\ref{fig:base_graph}.
	
	\begin{figure*}[!ht]
		\centering
		\captionsetup[subfigure]{justification=centering}
		\begin{subfigure}{0.3\textwidth}
			\tikzstyle{every node}=[circle, draw, fill=black, inner sep=0pt, minimum width=4pt]
			\begin{tikzpicture}[thick, scale=0.75]
			\draw (0, 1) node[label=west:$a$]{} -- (1, 3) node[label=north west:$b$]{};
			\draw (1, 3) -- (5, 3) node[label=east:$c$]{};
			\draw (5, 3) -- (5, 1) node[label=east:$d$]{};
			\draw (1, 3) -- (5, 1);
			\draw (0, 1) -- (5, 1);
			\end{tikzpicture}
			\subcaption{}\label{fig:base_graph}
		\end{subfigure}\quad
		\begin{subfigure}{0.3\textwidth}
			\tikzstyle{every node}=[circle, draw, inner sep=0pt, minimum width=4pt]
			\begin{tikzpicture}[thick, scale=0.75]
			\draw (0, 1) node[fill=black, label=west:$a$]{} -- (1, 3) node[fill=black, label=north west:$b$]{};
			\draw (0.33, 1.67) node[fill=white,label=left:{$ab_1$}]{};
			\draw (0.67, 2.33) node[fill=white,label=left:{$ab_2$}]{};
			\draw (1, 3) -- (5, 3) node[fill=black, label=east:$c$]{};
			\draw (2.33, 3) node[fill=white,label=above:{$bc_1$}]{};
			\draw (3.67, 3) node[fill=white,label=above:{$bc_2$}]{};
			\draw (5, 3) -- (5, 1) node[fill=black, label=east:$d$]{};
			\draw (5, 1.67) node[fill=white,label=right:{$cd_2$}]{};
			\draw (5, 2.33) node[fill=white,label=right:{$cd_1$}]{};
			\draw (1, 3) -- (5, 1);
			\draw (2.33, 2.33) node[fill=white,label=below left:{$bd_1$}]{};
			\draw (3.67, 1.67) node[fill=white,label=above right:{$bd_2$}]{};
			\draw (0, 1) -- (5, 1);
			\draw (1.67, 1) node[fill=white,label=below:{$ad_1$}]{};
			\draw (3.33, 1) node[fill=white,label=below:{$ad_2$}]{};
			\end{tikzpicture}
			\subcaption{}\label{fig:2_subdiv}
		\end{subfigure}\\
		\begin{subfigure}{0.9\textwidth}
			\begin{tikzpicture}[line width = 0.3mm, scale=0.5]
			\intervalr{$I_a$}{0}{3}{0}{1.5}
			\intervalr{$I_b$}{6}{9}{0}{7.5}
			\intervalr{$I_c$}{12}{15}{0}{13.5}
			\intervalr{$I_d$}{18}{21}{0}{19.5}
			
			\intervalr{$I_{ab_1}$}{2}{5}{-1}{3.5}
			\intervalr{$I_{ab_2}$}{4}{7}{-2}{5.5}
			
			\intervalr{$I_{bc_1}$}{8}{11}{-1}{9.5}
			\intervalr{$I_{bc_2}$}{10}{13}{-2}{11.5}
			
			\intervalr{$I_{cd_1}$}{14}{17}{-1}{15.5}
			\intervalr{$I_{cd_2}$}{16}{19}{-2}{17.5}
			
			\intervalr{$I_{ad_1}$}{2}{12.2}{-3}{7.1}
			\intervalr{$I_{ad_2}$}{8.8}{19}{-4}{13.9}
			
			\intervalr{$I_{bd_1}$}{8}{14.6}{-5}{11.3}
			\intervalr{$I_{bd_2}$}{12.4}{19}{-6}{15.7}
			\end{tikzpicture}
			\subcaption{}\label{fig:minimum_reduction}
		\end{subfigure}
		\caption{NP-hardness reduction for the hypergraph orientation problem, from the vertex cover problem on 2-subdivision graphs. 
			(\subref{fig:base_graph}) A graph and (\subref{fig:2_subdiv}) its 2-subdivision.
			(\subref{fig:minimum_reduction}) The corresponding instance for the hypergraph orientation problem.}
		\label{fig_min_np_hard}
	\end{figure*}
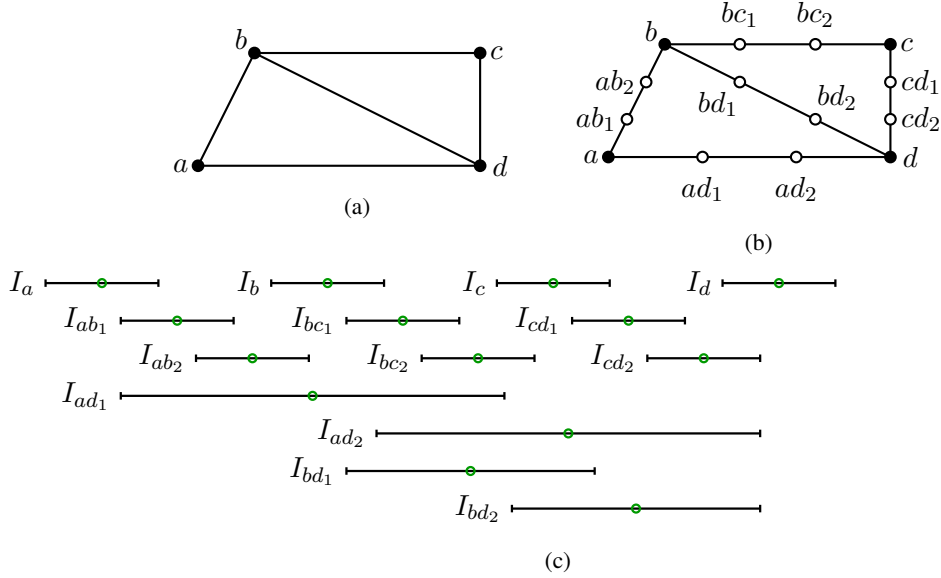
	
	Given a graph~$H$ which is a 2-subdivision of a graph~$G$. 
	We construct an instance of graph orientation under uncertainty $G'= (V',E')$ as follows.
	First, we set $G' = H$, i.e., we use the 2-subdivision as the input graph for our graph orientation instance.
	It remains to define the intervals and precise weights.
	
	For the original vertices of $G$, we define the intervals in a way such that no two intervals intersect each other.
	For an edge $e=\{u,v\}$ of $G$, let $v_1$ and $v_2$ denote the vertices introduced to replace $e$ when creating the 2-subdivision $H$.
	We define the intervals $I_{v_1}$ and $I_{v_2}$ such that two intervals of $I_v,I_{v_1},I_{v_2},I_u$ intersect if and only if the corresponding vertices define an edge in $G'=H$.
	As illustrated in Figure~\ref{fig:minimum_reduction}, placing the intervals in such a way is clearly possible.
	
	Finally, we can assign precise weights such that, for any edge $\{u,v\} \in E'$, neither $w_v \in I_u$ nor $w_u \in I_v$.
	By definition of the intervals, this is possible.
	Therefore, to orient each edge, we have to query at least one endpoint. 
	Furthermore, querying one endpoint is sufficient to orient any edge.
	Clearly every solution to the offline variant of the graph orientation problem corresponds to a vertex cover of~$H$, and vice-versa.
\end{proof}

\end{document}